\newtheorem{theorem}{Theorem}
\theoremstyle{plain}
\newtheorem{definition}[theorem]{Definition}
\newtheorem{example}[theorem]{Example}
\newtheorem{lemma}[theorem]{Lemma}
\newtheorem{proposition}[theorem]{Proposition}
\newtheorem{assumption}[theorem]{Assumption}
\numberwithin{equation}{section}
\numberwithin{theorem}{section}
\newcommand{\Var}{\text{Var}}
\newcommand{\tr}{\text{Tr}}
\newcommand{\T}{\intercal}
\newcommand{\bb}{\backslash}
\newcommand{\Ikk}{\mathbb{R}^{k+1} \backslash \{0\}}
\newcommand{\I}{\mathbb{R} \backslash \{0\}}
\newcommand{\m}{\text{-}}
\DeclareMathOperator*{\argsup}{arg\,sup}
\begin{document}
\title[Mean-variance hedging of unit linked life insurance contracts in a L\'evy model]{Mean-variance hedging of unit linked life insurance contracts in a jump-diffusion model}
\author{Frank Bosserhoff*, Mitja Stadje*}
\address[]
{*Institute of Insurance Science and Institute of Financial Mathematics, Ulm University \newline%
\indent Helmholtzstrasse 20, 89081 Ulm, Germany}%
\email[]{\Letter\ frank.bosserhoff@uni-ulm.de, mitja.stadje@uni-ulm.de}%

\date{\today}
\subjclass[2010]{35Q91, 60G57, 91G80, 93E20, 97M30} %
\keywords{Life insurance; mean-variance; time-consistency; optimal control; jump-diffusion}%
\thanks{FB acknowledges support by \emph{Stiftung WiMa Ulm}.}

\begin{abstract} 
We consider a time-consistent mean-variance portfolio selection problem of an insurer and allow for the incorporation of basis (mortality) risk. The optimal solution is identified with a Nash subgame perfect equilibrium. We characterize an optimal strategy as solution of a system of partial integro-differential equations (PIDEs), a so called extended Hamilton-Jacobi-Bellman (HJB) system. We prove that the equilibrium is necessarily a solution of the extended HJB system. Under certain conditions we obtain an explicit solution to the extended HJB system and provide the optimal trading strategies in closed-form. A simulation shows that the previously found strategies yield payoffs whose expectations and variances are robust regarding the distribution of jump sizes of the stock. The same phenomenon is observed when the variance is correctly estimated, but erroneously ascribed to the diffusion components solely. Further, we show that differences in the insurance horizon and the time to maturity of a longevity asset do not add to the variance of the terminal wealth.
\end{abstract}
\maketitle

\section{Introduction}
Two major risks faced by life insurance companies are longevity and asset risk. Longevity risk refers to the risk that the future changes in the mortality rates are incorrectly estimated while asset risk refers to the possibility of a future loss in the investment portfolio. Increases in the life expectancy might among others stem from sudden changes in environmental or medical conditions that are not foreseeable upon the contract initiation. Clearly, these changes need to be accounted for by the mortality model. An adequate way to do so is the modeling of the force of mortality with a diffusion process supplemented by jumps, see \cite{vigna} and \cite{blake} for a detailed discussion. Another practical challenge of hedging longevity risk is basis risk. The payoff of for instance a longevity bond depends on a particular mortality rate that is related to but certainly not identical with the insurer's portfolio, see \cite{br1} and \cite{br2} for empirical studies on this issue. Thus, buying a longevity asset can only provide a partial hedge against an insurance company's mortality exposure. \\
Asset risk stems from the fact that the premiums paid by the insured are to be gainfully invested at the capital market inducing financial risk. Empirical evidence suggests that returns are non-gaussian and leptokurtic, see e.g. \cite{ct}, \cite{schoutens} and references therein. Thus, in order to properly capture this risk, an insurer should base a stock price model on a Brownian component and additionally allow for jumps. Such a financial market is known to be generically incomplete. Consequently, an insurance company facing the aforementioned risks cannot perfectly hedge its obligations. Hence, a way to quantify risk needs to be specified. In this paper we identify risk with the variance of the terminal wealth. The identification of risk with the variance of the terminal payoff has a long tradition in academia as well as industry and dates back already to \cite{markowitz}. Mean-variance portfolio selection is intuitively appealing and analytically tractable. A major drawback, however, is \emph{time-inconsistency}, which means that due to the non-linearity and non-recursiveness of the variance part, the dynamic programming approach fails. An investor might initiate a dynamic strategy because it is optimal at a particular point in time, knowing fully well that she will deviate from this strategy later on. Investors ignoring the sub-optimality of a previously found strategy are said to \emph{pre-commit}, see \cite{strotz} for fundamentals on this problem. In \cite{xyz} and \cite{lim}, the pre-commitment version of a mean-variance portfolio selection problem in a continuous-time economy is solved. The question of dynamically optimal, i.e., time-consistent mean-variance policies has been addressed for example by \cite{basak}. Their solution approach is based on a recursive formulation allowing for the application of dynamic programming. The authors point out that the same solution could be found as the Nash subgame perfect equilibrium outcome whereby the investor is playing a game with a future incarnation of herself, that is, a game with infinitely many players. Thereby a strategy is a Nash subgame perfect equilibrium if at some given point in time an investor knows that every future "player" will follow a certain strategy, then it is optimal for her not to deviate. For a general game-theoretic background we refer to \cite{hp} and references therein. The reformulation of time-inconsistent control problems in game-theoretic terms has been originally proposed by \cite{ekeland} and \cite{bm}. This line of research has been followed by \cite{basak}, \cite{wang}, \cite{czichowsky}, \cite{bensoussan} and \cite{lindensjo}. \\
It is well known that the optimal value function of a standard time-consistent stochastic control problem can be characterized as the unique solution of a non-linear partial integro-differential equation (PIDE), see \cite{oksjump}, known as Hamilton-Jacobi-Bellman (HJB) equation.  In \cite{bm} it is shown that the reformulation of a time-inconsistent control problem in game-theoretic terms leads to a system of nonlinear PIDEs, the so called \emph{extended HJB system}. Further, they provide a verification result showing that solving the extended HJB system is a sufficient condition for being an equilibrium control law. \cite{lindensjo} proves that under certain regularity assumptions solving the extended HJB system is a necessary condition for being an equilibrium; however, the proof is restricted to the diffusion case. \\
We consider a continuous-time Markovian economy in which an insurer trades in an arbitrary quantity of risky financial assets, a zero-coupon longevity bond and a riskless asset in order to hedge some terminal payout with regard to mean-variance optimality. Thereby the underlying financial assets as well as the force of mortality are modeled by jump-diffusions. Our first main contribution is an extension of the work of \cite{lindensjo} by proving that an equilibrium necessarily solves the extended HJB system. Secondly, for the case that an insurer neglects the hedge of some terminal payoff, we are able to present explicit closed-form solutions for the optimal trading strategies, the equilibrium value function and the expected terminal wealth. Thirdly, we exemplify our findings along a tractable model and provide numerical as well as graphical illustrations. When the jumps are erroneously neglected while the expected values and variances of the stock price and the force of mortality are correctly determined, the expected optimal terminal payoff and its variance hardly change. Thus, the time-consistent mean-variance optimal terminal wealth is robust regarding the consideration of jumps. A similar result is found when changing the distribution of the jump sizes of the stock. As the market for longevity assets is relatively illiquid, one can certainly not expect to find a hedging instrument whose time to maturity coincides with the insurance horizon. We find that this does not effect the expectation and variance of the optimal final payoff in our setup. Moreover, we find that our strategies significantly outperform the gain from investing in the riskless asset only.   \\
The remainder of this paper is structured as follows: In Section 2 we introduce the financial and longevity market under consideration and clarify what is meant by an admissible trading strategy. In the following section we turn to the optimization problem by gradually defining several auxiliary functions, operators and the notion of equilibrium control. With these at hand, Section 3 closes with the specification of the extended HJB system. In Section 4 our first main result (Theorem \ref{thm_nec}) is stated and proved. In Section 5 we neglect the hedge of a terminal payout and present an explicit equilibrium solution, the main result here is Theorem \ref{thm_strategies}. The final section contains the numerical applications. \\

\textbf{Notation.} Denote by $\mathbb{R}_{+}$ the positive real numbers and by $\mathbb{R}^{m}_{+}$ its $m$-fold Cartesian product. The zero vector in any Euclidean space $\mathbb{R}^{m}$ is written as $0$. For $x \in \mathbb{R}^{m}$ we use $||x||_{2}:= \sqrt{\sum_{i=1}^{m} |x_{i}|^{2}}$. The symbol $\mathbb{R}^{m \times n}$ denotes the space of real-valued matrices with $m$ rows and $n$ columns. If $x \in \mathbb{R}^{m}$, the matrix Diag($x$) $\in \mathbb{R}^{m \times m}$ is the square matrix with the entries of $x$ on the diagonal and all off-diagonal elements being equal to zero. For $A \in \mathbb{R}^{m \times n}$, the symbol $A^{\T}$ means the transposed of $A$. If $A$ is a square matrix, we write $\tr(A)$ for its trace. For any $T > 0,\ t \in (0,T)$ and some function $f: [0,T] \times \mathbb{R}^{m} \to \mathbb{R}$ such that $f \in C^{1,2}$, we define $\dot{f}(t,\cdot) := \frac{\partial}{\partial t}f(t,\cdot)$ and for any $x \in \mathbb{R}^{m}$ we denote for arbitrary $j \in \{1,\dots,m\}$ by $f_{x_{j}}(\cdot,x)$ the first order partial derivative of $f$ w.r.t. the $j$th component of $x$. Moreover, let the gradient of $f$ w.r.t. $x$ be denoted by $\nabla_{x}f(\cdot,x)$ and the Hessian matrix by $H_{x}f(\cdot,x).$

\section{Model Setup} 
Let $T > 0$ be the planning horizon. Consider the probability space $(\Omega, \mathcal{F}, \mathbb{P})$ that is equipped with a standard $d+1$-dimensional Brownian motion $\hat{W}:=(W^{1}, \dots, W^{d},W^{d+1})^{\T},$ whereby we define $W:=(W^{1},\dots,W^{d})^{\intercal}$ and $\bar{W}:=W^{d+1},$ and a Poisson random measure $J_{\hat{X}}(dt,d\hat{x})$ on $[0,T] \times \mathbb{R}^{k+1} \backslash \{0\},$ independent of $\hat{W},$ with respective intensity measure $\vartheta_{\hat{X}}(d\hat{x})dt.$ Denote its compensated version by $\tilde{J}_{\hat{X}}(dt,d\hat{x}) = J_{\hat{X}}(dt,d\hat{x}) - \vartheta_{\hat{X}}(d\hat{x})dt.$ Let $(\mathcal{F}_{t})_{t \in [0,T]}$ be the right-continuous completion of the filtration generated by $\hat{W}$ and $J_{\hat{X}}$. Throughout this paper we impose the following condition:

\begin{assumption}
The L\'evy measure $\vartheta_{\hat{X}}$ is such that
\begin{equation}
\int_{\Ikk} |\hat{x}|^{2}\ \vartheta_{\hat{X}}(d\hat{x}) < \infty.
\label{eq:secmom_fin}
\end{equation}
\end{assumption}

Under \eqref{eq:secmom_fin}, let $\hat{X} := (X^{1},\dots,X^{k},X^{k+1})^{\T}$ be a vector of pure-jump independent $(\mathcal{F}_{t})$-martingales with $X_{t}^{j} = \int_{0}^{t} \int_{\Ikk}\hat{x}^{j}\ \tilde{J}_{\hat{X}}(ds,d\hat{x})$, $j=1,\dots,k,k+1,$ whereby $\hat{x}^{j}$ is the $j$th coordinate of $\hat{x} \in \mathbb{R}^{k+1}.$ We define $X:=(X^{1},\dots,X^{k})^{\T}$ and $\bar{X}:= X^{k+1},$ so $\hat{X}=(X,\bar{X})^{\T}.$ Further, we may write $J_{\hat{X}}(dt,d\hat{x}) = J_{X,\hat{X}}(dt,dx,d\bar{x}) = \mathbbm{1}_{\bar{x} = 0}\ J_{X}(dt,dx) + \mathbbm{1}_{x=0}\ J_{\bar{X}}(dt,d\bar{x}).$ For any $E \subseteq \mathbb{R}^{k+1},$ the independence of $X$ and $\bar{X}$ implies that (cf. \cite{ct}, Proposition 5.3)
$$\vartheta_{X,\bar{X}}(E) = \vartheta_{X}(E_{X}) + \vartheta_{\bar{X}}(E_{\bar{X}}),$$
with
\begin{align*}
E_{X} &:= \{x \in \mathbb{R}^{k}: (x,0) \in E\}, \\
E_{\bar{X}} &:= \{\bar{x} \in \mathbb{R}: (0,\bar{x}) \in E\}.
\end{align*}

Assume that $x^{j} > -1$ for all $j \in \{1,\cdots,k\}.$ The financial market under consideration consists of a bank-account paying interest at a deterministic rate $r \geq 0$ and $m$ risky stocks, $1 \leq m \leq \min \{d,k\},$ with price processes $S^{i} = (S_{t}^{i})_{t \in [0,T]}$, $i = 1,\dots,m,$ satisfying the SDEs given by

\begin{equation}
\frac{dS_{t}^{i}}{S_{t\m}^{i}} = \mu_{i} \ dt + \sum_{j=1}^{d} \sigma_{ij}\ dW_{t}^{j} + \sum_{j=1}^{k} \rho_{ij} \ dX_{t}^{j}, 
\label{eq:S}
\end{equation}

where $S_{0}^{i} = s_{i} \in \mathbb{R}_{+},\ \mu_{i} \in \mathbb{R},\ \sigma_{ij} \in \mathbb{R}_{+}$ and $\rho_{ij} \in \mathbb{R}_{+}$ such that $\sum_{j=1}^{k} \rho_{ij} \leq 1$ respectively denote the initial price of stock $i$, the rate of appreciation, the volatilities and the jump-sensitivities. We assume that the financial market is free of arbitrage, i.e., there exists a measure $\mathbb{Q}$ that is equivalent to $\mathbb{P}$ such that the discounted stock price processes $(S_{t}^{i} / e^{rt})_{t \in [0,T]}, i=1,\dots,m,$ are $(\mathcal{F}_{t})$- martingales under $\mathbb{Q}$.\\
In addition to the financial market, we consider an arbitrage-free mortality market on which an investor can buy a zero-coupon longevity bond. 
We use the Brownian motion $\bar{W}$ and the jump component $\bar{X}$ to model the force of mortality. In particular, the force of mortality $\lambda$ shall be given as the solution of the SDE

\begin{equation}
d\lambda_{t} = \mu_{\lambda}(t,\lambda_{t}) \ dt + \sigma_{\lambda}(t,\lambda_{t}) \ d\bar{W}_{t} + \int_{\I} \tilde{\sigma}_{\lambda}(t,\lambda_{t\m},\bar{x}) \ \tilde{J}_{\bar{X}}(dt,d\bar{x}),
\label{eq:fom}
\end{equation}
whereby $\lambda_{0} \in \mathbb{R}_{+},$ and $\mu_{\lambda}, \sigma_{\lambda}$ and $\tilde{\sigma}_{\lambda}$ satisfy Assumption \ref{ass_func_lambda} below. We remark that the force of mortality can become negative with positive probability. In practical applications it is therefore common to chose $\mu_{\lambda}$ high and $\sigma_{\lambda}$ as well as $\tilde{\sigma}_{\lambda}$ small enough, see \cite{vigna} for a discussion on calibration. We take $\lambda_{t} > 0$ for all $t \in [0,T].$

\begin{assumption}
We assume that $\mu_{\lambda}, \sigma_{\lambda}: [0,T] \times \mathbb{R}_{+} \to \mathbb{R}$ and $\tilde{\sigma}_{\lambda}:[0,T] \times \mathbb{R}_{+} \times \mathbb{R} \backslash \{0\} \to \mathbb{R}$ satisfy the following conditions:\\

\begin{enumerate}[(i)]
	\item (At most linear growth) There exists a constant $B_{1} < \infty$ such that for all $a \in \mathbb{R}_{+}$ it holds that
		\begin{align*}
			|\mu_{\lambda}(t,a)|^{2} + |\sigma_{\lambda}(t,a)|^{2} + |\tilde{\sigma}_{\lambda}(t,a,\bar{x})|^{2}  &\leq B_{1}(1+|a|^{2}).
				\end{align*}
	
	\item 
 (Uniform Lipschitz continuity) There exists a constant $C_{1} < \infty$ such that for all $a,b \in \mathbb{R}_{+}$ it holds that
\begin{align*}
&|\mu_{\lambda}(t,a) - \mu_{\lambda}(t,b)|^{2} + |\sigma_{\lambda}(t,a) - \sigma_{\lambda}(t,b)|^{2}\\
& \ + \int_{\mathbb{R} \backslash \{0\}}|\tilde{\sigma}_{\lambda}(t,a,\bar{x}) - \tilde{\sigma}_{\lambda}(t,b,\bar{x})|^{2}\ \vartheta_{\bar{X}}(d\bar{x}) \leq C_{1} |b-a|^{2}.
\end{align*}
\end{enumerate}
\label{ass_func_lambda}
\end{assumption}

We consider a longevity bond where the reference cohort is assumed to satisfy the following:
\begin{itemize}
	\item at time $t=0$, all members of the cohort are of the same age,
	\item the force of mortality of the cohort is entirely described by $\lambda$, 
	\item the cohort is sufficiently large such that the idiosyncratic risk is pooled away.
\end{itemize}

In addition, we assume that the insurance's planning horizon $T$ and the time to maturity of the longevity bond coincide. An investor who has bought the zero-coupon longevity bond at time $0 \leq t_{1} \leq T$ paying $L_{\lambda}(t_{1},T)$ receives $\exp\left(- \int_{t_{1}}^{T} \lambda_{s}\ ds \right)$ at time $T$. Consequently, the price $L_{\lambda}(t_{1},T)$ is given by
\begin{equation}
L_{\lambda}(t_{1},T) = \mathbb{E}_{\mathbb{Q}} \left[e^{-\int_{t_{1}}^{T}(\lambda_{s}+r) \ ds} \big| \mathcal{F}_{t_{1}} \right].
\label{eq:L_lambda}
\end{equation}
 Let $0 \leq t_{1} < t_{2} \leq T,$ suppose investor $A$ has bought the longevity bond at time $t_{1}$ at price $L_{\lambda}(t_{1},T)$ and there is a second investor, say $B$, who has bought the bond at time point $t_{2}$ paying $L_{\lambda}(t_{2},T).$ As the final payoff depends on the length of the holding period, it is clear that investor $A$ would not have sold her bond to $B$ at price $L_{\lambda}(t_{2},T)$ at time $t_{2}$, but she would have demanded a price of $\exp\left(-\int_{t_{1}}^{t_{2}} \lambda_{s}\ ds \right) L_{\lambda}(t_{2},T).$ Therefore, if an investor has bought the longevity asset at time $t_{1}$, the dollar value of her investment at any time $t_{2} > t_{1}$ is given by $Y_{t_{2}} := \exp\left(-\int_{t_{1}}^{t_{2}} \lambda_{s}\ ds \right) L_{\lambda}(t_{2},T).$ We name the dollar value process $Y$ from now on; the discounted version of $Y$ should be a martingale under the same risk-neutral measure $\mathbb{Q}.$ 

\begin{assumption}
We assume that $(\lambda, Y)$ is a Markovian It\^{o} jump-diffusion satisfying 
\begin{equation}
\frac{dY_{t}}{Y_{t-}} = (r + \nu_{L}(t,\lambda_{t},Y_{t})) \ dt + \sigma_{L}(t,\lambda_{t},Y_{t}) \ d\bar{W}_{t} + \int_{\mathbb{R} \setminus \{0\}} \eta_{L}(t,\lambda_{t\m},Y_{t\m},\bar{x}) \ \tilde{J}_{\bar{X}}(dt,d\bar{x}),
\label{eq:Y}
\end{equation}
with $Y_{0} = L_{\lambda}(0,T)$ and deterministic functions $\nu_{L}, \sigma_{L}, \eta_{L}.$ 
\label{ass_markov}
\end{assumption}


Note that the function $\nu_{L}$ in \eqref{eq:Y} is the market price of longevity risk. We further need the following assumption:

\begin{assumption}
We assume that $\nu_{L}, \sigma_{L}: [0,T] \times \mathbb{R}_{+} \times \mathbb{R}_{+} \to \mathbb{R}$ and $\eta_{L}: [0,T] \times \mathbb{R}_{+} \times \mathbb{R}_{+} \times \mathbb{R} \bb \{0\} \to \mathbb{R}$ satisfy the following conditions: 
\begin{enumerate}[(i)]
	\item (At most linear growth) There exists a constant $B_{2} < \infty$ such that for all $a,b \in \mathbb{R}_{+}$ it holds that
		\begin{align*}
			|\nu_{L}(t,a,b)|^{2} + |\sigma_{L}(t,a,b)|^{2} + \int_{\mathbb{R} \backslash \{0\}} |\eta_{L}(t,a,b,\bar{x})|^{2} \ \vartheta_{\bar{X}}(d\bar{x}) &\leq B_{2}(1+|a|^{2}+|b|^{2}).			
		\end{align*}
	\item 
 (Uniform Lipschitz continuity) There exists a constant $C_{2} < \infty$ such that for all $a_{1},a_{2},b_{1},b_{2} \in \mathbb{R}_{+}$ it holds that
	\begin{align*}
&|b_{1}\nu_{L}(t,a_{1},b_{1}) - b_{2}\nu_{L}(t,a_{2},b_{2})| + |b_{1}\sigma_{L}(t,a_{1},b_{1}) - b_{2}\sigma_{L}(t,a_{2},b_{2})|\\
&\ + \int_{\mathbb{R} \backslash \{0\}}|b_{1}\eta_{L}(t,a_{1},b_{1},\bar{x}) - b_{2}\eta_{L}(t,a_{2},b_{2},\bar{x})|^{2}\ \vartheta_{\bar{X}}(d\bar{x}) \leq C_{2}\ ||(a_{1},b_{1})-(a_{2},b_{2})||_{2}.
\end{align*}
\end{enumerate}
\label{ass_functions_Y}
\end{assumption}

We now consider an insurer who can invest in the $m$ risky stocks, deposit money in the bank account and use the longevity asset to partially hedge against its mortality exposure.  
Let $U \subseteq \mathbb{R}^{m+1}.$ An allocation rule is a predictable function $u:[0,T] \to U,\ t \mapsto (u_{S}(t),u_{Y}(t))^{\intercal},$ 
whereby $u_{S} = (u_{S^{1}},\dots,u_{S^{m}})^{\T}$ denotes the dynamic allocation process that indicates the total wealth that is invested in the stocks $1,\dots,m,$ and $u_{Y}$ the total wealth invested in the longevity asset. The portfolio process of the insurance company using the allocation rule $u$ is denoted by $P^{u} = (P^{u}_{t})_{t \in [0,T]}$ and fulfills the SDE

\begin{equation}
dP_{t}^{u} = u_{S}^{\intercal}(t\m)\ \frac{dS_{t}}{S_{t\m}} + u_{Y}(t\m) \ \frac{dY_{t}}{Y_{t\m}} + (P_{t}^{u} - u_{S}^{\intercal}(t) \textbf{1}- u_{Y}(t))r \ dt,
\label{eq:sde_portf_proc}
\end{equation}

with initial wealth $P_{0}^{u} = p > 0$ and \textbf{1}$ \in \mathbb{R}^{m}$ denotes a column vector of ones. Observe that $P^{u}$ as defined in \eqref{eq:sde_portf_proc} is self-financing.
\begin{definition}
An allocation rule $u$ is \emph{admissible} if for any point $(t,p) \in [0,T) \times \mathbb{R}_{+}$ there exists a unique c\`{a}dl\`{a}g adapted solution $P^{u}$ to \eqref{eq:sde_portf_proc} such that $\mathbb{E}[|P_{t}^{u}|^{2}] < \infty$ for all $t$. We denote by $\mathcal{U}$ the set of admissible allocation rules.
\end{definition}


\section{Optimization Problem}
Classical mean-variance portfolio selection aims at finding a strategy that simultaneously maximizes the expected terminal payoff of a portfolio while minimizing its variance. We first consider the more general case where an insurance company trades in the financial and longevity market in order to hedge a terminal condition. Before rigorously defining what is meant by an equilibrium control in a stochastic optimization problem, we need some more notation and a target functional. Let $Z := (S^{1},\dots,S^{m},\lambda,Y)\in \mathbb{R}^{m+2}_{+}$ be the vector containing the traded assets as well as the force of mortality $\lambda$. Let $H = (H_{t})_{t \in [0,T]}$ be an $l$-dimensional Markovian jump-diffusion adapted to $(\mathcal{F}_{t})$ and $D: \mathbb{R}^{l} \to \mathbb{R}$ some function. The goal is a mean-variance optimal hedge of $D(H_{T})$ using $P^{u}$. 


\begin{example}
Consider a process $\hat{\lambda} = (\hat{\lambda}_{t})_{t \in [0,T]}$ solving the SDE 
\begin{equation*}
d\hat{\lambda}_{t} = \mu_{\hat{\lambda}}(t,\hat{\lambda}_{t}) \ dt + \sigma_{\hat{\lambda}}(t,\hat{\lambda}_{t}) \ d\bar{W}_{t} + \int_{\I} \tilde{\sigma}_{\hat{\lambda}}(t,\hat{\lambda}_{t\m},\bar{x}) \ \tilde{J}_{\bar{X}}(dt,d\bar{x}),
\end{equation*}
with $\hat{\lambda}_{0} > 0.$ Suppose $\hat{\lambda}$ describes the force of mortality of the pool of insured persons, let $m=1$ for ease of exposition. If the insurance needs to deliver one share of $S$ to each person in its pool that has survived until the terminal time $T$, then the obligation $D(H_{T}) = S_{T}\ e^{-\int_{0}^{T} \hat{\lambda}_{s} \ ds}$ is to be hedged. 
\label{ex_H}
\end{example}

We write $\mathbb{E}_{t,p,z,h}[\cdot] = \mathbb{E}[\cdot | P_{t} = p, Z_{t} = z, H_{t} = h]$ for the conditional expectation given $(t,p,z,h) \in [0,T) \times \mathbb{R}_{+} \times \mathbb{R}^{m+2}_{+} \times \mathbb{R}^{l}$ and $\Var_{t,p,z,h}$ denotes the conditional variance accordingly. Let $\gamma > 0$ be a risk-aversion parameter.

\begin{definition}
For each $u \in \mathcal{U}$ and $\gamma > 0,$ we define the functions $F_{u}, g_{u}: [0,T] \times \mathbb{R}_{+} \times \mathbb{R}^{m+2}_{+} \times \mathbb{R}^{l} \to \mathbb{R}$ by
\begin{align}
\begin{split}
g_{u}(t,p,z,h) &= \mathbb{E}_{t,p,z,h}[P_{T}^{u}-D(H_{T})], \\
F_{u}(t,p,z,h) &= \mathbb{E}_{t,p,z,h}\left[P_{T}^{u} - \frac{\gamma}{2} (P_{T}^{u})^{2} + \gamma P_{T}^{u} D(H_{T}) - D(H_{T}) - \frac{\gamma}{2} D(H_{T})^{2}\right].
\label{eq:aux_fct}
\end{split}
\end{align}
\label{aux_fct}
\end{definition}

We also need to define the following differential operator:

\begin{definition}
To any vector $u \in \mathcal{U}$ we associate the operator $\mathcal{A}^{u}: f \mapsto \mathcal{A}^{u}f$ mapping $C^{1,2,2,2}([0,T] \times \mathbb{R}^{+} \times \mathbb{R}^{m}_{+} \times \mathbb{R}^{l}, \mathbb{R})$ to $C^{0,0,0,0}([0,T] \times \mathbb{R}^{+} \times \mathbb{R}^{m}_{+} \times \mathbb{R}^{l}, \mathbb{R})$ given by 
\begin{equation}
\mathcal{A}^{u}f(t,p,z,h) = \lim_{\epsilon \downarrow 0} \frac{\mathbb{E}_{t,p,z,h}[f(t+\epsilon,P_{t+\epsilon}^{u},Z_{t+\epsilon},H_{t+\epsilon})] - f(t,p,z,h)}{\epsilon}\ \text{(if the limit exists)}.
\label{eq:def_gen}
\end{equation}
\label{def_gen}
\end{definition}

The differential operator introduced in Definition \ref{def_gen} is known as \emph{infinitesimal generator of the graph} of the process $(P^{u},Z,H).$
Two further differential operators are needed; we presume them to act on suitably differentiable functions $f:$

\begin{itemize}
	\item $\mathcal{L}^{u}f(t,p,z,h) := \mathcal{A}^{u}f(t,p,z,h) - \dot{f}(t,p,z,h),$ which is called \emph{infinitesimal generator} of the process $(P^{u},Z,H)$,
	\item $\mathcal{G}^{u}f(t,p,z,h) := \gamma f(t,p,z,h) \mathcal{L}^{u} f(t,p,z,h) - \frac{\gamma}{2} \mathcal{L}^{u} f^{2}(t,p,z,h).$
\end{itemize}

\begin{definition} We define the\emph{ value function} $J:[0,T]\times \mathbb{R}_{+} \times \mathbb{R}^{m+2}_{+} \times \mathbb{R}^{l} \times \mathcal{U} \to \mathbb{R} $ by
\begin{align*}
J(t,p,z,h,u) &:= \mathbb{E}_{t,p,z,h}[P_{T}^{u} - D(H_{T})] - \frac{\gamma}{2}\ \emph{Var}_{t,p,z,h}[P_{T}^{u} - D(H_{T})] \\
&= F_{u}(t,p,z,h) + \frac{\gamma}{2}\ g_{u}^{2}(t,p,z,h).
\end{align*}
\label{def_value_fct}
\end{definition}

The second equality in Definition \ref{def_value_fct} easily follows from \eqref{eq:aux_fct}.
Finding some $u^{\star} \in \mathcal{U}$ such that $J(t,p,z,h,u)$ is \emph{maximal} is a time-inconsistent control problem and induces a path an investor would not follow. Therefore we next introduce the concept of equilibrium control. 

\begin{definition}\
\begin{itemize}
\item A trading strategy $u^{\star} \in \mathcal{U}$ is an \emph{equilibrium control} if 
\begin{equation}
 \liminf_{c \to 0} \frac{J(t,p,z,h,u^{\star}) - J(t,p,z,h,u_{t+c})}{c} \geq 0,
\label{eq:equ_control}
\end{equation}
for any  $(t,p,z,h) \in [0,T) \times \mathbb{R}_{+} \times \mathbb{R}^{m+2}_{+} \times \mathbb{R}^{l}$ and for all
$$u_{t+c} :=
\begin{cases}
u, \ \ on\ [t,t+c] \times B_{p} \times B_{z} \times B_{h}, \\
u^{\star}, \ \ on\  \{[t,t+c] \times B_{p} \times B_{z} \times B_{h}\}^{c},
\end{cases}$$
$t+c \leq T,$ where $u \in \mathcal{U}$ and $B_{p},B_{z},B_{h}$ are some arbitrary balls centered at respectively $p,z,h$.\\
\item The \emph{equilibrium value function} is defined by 
$$V(t,p,z,h) := J(t,p,z,h,u^{\star}).$$
\item An equilibrium policy $u^{\star}$ is of \emph{feedback type} if, for some \emph{feedback function}\\ $u_{\star}: [0,T] \times \mathbb{R}_{+}\times \mathbb{R}^{m+2}_{+} \times \mathbb{R}^{l} \to \mathcal{U},$ we have
\begin{equation}
u_{t}^{\star} = u_{\star}(t,P_{t-}^{\star},Z_{t-},H_{t-}), \ t \in [0,T],
\label{eq:feedback}
\end{equation}
with $P_{0-}^{\star} = p$, $Z_{0-} = Z_{0}$ and $H_{0-} = H_{0}.$ 
\label{def_equ}
\end{itemize}
\end{definition}

We see from \eqref{eq:equ_control} that a strategy is an equilibrium if a deviation is suboptimal given the knowledge that every future player will obey that strategy. In the sequel we will search for an equilibrium control law of feedback type. Recall that the optimal value function of a standard time-consistent stochastic optimal control problem is the solution of a partial integro-differential equation (PIDE) known as Hamilton-Jacobi-Bellman (HJB) equation. In \cite{bm} a similar approach for time-inconsistent stochastic optimal control problems is introduced leading to a system of PIDEs. The system to be solved is subsequently referred to as \emph{extended HJB system} and reduces to the classical case for a time-consistent problem. We now specify the extended HJB system corresponding the value function from Definition \ref{def_value_fct}.

\begin{definition} For $(t,p,z,h) \in [0,T] \times \mathbb{R}^{+} \times \mathbb{R}^{m+2}_{+} \times \mathbb{R}^{l}, $

\begin{align}
\begin{split}
\dot{V}(t,p,z,h) + \sup_{u \in U}\ \{\mathcal{L}^{u}V(t,p,z,h) + \mathcal{G}^{u}g_{u}(t,p,z,h)\} &= 0, \\
V(T,p,z,h) &= p-D(h), \\
\mathcal{A}^{\hat{u}}g_{\hat{u}}(t,p,z,h) & =0, \\
g_{\hat{u}}(T,p,z,h) = \mathbb{E}_{T,p,z,h}[P_{T}^{\hat{u}}-D(H_{T})] &=p-D(h),
\end{split}
\label{eq:ext_hjb}
\end{align}
where $\hat{u} = \argsup_{u \in U}\{\mathcal{L}^{u}V(t,p,z,h) + \mathcal{G}^{u}g_{u}(t,p,z,h)\}.$
\label{ext_hjb}
\end{definition}
\noindent
A solution to the extended HJB-system is the quadruple\\ $(\hat{u}, V(t,p,h,z), F_{\hat{u}}(t,p,h,z), g_{\hat{u}}(t,p,h,z)).$ 

\section{Sufficiency and Necessity}
\noindent
Before proving two verification results, we need the following assumption:

\begin{assumption}
The limit $\mathcal{A}^{u^{\star}}V(t,p,z,h)$ defined in \eqref{eq:def_gen} exists.
\end{assumption}
\begin{definition}
A \emph{regular equilibrium} is a quadruple $(u^{\star}, V(t,p,z,h), F_{u^{\star}}(t,p,z,h),\\ g_{u^{\star}}(t,p,z,h)),$ with $u^{\star}$ being an equilibrium control of feedback type with corresponding equilibrium value function $V$ (cf. Definition \ref{def_equ}). 
\label{def_reg}
\end{definition}
The first verification theorem says that if the extended HJB-system given by Definition \ref{ext_hjb} has a solution, then it must be the equilibrium control law for the mean-variance hedge. In other words, the solvability of the extended HJB-system is \textit{sufficient} for the existence of an equilibrium control.

\begin{theorem}
Suppose $F_{u^{\star}}(t,p,z,h), g_{u^{\star}}(t,p,z,h) \in C^{1,2,2,2}([0,T) \times \mathbb{R}^{+} \times \mathbb{R}^{m}_{+} \times \mathbb{R}^{l}, \mathbb{R}) \cap C^{0,0,0,0}([0,T] \times \mathbb{R}^{+} \times \mathbb{R}^{m}_{+} \times \mathbb{R}^{l}, \mathbb{R})$ and $V, F_{u^{\star}}, g_{u^{\star}}$ solve the extended HJB-system in Definition \ref{ext_hjb}. Assume the control law $u^{\star}$ realizes the supremum in the first row for every quadruple $(t,p,z,h) \in [0,T] \times \mathbb{R}_{+} \times \mathbb{R}^{m+2}_{+} \times \mathbb{R}^{l}$. Then there exists an equilibrium control law $u^{\star}$ in the sense of Definition \ref{def_equ} and it is given by the optimal $u$ in the first row of \eqref{eq:ext_hjb}. Moreover, $V$ is the corresponding equilibrium value function and $F_{u^{\star}}$ and $g_{u^{\star}}$ are given by \eqref{eq:aux_fct}.
\end{theorem}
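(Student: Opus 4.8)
The plan is to verify the two defining properties of a regular equilibrium (Definition \ref{def_reg}) in turn: first that the postulated solution of \eqref{eq:ext_hjb} reproduces the probabilistic objects of \eqref{eq:aux_fct} and the equilibrium value function, and then that $u^{\star}$ satisfies the inequality \eqref{eq:equ_control}. The engine throughout is a Dynkin/Feynman--Kac argument together with the observation that both $g_{u}$ and $F_{u}$ in \eqref{eq:aux_fct} are expectations of a terminal functional of $(P_{T}^{u},H_{T})$, so they obey the flow (tower) property under a concatenated control.

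\emph{Step 1 (identification of the probabilistic objects).} I would first show that the PDE object $g_{u^{\star}}$ equals the conditional mean in \eqref{eq:aux_fct}: applying It\^o's formula for jump-diffusions to $s\mapsto g_{u^{\star}}(s,P_{s}^{u^{\star}},Z_{s},H_{s})$ on $[t,T]$ and using the third line $\mathcal{A}^{u^{\star}}g_{u^{\star}}=0$ of \eqref{eq:ext_hjb} removes the drift, so the process is a local martingale; the admissibility bound $\mathbb{E}[|P_{t}^{u^{\star}}|^{2}]<\infty$ with the growth and Lipschitz Assumptions \ref{ass_func_lambda}--\ref{ass_functions_Y} promotes it to a true martingale, and taking expectations against $g_{u^{\star}}(T,\cdot)=p-D(h)$ yields $g_{u^{\star}}(t,p,z,h)=\mathbb{E}_{t,p,z,h}[P_{T}^{u^{\star}}-D(H_{T})]$. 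Next set $\tilde F:=V-\tfrac{\gamma}{2}g_{u^{\star}}^{2}$. Evaluating the first line of \eqref{eq:ext_hjb} at the maximizer gives $\mathcal{A}^{u^{\star}}V+\mathcal{G}^{u^{\star}}g_{u^{\star}}=0$; combining $\mathcal{A}^{u}=\mathcal{L}^{u}+\partial_{t}$ with the elementary identity
\[
-\tfrac{\gamma}{2}\,\mathcal{A}^{u}(g^{2})+\gamma g\,\mathcal{A}^{u}g
=-\tfrac{\gamma}{2}\,\mathcal{L}^{u}(g^{2})+\gamma g\,\mathcal{L}^{u}g
=\mathcal{G}^{u}g ,
\]
valid for every $u$ (the pure time-derivative cross terms $-\tfrac{\gamma}{2}\partial_{t}(g^{2})+\gamma g\,\partial_{t}g$ cancel, while the car\'e-du-champ correction stays inside $\mathcal{L}^{u}(g^{2})$), together with $\mathcal{A}^{u^{\star}}g_{u^{\star}}=0$, yields $\mathcal{A}^{u^{\star}}\tilde F=0$. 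A second Feynman--Kac step identifies $\tilde F$ with $F_{u^{\star}}$ from \eqref{eq:aux_fct}, after expanding $(p-D(h))-\tfrac{\gamma}{2}(p-D(h))^{2}$ into its integrand. Hence $V=F_{u^{\star}}+\tfrac{\gamma}{2}g_{u^{\star}}^{2}=J(\cdot,u^{\star})$, which is exactly the ``Moreover'' part.

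\emph{Step 2 (equilibrium inequality).} For a ball-localized deviation $u_{t+c}$ as in Definition \ref{def_equ}, the flow property gives $g_{u_{t+c}}(t,p,z,h)=\mathbb{E}_{t,p,z,h}^{u}[g_{u^{\star}}(t+c,\cdot)]$ and similarly for $F$, since after $t+c$ the control is $u^{\star}$. Writing $m(c):=\mathbb{E}_{t,p,z,h}^{u}[g_{u^{\star}}(t+c,\cdot)]$ and using $J=F+\tfrac{\gamma}{2}g^{2}$, I would expand
\[
J(t,p,z,h,u^{\star})-J(t,p,z,h,u_{t+c})
=\big[F_{u^{\star}}(t,\cdot)-\mathbb{E}^{u}_{t,p,z,h}[F_{u^{\star}}(t+c,\cdot)]\big]
+\tfrac{\gamma}{2}\big[g_{u^{\star}}(t,\cdot)^{2}-m(c)^{2}\big].
\]
Dividing by $c$ and letting $c\downarrow 0$, the first bracket tends to $-\mathcal{A}^{u}F_{u^{\star}}$ by \eqref{eq:def_gen}, and since $m$ is a deterministic function with $m(0)=g_{u^{\star}}(t,\cdot)$ and $m'(0)=\mathcal{A}^{u}g_{u^{\star}}$, the second tends to $-\gamma g_{u^{\star}}\mathcal{A}^{u}g_{u^{\star}}$. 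Substituting $F_{u^{\star}}=V-\tfrac{\gamma}{2}g_{u^{\star}}^{2}$ and reusing the identity of Step 1 collapses the limit to $-(\dot V+\mathcal{L}^{u}V+\mathcal{G}^{u}g_{u^{\star}})$. Because $u^{\star}$ attains the supremum in the first line of \eqref{eq:ext_hjb} (the standing hypothesis), this bracket is $\le 0$ for every admissible $u$, so the limit exists and is $\ge 0$; hence the $\liminf$ in \eqref{eq:equ_control} is nonnegative.

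\emph{Main obstacle.} The algebra is short; the real work is analytic. I expect the chief difficulty to be justifying that the generator limits $\mathcal{A}^{u}F_{u^{\star}}$ and $\mathcal{A}^{u}g_{u^{\star}}$ exist and may be interchanged with the expectation (so that Dynkin applies and the local martingales are genuine martingales), which rests on the assumed $C^{1,2,2,2}$ regularity of $F_{u^{\star}},g_{u^{\star}}$, the second-moment admissibility, and the growth/Lipschitz Assumptions. A secondary technical point is the spatial localization by $B_{p},B_{z},B_{h}$: one must check it perturbs the pre-limit quantities only at order $O(c^{2})$ --- the probability of leaving a ball in time $c$ is $O(c)$ for the finite-activity large jumps and superpolynomially small for the diffusion, and on such paths the control mismatch acts for a time $<c$ --- so that the first-order limit still selects $\mathcal{A}^{u}$ evaluated at the center $(t,p,z,h)$, exactly as in the construction of \cite{bm}.
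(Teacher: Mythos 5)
Your argument is correct and is essentially the verification argument of Theorem~7.1 in \cite{bm}, which is precisely the proof the paper itself invokes (it omits the details and refers the reader there): identify $g_{u^{\star}}$ and $F_{u^{\star}}=V-\tfrac{\gamma}{2}g_{u^{\star}}^{2}$ with the probabilistic objects via Dynkin/Feynman--Kac and the cancellation $\gamma g\,\mathcal{A}^{u}g-\tfrac{\gamma}{2}\mathcal{A}^{u}(g^{2})=\mathcal{G}^{u}g$, then obtain \eqref{eq:equ_control} from the flow property of the concatenated control and the fact that $u^{\star}$ attains the supremum. Your closing remarks on the local-to-true martingale promotion and on the $O(c^{2})$ effect of the ball localization are exactly the technical points that the cited reference handles, so no gap.
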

\begin{proof}
The proof can be conducted similarly to the proof of Theorem 7.1 in \cite{bm} and is therefore omitted.
\end{proof}
\noindent
Next we show that an equilibrium control is \textit{necessarily} a solution of the extended HJB-system. Such a proof is provided in \cite{lindensjo} for a diffusion case and we extend it to the present jump-diffusion setting including the hedge of the terminal condition. 
\begin{theorem}
A regular equilibrium $(u^{\star}, V(t,p,z,h), F_{u^{\star}}(t,p,z,h), g_{u^{\star}}(t,p,z,h))$ in the sense of Definiton \ref{def_reg} necessarily solves the extended HJB-system \eqref{eq:ext_hjb} and $u^{\star}$ realizes the supremum in the first row.
\label{thm_nec}
\end{theorem}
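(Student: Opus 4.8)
The plan is to derive each of the four relations in \eqref{eq:ext_hjb} from the defining inequality \eqref{eq:equ_control} of a regular equilibrium and, along the way, to identify the maximizer $\hat u$ with $u^\star$. The two terminal conditions are immediate: evaluating Definition \ref{def_value_fct} at $t=T$ and conditioning on $(P_T,Z_T,H_T)=(p,z,h)$ makes the expectation equal to $p-D(h)$ while the conditional variance vanishes, whence $V(T,p,z,h)=p-D(h)$; the same conditioning in \eqref{eq:aux_fct} gives $g_{u^\star}(T,p,z,h)=p-D(h)$, which is the fourth line once $\hat u=u^\star$ is established. The third line $\mathcal{A}^{u^\star}g_{u^\star}=0$ is a martingale statement: by the tower rule $g_{u^\star}(s,P_s^{u^\star},Z_s,H_s)=\mathbb{E}[P_T^{u^\star}-D(H_T)\mid\mathcal{F}_s]$ is an $(\mathcal{F}_s)$-martingale, so the numerator in Definition \ref{def_gen} vanishes identically and the limit is $0$.

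The heart of the proof is the first line of \eqref{eq:ext_hjb} together with the attainment of the supremum at $u^\star$. I would fix a control value $u\in U$ and take the deviating rule in \eqref{eq:equ_control} to equal the constant $u$ on $[t,t+c]\times B_p\times B_z\times B_h$ and $u^\star$ elsewhere, so that the perturbed rule $u_{t+c}$ uses $u$ on $[t,t+c]$ and $u^\star$ afterwards. Using the Markov and flow property of the controlled dynamics and conditioning at time $t+c$ yields the decompositions
\begin{align*}
g_{u_{t+c}}(t,p,z,h) &= \mathbb{E}_{t,p,z,h}\!\left[g_{u^\star}(t+c,P_{t+c},Z_{t+c},H_{t+c})\right],\\
F_{u_{t+c}}(t,p,z,h) &= \mathbb{E}_{t,p,z,h}\!\left[F_{u^\star}(t+c,P_{t+c},Z_{t+c},H_{t+c})\right],
\end{align*}
where on $[t,t+c]$ the constant control $u$ drives the dynamics; the second identity is valid precisely because the integrand in \eqref{eq:aux_fct} is a function of the terminal pair $(P^{u_{t+c}}_T,H_T)$ alone. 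Writing $J=F_{u^\star}+\tfrac{\gamma}{2}g_{u^\star}^2$ as in Definition \ref{def_value_fct}, factoring the quadratic difference as $g_{u^\star}^2-b_c^2=(g_{u^\star}+b_c)(g_{u^\star}-b_c)$ with $b_c:=\mathbb{E}_{t,p,z,h}[g_{u^\star}(t+c,\cdot)]$, then dividing by $c$ and sending $c\downarrow0$, the inequality \eqref{eq:equ_control} turns into
\[
\mathcal{A}^{u}F_{u^\star}(t,p,z,h)+\gamma\, g_{u^\star}(t,p,z,h)\,\mathcal{A}^{u}g_{u^\star}(t,p,z,h)\ \le\ 0,
\]
for every $u\in U$, with equality at $u=u_\star(t,p,z,h)$, which corresponds to taking $u^\star$ itself as the deviation so that the difference quotient is identically zero. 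A direct expansion of $\mathcal{L}^{u}$ and $\mathcal{G}^{u}$, in which the $\dot g_{u^\star}$ terms cancel, yields the identity $\dot V+\mathcal{L}^{u}V+\mathcal{G}^{u}g_{u^\star}=\mathcal{A}^{u}F_{u^\star}+\gamma\, g_{u^\star}\,\mathcal{A}^{u}g_{u^\star}$; hence the last display reads $\dot V+\mathcal{L}^{u}V+\mathcal{G}^{u}g_{u^\star}\le0$ with equality at $u=u_\star(t,p,z,h)$. Taking the supremum over $u\in U$ gives the first line of \eqref{eq:ext_hjb}, where $g_u$ is to be read as the equilibrium function $g_{u^\star}$, and shows that $\hat u$ may be taken as $u^\star$, so that $g_{\hat u}=g_{u^\star}$ and lines three and four are consistent.

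All the genuine work lives in the second paragraph, and the main obstacle is to justify the passage to the limit $c\downarrow0$ in the presence of jumps — exactly the point at which \cite{lindensjo} restricts to the diffusion case. I would need, in order: (i) the two conditional decompositions, which rest on the strong Markov and flow property of the solution of \eqref{eq:sde_portf_proc} driven by $\hat W$ and the Poisson random measure $J_{\hat X}$; (ii) convergence of the difference quotients to $\mathcal{A}^{u}F_{u^\star}$ and $\mathcal{A}^{u}g_{u^\star}$, which relies on the regularity in force (existence of $\mathcal{A}^{u^\star}V$ and enough smoothness of $F_{u^\star},g_{u^\star}$ for the generator of Definition \ref{def_gen} to be evaluated pointwise) together with Dynkin's formula for jump-diffusions, so that the $\liminf$ in \eqref{eq:equ_control} is in fact a genuine limit; and (iii) uniform integrability of the jump contributions, controlled by the second-moment condition \eqref{eq:secmom_fin}, the linear-growth and Lipschitz bounds of Assumptions \ref{ass_func_lambda} and \ref{ass_functions_Y}, and the $L^2$-admissibility built into $\mathcal{U}$. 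Once the jump terms are shown to add no boundary obstruction in Dynkin's formula and to remain integrable, the diffusion argument of \cite{lindensjo} transfers with only notational changes.
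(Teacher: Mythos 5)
Your outline matches the paper's strategy almost step for step: the terminal conditions read off from Definitions \ref{aux_fct} and \ref{def_value_fct}, the martingale/tower-property argument for $\mathcal{A}^{u^\star}g_{u^\star}=0$, the perturbation by a constant control followed by a conditional decomposition of $F$ and $g^2$, division by the time increment, and the algebraic identity rewriting $\mathcal{A}^{u}F_{u^\star}+\gamma g_{u^\star}\mathcal{A}^{u}g_{u^\star}$ as $\dot V+\mathcal{L}^{u}V+\mathcal{G}^{u}g_{u^\star}$ (the paper's Lemma \ref{lem_inequ}). Your shortcut for the equality case (take $u=u^\star$ as the deviation, so the difference quotient vanishes identically) is legitimate and slightly cleaner than the paper's Step 3, which instead derives $\mathcal{A}^{u^\star}F_{u^\star}=0$ separately and cancels terms.

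The genuine gap is in the second paragraph, and it is exactly the part the theorem is about. The deviation $u_{t+c}$ in Definition \ref{def_equ} equals $u$ only on $[t,t+c]\times B_p\times B_z\times B_h$ and reverts to $u^\star$ on the complement; it is \emph{not} the rule ``constant $u$ on $[t,t+c]$, then $u^\star$.'' Your two conditional decompositions are therefore only valid up to the first exit time of $(s,P_s,Z_s,H_s)$ from $[t,t+c)\times B_p\times B_z\times B_h$, and in a jump-diffusion setting the process can leave the balls by a jump arbitrarily soon after $t$ with positive probability on every interval $[t,t+c]$. This is precisely what the paper's construction handles: the exit times $\sigma_k^u$ of \eqref{eq:typ_el} (shown to satisfy $\sigma_k^u>t$ a.s.\ by right-continuity), the events $A_k=\{\sigma_k^u>t+a_k\}$ with $\mathbb{P}(A_k)\ge 1-k^{-2}$, the Borel--Cantelli argument of Lemma \ref{lem_set} killing the contribution of $A_k^c$ in the limit, and the truncated times $\tau_k^u=\sigma_k^u\wedge(t+a_k)$ which make the integrand in Dynkin's formula bounded so that dominated convergence and Lebesgue differentiation apply. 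Your items (i)--(iii) correctly name the ingredients that are needed, but deferring them to ``the diffusion argument of \cite{lindensjo} transfers with only notational changes'' concedes the one step that does not transfer notationally — the whole stopping-time/Borel--Cantelli scaffolding exists because of the jumps and the spatial localization in the definition of an equilibrium. As written, the proof is an accurate plan rather than a complete argument.
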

\noindent

The proof is delivered in several steps. 
We start by introducing two sequences of stopping times that will be repeatedly needed in the sequel. Let $(c_{k})_{k \in \mathbb{N}}$ be a strictly positive monotone sequence satisfying $\lim_{k \to \infty} c_{k} = 0.$ Let $(t,p,z,h,u)\in [0,T) \times \mathbb{R}_{+} \times \mathbb{R}^{m+2}_{+} \times \mathbb{R}^{l} \times \mathcal{U}$ arbitrary and denote by $B_{p},B_{z},B_{h}$ balls centered at respectively $p,z,h.$ Define the sequence of stopping times $(\sigma_{k}^{u})$ by

\begin{equation}
\sigma_{k}^{u} := \inf\{s > t: (s,P_{s}^{u},Z_{s},H_{s}) \notin [t,t+c_{k}) \times B_{p} \times B_{z} \times B_{h}\} \wedge T.
\label{eq:typ_el}
\end{equation}
\begin{proposition}
Consider the sequence of stopping times $(\sigma_{k}^{u})$ with a typical element given by \eqref{eq:typ_el}. It holds that
$$\sigma_{k}^{u} > t\ \text{a.s.}$$
\end{proposition}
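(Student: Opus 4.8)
The plan is to show that the first time the process $(s,P_s^u,Z_s,H_s)$ exits the set $[t,t+c_k)\times B_p\times B_z\times B_h$ is strictly later than $t$ almost surely. The key observation is that at the starting time, the process sits strictly inside the (open in the time coordinate) region: by construction the trajectory begins at $(t,p,z,h)$, and since $B_p,B_z,B_h$ are balls \emph{centered} at $p,z,h$ respectively, the spatial starting point $(p,z,h)$ lies in the open interior of $B_p\times B_z\times B_h$, at positive distance from the boundary. Likewise $t$ lies in the half-open interval $[t,t+c_k)$ with $c_k>0$, so there is room on the time axis as well. Hence at $s=t$ the defining condition $(s,P_s^u,Z_s,H_s)\notin[t,t+c_k)\times B_p\times B_z\times B_h$ fails, and the infimum in \eqref{eq:typ_el} is not attained at $t$ itself.

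First I would make precise that the underlying processes are c\`adl\`ag and that we use their right-continuous versions; recall from the Definition of admissibility that $P^u$ admits a unique c\`adl\`ag adapted solution to \eqref{eq:sde_portf_proc}, and $Z,H$ are Markovian jump-diffusions, hence also c\`adl\`ag. Since all coordinate processes are right-continuous and start (at time $t$) strictly inside the open set $(t-1,t+c_k)\times \mathrm{int}(B_p)\times\mathrm{int}(B_z)\times\mathrm{int}(B_h)$ (intersected with the admissible time interval), right-continuity guarantees that for almost every $\omega$ there exists a (random) $\delta(\omega)>0$ such that the trajectory remains inside this open set on $[t,t+\delta(\omega))$. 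Consequently no exit can occur on $(t,t+\delta(\omega))$, so $\sigma_k^u\geq t+\delta(\omega)>t$ on this full-measure event.

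The main obstacle to watch for is the asymmetry between the time coordinate and the spatial coordinates in the exit set, together with the role of jumps. On the spatial side one must ensure the process does not jump out of the open balls instantaneously at $s=t$: this is handled by right-continuity, because the value at $s=t$ is the \emph{starting} value $(p,z,h)$, which is in the interior, and the first jump after $t$ occurs at a strictly positive distance in time (the jump times of a process driven by the Poisson random measure $J_{\hat X}$ form a discrete set, so there is almost surely no jump at the deterministic time $t$). On the time side, the set $[t,t+c_k)$ is closed at the left endpoint, so $s=t$ itself is \emph{not} an exit point; the only way the infimum could equal $t$ would be if the spatial part were already outside its ball at $s=t$, which is excluded since $(p,z,h)$ is the center. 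I would therefore emphasize the strict positivity of $c_k$ and the centeredness of the balls as the two facts that rule out $\sigma_k^u=t$, and conclude that $\sigma_k^u>t$ almost surely.
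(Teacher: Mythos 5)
Your proof is correct and follows essentially the same route as the paper's: both arguments rest on the right-continuity of the c\`adl\`ag map $s \mapsto (s,P_s^u,Z_s,H_s)$ at $s=t$, combined with the fact that the trajectory starts at the center of the balls, so it remains inside the set $[t,t+c_k)\times B_p\times B_z\times B_h$ on a small right-neighbourhood of $t$. The extra remark about jump times avoiding the deterministic time $t$ is harmless but not needed, since right-continuity already ensures the value at $s=t$ is the conditioned starting point $(p,z,h)$.
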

\begin{proof}
This is an immediate consequence of the c\`{a}dl\`{a}g property of the mapping\\ $t \mapsto (t,P_{t}^{u},Z_{t},H_{t})$ (cf. \cite{applebaum}, p.106): let $k \in \mathbb{N}$ and $\omega \in \Omega$ arbitrary. For any $\epsilon > 0$ there exists some $ \delta > 0$ such that for all $s \in (t,t+\delta)$ it holds that
$$||(s,P^{u}_{s},Z_{s},H_{s}) - (t,P_{t}^{u},Z_{t},H_{t})||_{2} < \epsilon,$$
thus, $\sigma_{k}^{u} \geq s > t$ a.s.
\end{proof}
Observe that $\lim_{k \to \infty} \sigma_{k}^{u} = t.$ Let $(a_{k})_{k \in \mathbb{N}}$ be another positive monotone sequence satisfying $\lim_{k \to \infty} a_{k} = 0$ such that the sequence of events $(A_{k})_{k \in \mathbb{N}}$ is characterized by 
\begin{align}
\begin{split}
A_{k} &:= \{\omega \in \Omega: \sigma_{k}^{u} > t + a_{k}\}, \\
\mathbb{P}(A_{k}) &\geq 1- \frac{1}{k^{2}}.
\label{eq:A_{k}}
\end{split}
\end{align} 

\begin{lemma}
Consider the event $A_{k}$ and its probability of occurrence defined by \eqref{eq:A_{k}}. Then it holds that
$$\mathbbm{1}_{A_{k}}(\omega) = 1\ \text{a.s.},$$ for all but finitely many $k$.
\label{lem_set}
\end{lemma}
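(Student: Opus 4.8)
The plan is to recognize this as a direct application of the first Borel--Cantelli lemma. The assertion that $\mathbbm{1}_{A_{k}}(\omega) = 1$ almost surely for all but finitely many $k$ is precisely the statement that
\[
\mathbb{P}\left(\liminf_{k \to \infty} A_{k}\right) = \mathbb{P}\left(\bigcup_{n \in \mathbb{N}} \bigcap_{k \geq n} A_{k}\right) = 1,
\]
i.e.\ that almost every $\omega$ belongs to $A_{k}$ for all sufficiently large $k$. By complementation this is equivalent to showing that the event
\[
\limsup_{k \to \infty} A_{k}^{c} = \bigcap_{n \in \mathbb{N}} \bigcup_{k \geq n} A_{k}^{c}
\]
— namely that $\omega \in A_{k}^{c}$ for infinitely many $k$ — has probability zero.

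The key quantitative input is the lower bound $\mathbb{P}(A_{k}) \geq 1 - \tfrac{1}{k^{2}}$ from \eqref{eq:A_{k}}, which gives $\mathbb{P}(A_{k}^{c}) \leq \tfrac{1}{k^{2}}$. Summing over $k$ yields
\[
\sum_{k=1}^{\infty} \mathbb{P}(A_{k}^{c}) \leq \sum_{k=1}^{\infty} \frac{1}{k^{2}} = \frac{\pi^{2}}{6} < \infty.
\]
Since the series of probabilities converges, the first Borel--Cantelli lemma applies directly and gives $\mathbb{P}(\limsup_{k} A_{k}^{c}) = 0$. Passing to complements, $\mathbb{P}(\liminf_{k} A_{k}) = 1$, which is exactly the claim.

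There is no real obstacle here: the only thing to be careful about is correctly translating the phrase ``for all but finitely many $k$'' into the $\liminf$ event and verifying that summability of $\mathbb{P}(A_{k}^{c})$ is what the hypothesis $\mathbb{P}(A_{k}) \geq 1 - k^{-2}$ delivers. Everything else is the standard Borel--Cantelli argument, so the proof is essentially one line once the set-theoretic reformulation is in place.
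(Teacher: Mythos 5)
Your argument is correct and is exactly the paper's own proof: bound $\mathbb{P}(A_{k}^{c}) \leq k^{-2}$, note the series converges to $\pi^{2}/6$, and apply the first Borel--Cantelli lemma to conclude that almost surely only finitely many $A_{k}^{c}$ occur. The additional set-theoretic translation into $\liminf$/$\limsup$ events is a harmless elaboration of the same one-line argument.
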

\begin{proof}
Observe that $\mathbb{P}(A_{k}^{c}) \leq \frac{1}{k^{2}}$ and therefore $\sum_{k=1}^{\infty} \mathbb{P}(A_{k}^{c}) \leq \frac{\pi^{2}}{6} < \infty.$ The \emph{Borel-Cantelli lemma} implies that $\mathbbm{1}_{A_{k}^{c}}(\omega) = 0$ for all but finitely many $k$ and the claim follows.
\end{proof}
\noindent
Define a typical element of the sequence of stopping times $(\tau_{k}^{u})_{k \in \mathbb{N}}$ by
\begin{equation}
\tau_{k}^{u} := \min \{\sigma_{k}^{u}, t+ a_{k}\}.
\label{eq:tau_k}
\end{equation}
\begin{lemma}
Let  $u^{\star}$ be an equilibrium control and consider the function $g_{u}$ defined by \eqref{eq:aux_fct}. Then it holds that
\begin{equation}
\mathcal{A}^{u^{\star}}g_{u^{\star}}(t,p,z,h) = 0.
\label{eq:kolmog}
\end{equation}
\label{lem_kol}
\end{lemma}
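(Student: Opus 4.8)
The identity \eqref{eq:kolmog} is a backward Kolmogorov (Feynman--Kac) statement: it asserts that the graph generator $\mathcal{A}^{u^{\star}}$ from \eqref{eq:def_gen} annihilates the conditional-mean function $g_{u^{\star}}$. The plan is to prove directly that, along the state process driven by $u^{\star}$, the random variable obtained by evaluating $g_{u^{\star}}$ at the current state is an $(\mathcal{F}_{s})$-martingale; since the generator of a martingale is identically zero, this is exactly \eqref{eq:kolmog}. Because we work with a regular equilibrium, $u^{\star}$ is of feedback type (Definition \ref{def_equ}): by \eqref{eq:feedback} the control at time $s$ is a deterministic function $u_{\star}(s,P_{s\m}^{u^{\star}},Z_{s\m},H_{s\m})$ of the current state. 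Substituting this feedback law into \eqref{eq:sde_portf_proc} and combining it with the Markovian jump-diffusion dynamics of $(Z,H)$ (Assumption \ref{ass_markov}) renders the augmented process $(s,P_{s}^{u^{\star}},Z_{s},H_{s})$ a time-homogeneous Markov process, precisely the graph whose generator is $\mathcal{A}^{u^{\star}}$. Admissibility guarantees that this process exists, is c\`{a}dl\`{a}g, and satisfies $\mathbb{E}[|P_{s}^{u^{\star}}|^{2}]<\infty$, so every conditional expectation below is well defined.

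Next I would use the Markov property to rewrite, for each $s \in [t,T]$,
\begin{equation*}
g_{u^{\star}}(s,P_{s}^{u^{\star}},Z_{s},H_{s}) = \mathbb{E}\big[P_{T}^{u^{\star}}-D(H_{T}) \,\big|\, P_{s}^{u^{\star}},Z_{s},H_{s}\big] = \mathbb{E}\big[P_{T}^{u^{\star}}-D(H_{T}) \,\big|\, \mathcal{F}_{s}\big] =: M_{s}.
\end{equation*}
The first equality is the definition of $g_{u^{\star}}$ in \eqref{eq:aux_fct} evaluated at the current state, while the second is the Markov property: from time $s$ onwards the same feedback law $u^{\star}$ governs the dynamics, so conditioning on the full history $\mathcal{F}_{s}$ reduces to conditioning on the current state. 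By the tower property, $M$ is a genuine $(\mathcal{F}_{s})$-martingale.

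The conclusion is then immediate. For every $\epsilon > 0$ the martingale property yields
\begin{equation*}
\mathbb{E}_{t,p,z,h}\big[g_{u^{\star}}(t+\epsilon,P_{t+\epsilon}^{u^{\star}},Z_{t+\epsilon},H_{t+\epsilon})\big] = \mathbb{E}_{t,p,z,h}[M_{t+\epsilon}] = M_{t} = g_{u^{\star}}(t,p,z,h),
\end{equation*}
so the numerator of the difference quotient defining $\mathcal{A}^{u^{\star}}$ in \eqref{eq:def_gen} vanishes identically in $\epsilon$. Hence the limit exists and equals $0$, which is \eqref{eq:kolmog}; observe that no additional smoothness of $g_{u^{\star}}$ is required at this step.

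The main obstacle I anticipate is the rigorous justification of the Markov property of $(s,P_{s}^{u^{\star}},Z_{s},H_{s})$ under the feedback law --- that inserting $u_{\star}(s,\cdot)$ into \eqref{eq:sde_portf_proc} produces a well-posed SDE whose solution flow is Markov --- together with the integrability needed for the tower step, which is supplied by the $L^{2}$-bound in the definition of admissibility and by Assumptions \ref{ass_func_lambda}--\ref{ass_functions_Y}. Should one prefer to avoid invoking the Markov property in this global form, the stopping times $\tau_{k}^{u}$ from \eqref{eq:tau_k} offer a localization: on the event $A_{k}$ of \eqref{eq:A_{k}} the state remains in $B_{p} \times B_{z} \times B_{h}$, so the martingale identity can be recovered by optional stopping at $\tau_{k}^{u}$ and then passing to the limit with the help of Lemma \ref{lem_set}.
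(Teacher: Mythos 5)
Your proof is correct, and it reaches \eqref{eq:kolmog} by a genuinely shorter route than the paper. The paper applies Dynkin's formula on the random interval $[t,\tau_{k}^{u^{\star}}]$, uses the tower property to cancel the boundary term, and then needs the sequence $(a_{k})$, the Borel--Cantelli argument of Lemma \ref{lem_set}, dominated convergence, and Lebesgue's differentiation theorem to pass from $\mathbb{E}_{t,p,z,h}\bigl[\int_{t}^{\tau_{k}^{u^{\star}}}\mathcal{A}^{u^{\star}}g_{u^{\star}}\,ds\bigr]/a_{k}=0$ to the pointwise statement. You instead observe that the very same tower-property identity --- which the paper itself invokes, only at the stopping time $\tau_{k}^{u^{\star}}$ rather than at the deterministic time $t+\epsilon$ --- makes $s\mapsto g_{u^{\star}}(s,P_{s}^{u^{\star}},Z_{s},H_{s})$ a closed martingale, so the numerator of the difference quotient in \eqref{eq:def_gen} vanishes for every $\epsilon$ and the limit is trivially $0$. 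This buys you something real: no smoothness of $g_{u^{\star}}$, no Dynkin formula, and none of the $\tau_{k}^{u}$/$A_{k}$ machinery are needed for this particular lemma; the limit is not merely evaluated but shown to exist. Both arguments ultimately rest on the same unproved ingredient, namely the Markov/flow property of the state process under the feedback law $u^{\star}$ together with integrability of $P_{T}^{u^{\star}}-D(H_{T})$, which you correctly isolate as the one point requiring care. The only thing the paper's longer detour buys is that it rehearses exactly the localization apparatus that \emph{is} indispensable in the subsequent lemmas, where the control is perturbed on $[t,\tau_{k}^{u}]$ and the martingale shortcut is no longer available.
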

\begin{proof}
Using \emph{Dynkin's Formula} (see e.g. \cite{oksjump}, p.12), we find that
\begin{align*}
g&_{u^{\star}}(t,p,z,h) \\
&= \mathbb{E}_{t,p,z,h}\left[g_{u^{\star}}(\tau_{k}^{u^{\star}},P^{u^{\star}}_{\tau_{k}^{u^{\star}}},Z_{\tau_{k}^{u^{\star}}},H_{\tau_{k}^{u^{\star}}}) - \int_{t}^{\tau_{k}^{u^{\star}}}\mathcal{A}^{u^{\star}} g_{u^{\star}}(s,P_{s}^{u^{\star}},Z_{s},H_{s}) \ ds\right].
\end{align*}

It is a simple consequence of the \emph{Tower Property} that
\begin{align*}
\mathbb{E}&_{t,p,z,h}[g_{u^{\star}}(\tau_{k}^{u^{\star}},P^{u^{\star}}_{\tau_{k}^{u^{\star}}},Z_{\tau_{k}^{u^{\star}}},H_{\tau_{k}^{u^{\star}}})] \\
&= \mathbb{E}_{t,p,z,h}\left[\mathbb{E}_{\tau_{k}^{u^{\star}},P^{u^{\star}}_{\tau_{k}^{u^{\star}}},Z_{\tau_{k}^{u^{\star}}},H_{\tau_{k}^{u^{\star}}}}[P_{T}^{u^{\star}}-D(H_{T})]\right] \\
&= \mathbb{E}_{t,p,z,h}[P_{T}^{u^{\star}}-D(H_{T})] = g_{u^{\star}}(t,p,z,h).
\end{align*}

Combining the previous two results, we find that

$$\mathbb{E}_{t,p,z,h}\left[\frac{\int_{t}^{\tau_{k}^{u^{\star}}} \mathcal{A}^{u^{\star}} g_{u^{\star}}(s,P_{s}^{u^{\star}},Z_{s},H_{s}) \ ds}{a_{k}} \right] = 0.$$

Consider the sequence of random variables $$\left(\frac{\int_{t}^{\tau_{k}^{u^{\star}}} \mathcal{A}^{u^{\star}} g_{u^{\star}}(s,P_{s}^{u^{\star}},Z_{s},H_{s}) \ ds}{a_{k}} \right)_{k \in \mathbb{N}},$$ and note that the integrand is bounded on the interval $[t,\tau_{k}^{u^{\star}}],$ even if there is a large jump at $\tau_{k}^{u^{\star}}$ since this point has Lebesgue measure zero. Therefore we can use \emph{Dominated Convergence} to see that
\begingroup
\allowdisplaybreaks
	\begin{align*}
	\lim_{k \to \infty}\ &\mathbb{E}_{t,p,z,h}\left[\frac{\int_{t}^{\tau_{k}^{u^{\star}}} \mathcal{A}^{u^{\star}} g_{u^{\star}}(s,P_{s}^{u^{\star}},Z_{s},H_{s}) \ ds}{a_{k}} \right] \\
	&= \mathbb{E}_{t,p,z,h} \left[\lim_{k \to \infty} \frac{\int_{t}^{\tau_{k}^{u^{\star}}} \mathcal{A}^{u^{\star}} g_{u^{\star}}(s,P_{s}^{u^{\star}},Z_{s},H_{s}) \ ds}{a_{k}}  \right] \\
	&= \mathbb{E}_{t,p,z,h} \left[\lim_{k \to \infty} \mathbbm{1}_{A_{k}}(\omega)\ \frac{\int_{t}^{t+a_{k}} \mathcal{A}^{u^{\star}} g_{u^{\star}}(s,P_{s}^{u^{\star}},Z_{s},H_{s}) \ ds}{a_{k}} \right] \\
		&\ \ + \mathbb{E}_{t,p,z,h} \left[\lim_{k \to \infty} \mathbbm{1}_{A_{k}^{c}}(\omega)\ \frac{\int_{t}^{\sigma_{k}^{u^{\star}}} \mathcal{A}^{u^{\star}} g_{u^{\star}}(s,P_{s}^{u^{\star}},Z_{s},H_{s}) \ ds}{a_{k}}\right].
	\end{align*}
	\endgroup
According to Lemma \ref{lem_set} we have for arbitrary but fixed $\omega \in \Omega$ that $\mathbbm{1}_{A_{k}^{c}}(\omega) \neq 0$ for only finitely many $k$, therefore 
$$\lim_{k \to \infty} \mathbbm{1}_{A_{k}^{c}}(\omega) \frac{\int_{t}^{\sigma_{k}^{u^{\star}}} \mathcal{A}^{u^{\star}} g_{u^{\star}}(s,P_{s}^{u^{\star}},Z_{s},H_{s}) \ ds}{a_{k}} =0.$$
Further,  
\begin{align*}
0 &= \mathbb{E}_{t,p,z,h} \left[\lim_{k \to \infty} \frac{\int_{t}^{t+a_{k}} \mathcal{A}^{u^{\star}} g_{u^{\star}}(s,P_{s}^{u^{\star}},Z_{s},H_{s}) \ ds}{a_{k}} \right] \\
&= \mathbb{E}_{t,p,z,h} \left[\mathcal{A}^{u^{\star}}g_{u^{\star}}(t,p,z,h) \right] = \mathcal{A}^{u^{\star}}g_{u^{\star}}(t,p,z,h),
\end{align*}
whereby the second equality is justified by \emph{Lebesgue's Differentiation Theorem} (cf. \cite{rudin}, Chapter 7) and since $(t,p,z,h)$ has been arbitrarily chosen, \eqref{eq:kolmog} is established.
\end{proof}

Let $\tilde{u}_{\tau_{k}^{u}}$ be an allocation rule that is equal to $u(t) \equiv u \in U$ (a constant) on the interval $[t,\tau_{k}^{u}]$ and equal to the equilibrium $u^{\star}$ outside that interval, that is
\begin{align}
\tilde{u}_{\tau_{k}^{u}}(s) &= u\ \mathbbm{1}_{[t,\tau_{k}^{u})}(s) + u^{\star}(s) \ \mathbbm{1}_{[\tau_{k}^{u},T]}(s).
\label{eq:aux_stopping} \\
&= \left(u\ \mathbbm{1}_{[t,\sigma_{k}^{u})}(s) + u^{\star}(s) \ \mathbbm{1}_{[\sigma_{k}^{u},T]}(s) \right) \mathbbm{1}_{A_{k}^{c}}(\omega)  + \left(u\ \mathbbm{1}_{[t,t+a_{k})}(s) + u^{\star}(s) \ \mathbbm{1}_{[t+a_{k},T]}(s) \right) \mathbbm{1}_{A_{k}}(\omega) \label{eq:aux_stopping2}\\
&= \left(u\ \mathbbm{1}_{[t,\sigma_{k}^{u})}(s) + u^{\star}(s) \ \mathbbm{1}_{[\sigma_{k}^{u},T]}(s) \right) \mathbbm{1}_{A_{k}^{c}}(\omega)  + u_{t+a_{k}}(s) \mathbbm{1}_{A_{k}}(\omega) \label{eq:aux_stopping3},
\end{align}
where \eqref{eq:aux_stopping2} follows from the definition of $\tau_{k}^{u}$, cf. \eqref{eq:tau_k}. Moreover, as the right-hand bracket of \eqref{eq:aux_stopping2} is for each fixed $k \in \mathbb{N}$ easily seen to be a function of feedback type as $u_{t+c}$ in Definition \ref{def_equ}, we can equate it to \eqref{eq:aux_stopping3}.

\begin{lemma}
Consider an equilibrium control $u^{\star}$, the control $\tilde{u}_{\tau_{k}^{u}}$ given by \eqref{eq:aux_stopping} and the function $F_{u}$ defined by \eqref{eq:aux_fct}. Then we have
\begin{equation}
\lim_{k \to \infty} \frac{F_{u^{\star}}(t,p,z,h)-F_{\tilde{u}_{\tau_{k}^{u}}}(t,p,z,h)}{a_{k}} = - \mathcal{A}^{u^{\star}}F_{u^{\star}}(t,p,z,h).
\end{equation}
\end{lemma}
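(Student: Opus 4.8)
The plan is to transcribe the proof of Lemma \ref{lem_kol}, now with $F$ in place of $g$ and the perturbed rule $\tilde{u}_{\tau_{k}^{u}}$ of \eqref{eq:aux_stopping} in place of $u^{\star}$. Since $\tilde{u}_{\tau_{k}^{u}}$ coincides with the equilibrium $u^{\star}$ on $[\tau_{k}^{u},T]$, the Markov property of $(P,Z,H)$ together with the tower property yield, exactly as for $g$ in Lemma \ref{lem_kol},
\begin{equation*}
F_{\tilde{u}_{\tau_{k}^{u}}}(t,p,z,h) = \mathbb{E}_{t,p,z,h}\left[F_{u^{\star}}\big(\tau_{k}^{u},P^{\tilde{u}}_{\tau_{k}^{u}},Z_{\tau_{k}^{u}},H_{\tau_{k}^{u}}\big)\right].
\end{equation*}
Introducing the equilibrium trajectory $P^{u^{\star}}$ issued from the same point $(t,p,z,h)$, I would split the numerator as $F_{u^{\star}}(t,p,z,h)-F_{\tilde{u}_{\tau_{k}^{u}}}(t,p,z,h)=\mathrm{I}_{k}+\mathrm{II}_{k}$, where
\begin{equation*}
\mathrm{I}_{k}:=F_{u^{\star}}(t,p,z,h)-\mathbb{E}_{t,p,z,h}\left[F_{u^{\star}}\big(\tau_{k}^{u},P^{u^{\star}}_{\tau_{k}^{u}},Z_{\tau_{k}^{u}},H_{\tau_{k}^{u}}\big)\right]
\end{equation*}
and $\mathrm{II}_{k}$ is the difference of the two expectations of $F_{u^{\star}}(\tau_{k}^{u},\cdot)$ taken along the equilibrium trajectory $P^{u^{\star}}$ and along the deviating trajectory $P^{\tilde{u}}$, respectively.

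For $\mathrm{I}_{k}$ I would apply Dynkin's formula to $F_{u^{\star}}$ along the equilibrium process $(P^{u^{\star}},Z,H)$ on $[t,\tau_{k}^{u}]$, giving $\mathrm{I}_{k}=-\mathbb{E}_{t,p,z,h}\big[\int_{t}^{\tau_{k}^{u}}\mathcal{A}^{u^{\star}}F_{u^{\star}}(s,P^{u^{\star}}_{s},Z_{s},H_{s})\,ds\big]$. The quotient $\mathrm{I}_{k}/a_{k}$ is then treated verbatim as in Lemma \ref{lem_kol}: the integrand is bounded on $[t,\tau_{k}^{u}]$, so Dominated Convergence applies; splitting over the events $A_{k}$ and $A_{k}^{c}$ of \eqref{eq:A_{k}} and invoking Lemma \ref{lem_set} (Borel--Cantelli) annihilates the $A_{k}^{c}$ contribution, while on $A_{k}$ one has $\tau_{k}^{u}=t+a_{k}$ so that Lebesgue's Differentiation Theorem gives $\lim_{k\to\infty}\mathrm{I}_{k}/a_{k}=-\mathcal{A}^{u^{\star}}F_{u^{\star}}(t,p,z,h)$.

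The decisive step, and the main obstacle, is to show that $\mathrm{II}_{k}=o(a_{k})$, i.e. that swapping the deviating trajectory $P^{\tilde{u}}$ for the equilibrium trajectory $P^{u^{\star}}$ on the collapsing interval $[t,\tau_{k}^{u}]$ leaves no trace in the limit. Both trajectories start from $p$ at time $t$ and differ only through the control active on $[t,\tau_{k}^{u})$, so I would bound $\mathrm{II}_{k}$ by a mean-value estimate based on the assumed $C^{1,2,2,2}$-regularity of $F_{u^{\star}}$, reducing it to leading order to $\mathbb{E}_{t,p,z,h}[\,|P^{u^{\star}}_{\tau_{k}^{u}}-P^{\tilde{u}}_{\tau_{k}^{u}}|\,]$, and then control this quantity through the linear-growth and Lipschitz bounds of Assumptions \ref{ass_func_lambda} and \ref{ass_functions_Y} and the boundedness of the coefficients of \eqref{eq:S}--\eqref{eq:Y} on $B_{p}\times B_{z}\times B_{h}$, again discarding $A_{k}^{c}$ via $\mathbb{P}(A_{k}^{c})\leq k^{-2}$. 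This is genuinely delicate, because a priori the drift discrepancy between $u$ and $u^{\star}$ is of the same order $a_{k}$ as the length of the interval; the estimate must therefore extract the cancellation that forces $\mathrm{II}_{k}/a_{k}\to 0$. Granting this, combining the two limits yields $\lim_{k\to\infty}\big(F_{u^{\star}}(t,p,z,h)-F_{\tilde{u}_{\tau_{k}^{u}}}(t,p,z,h)\big)/a_{k}=-\mathcal{A}^{u^{\star}}F_{u^{\star}}(t,p,z,h)$, as claimed.
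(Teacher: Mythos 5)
Your argument breaks at the claim $\mathrm{II}_{k}=o(a_{k})$: that estimate is false for a genuine deviation, and no amount of Lipschitz/growth bookkeeping will rescue it because there is no cancellation to extract. To see this, apply Dynkin's formula to $F_{u^{\star}}$ along \emph{each} of the two trajectories issued from $(t,p,z,h)$; recalling that $\tilde{u}_{\tau_{k}^{u}}\equiv u$ on $[t,\tau_{k}^{u})$, this gives the exact identity
\begin{equation*}
\mathrm{II}_{k}=\mathbb{E}_{t,p,z,h}\left[\int_{t}^{\tau_{k}^{u}}\left(\mathcal{A}^{u^{\star}}-\mathcal{A}^{u}\right)F_{u^{\star}}(s,\cdot)\ ds\right],
\qquad\text{hence}\qquad
\lim_{k\to\infty}\frac{\mathrm{II}_{k}}{a_{k}}=\left(\mathcal{A}^{u^{\star}}-\mathcal{A}^{u}\right)F_{u^{\star}}(t,p,z,h),
\end{equation*}
which is nonzero whenever the coefficients of $P$ under the constant deviation $u$ differ from those under the feedback value $u_{\star}(t,p,z,h)$ --- i.e., in exactly the cases of interest. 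Your own mean-value heuristic already shows this: $\mathbb{E}_{t,p,z,h}[P^{u^{\star}}_{\tau_{k}^{u}}-P^{\tilde{u}}_{\tau_{k}^{u}}]$ is the integrated drift discrepancy, which is of order $a_{k}$ with a nonvanishing coefficient, so the first-order term $F_{p}\cdot\mathbb{E}[\Delta P]$ survives division by $a_{k}$ (and the martingale and jump parts contribute at order $a_{k}$ as well, through the second-order and integro terms of the generator). Thus your decomposition, carried out honestly, yields $\mathrm{I}_{k}/a_{k}+\mathrm{II}_{k}/a_{k}\to-\mathcal{A}^{u}F_{u^{\star}}(t,p,z,h)$, the generator carrying the \emph{deviating} control, not the equilibrium feedback.

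The paper avoids the split altogether: it applies Dynkin's formula once, to $F_{u^{\star}}$ along the deviating trajectory $(P^{\tilde{u}_{\tau_{k}^{u}}},Z,H)$ on $[t,\tau_{k}^{u}]$, and combines it with your (correct) first identity $F_{\tilde{u}_{\tau_{k}^{u}}}(t,p,z,h)=\mathbb{E}_{t,p,z,h}[F_{u^{\star}}(\tau_{k}^{u},P^{\tilde{u}_{\tau_{k}^{u}}}_{\tau_{k}^{u}},Z_{\tau_{k}^{u}},H_{\tau_{k}^{u}})]$, obtained from the tower property because $\tilde{u}_{\tau_{k}^{u}}$ coincides with $u^{\star}$ after $\tau_{k}^{u}$. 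This produces, with \emph{no} remainder term,
\begin{equation*}
F_{u^{\star}}(t,p,z,h)-F_{\tilde{u}_{\tau_{k}^{u}}}(t,p,z,h)=-\,\mathbb{E}_{t,p,z,h}\left[\int_{t}^{\tau_{k}^{u}}\mathcal{A}^{u}F_{u^{\star}}(s,P_{s}^{u},Z_{s},H_{s})\ ds\right],
\end{equation*}
after which the dominated-convergence/Lebesgue-differentiation routine of Lemma \ref{lem_kol} (splitting over $A_{k}$ and $A_{k}^{c}$, discarding the latter by Borel--Cantelli) gives the limit. Note that what this honestly delivers is $-\mathcal{A}^{u}F_{u^{\star}}(t,p,z,h)$; the superscript $u^{\star}$ in the displayed statement must be read as carrying the constant deviation $u$ on the shrinking interval --- this is precisely what makes Step 4 of the proof of Theorem \ref{thm_nec} an inequality valid for \emph{all} $u$, and hence a statement about the supremum in \eqref{eq:ext_hjb}, rather than a triviality already contained in Step 3. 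Your proposed route, by forcing $\mathrm{II}_{k}$ to vanish, would instead prove the statement with the feedback generator on the right-hand side, which is both unprovable as you set it up and not what the necessity argument needs.
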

\begin{proof}
According to \emph{Dynkin's Formula}, 
\begin{align*}
\mathbb{E}&_{t,p,z,h}[F_{u^{\star}}(\tau_{k}^{u},P_{\tau_{k}^{u}}^{\tilde{u}_{\tau_{k}^{u}}},Z_{\tau_{k}^{u}},H_{\tau_{k}^{u}})] \\
&= F_{u^{\star}}(t,p,z,h) + \mathbb{E}_{t,p,z,h} \left[\int_{t}^{\tau_{k}^{u}} \mathcal{A}^{\tilde{u}_{\tau_{k}^{u}}} F_{u^{\star}}(s,P_{s}^{\tilde{u}_{\tau_{k}^{u}}},Z_{s},H_{s})\ ds \right],
\end{align*}
and we observe that
\begin{itemize}
	\item the integral limits in the previous equation are $t$ and $\tau_{k}^{u},$ therefore we can denote $P_{s}^{\tilde{u}_{\tau_{k}^{u}}}$ by $P_{s}^{u}$ and $\mathcal{A}^{\tilde{u}_{\tau_{k}^{u}}}$ by $\mathcal{A}^{u}$ on the random interval $(t,\tau_{k}^{u}).$
	\item as the starting time point is $\tau_{k}^{u},$ it holds that
	\begin{align*}
	F&_{u^{\star}}(\tau_{k}^{u},P_{\tau_{k}^{u}}^{\tilde{u}_{\tau_{k}^{u}}},Z_{\tau_{k}^{u}},H_{\tau_{k}^{u}}) \\
	&= \mathbb{E}_{\tau_{k}^{u},P_{\tau_{k}^{u}}^{\tilde{u}_{\tau_{k}^{u}}},Z_{\tau_{k}^{u}},H_{\tau_{k}^{u}}}\left[P_{T}^{u^{\star}} - \frac{\gamma}{2} (P_{T}^{u^{\star}})^{2} + \gamma P_{T}^{u^{\star}} D(H_{T}) - D(H_{T}) - \frac{\gamma}{2} D(H_{T})^{2}\right] \\
	&= \mathbb{E}_{\tau_{k}^{u},P_{\tau_{k}^{u}}^{\tilde{u}_{\tau_{k}^{u}}},Z_{\tau_{k}^{u}},H_{\tau_{k}^{u}}}\left[P_{T}^{\tilde{u}_{\tau_{k}^{u}}} - \frac{\gamma}{2} (P_{T}^{\tilde{u}_{\tau_{k}^{u}}})^{2} + \gamma P_{T}^{\tilde{u}_{\tau_{k}^{u}}} D(H_{T}) - D(H_{T}) - \frac{\gamma}{2} D(H_{T})^{2}\right].
	\end{align*}
\end{itemize}
Using these two observations, we rewrite
\begin{align*}
F&_{u^{\star}}(t,p,z,h) + \mathbb{E}_{t,p,z,h} \left[\int_{t}^{\tau_{k}^{u}} \mathcal{A}^{u} F_{u^{\star}}(s,P_{s}^{u},Z_{s},H_{s})\ ds \right] \\
&= \mathbb{E}_{t,p,z,h}[F_{u^{\star}}(\tau_{k}^{u},P_{\tau_{k}^{u}}^{\tilde{u}_{\tau_{k}^{u}}},Z_{\tau_{k}^{u}},H_{\tau_{k}^{u}})]\\
&= \mathbb{E}_{t,p,z,h}\left[\mathbb{E}_{\tau_{k}^{u},P_{\tau_{k}^{u}}^{\tilde{u}_{\tau_{k}^{u}}},Z_{\tau_{k}^{u}},H_{\tau_{k}^{u}}}\left[P_{T}^{\tilde{u}_{\tau_{k}^{u}}} - \frac{\gamma}{2} (P_{T}^{\tilde{u}_{\tau_{k}^{u}}})^{2} + \gamma P_{T}^{\tilde{u}_{\tau_{k}^{u}}} D(H_{T}) - D(H_{T}) - \frac{\gamma}{2} D(H_{T})^{2}\right] \right] \\
&=\mathbb{E}_{t,p,z,h}\left[P_{T}^{\tilde{u}_{\tau_{k}^{u}}} - \frac{\gamma}{2} (P_{T}^{\tilde{u}_{\tau_{k}^{u}}})^{2} + \gamma P_{T}^{\tilde{u}_{\tau_{k}^{u}}} D(H_{T}) - D(H_{T}) - \frac{\gamma}{2} D(H_{T})^{2}\right] \\
&=F_{\tilde{u}_{\tau_{k}^{u}}}(t,p,z,h).
\end{align*}
Finally, we use \emph{Dominated Convergence} and \emph{Lebesgue's Differentiation Theorem} similarly as in the proof of Lemma \ref{lem_kol} to deduce that
\begin{align*}
&\lim_{k \to \infty} \frac{F_{u^{\star}}(t,p,z,h)-F_{\tilde{u}_{\tau_{k}^{u}}}(t,p,z,h)}{a_{k}} \\
&= \lim_{k \to \infty} \frac{-\mathbb{E}_{t,p,z,h} \left[\mathbbm{1}_{A_{k}}(\omega)\int_{t}^{t+a_{k}} \mathcal{A}^{u} F_{u^{\star}}(s,P_{s}^{u},Z_{s},H_{s})\ ds \right]}{a_{k}} \\
&= \mathbb{E}_{t,p,z,h}\left[\lim_{k \to \infty} \mathbbm{1}_{A_{k}}(\omega) \frac{-\int_{t}^{t+a_{k}} \mathcal{A}^{u} F_{u^{\star}}(s,P_{s}^{u},Z_{s},H_{s})\ ds}{a_{k}}\right]\\
&= - \mathcal{A}_{t,p,z,h}^{u^{\star}}F_{u^{\star}}(t,p,z,h),
\end{align*}
which is what we have set out to prove.
\end{proof}
\begin{lemma}
Consider an equilibrium control $u^{\star}$, the control $\tilde{u}_{\tau_{k}^{u}}$ given by \eqref{eq:aux_stopping} and the function $g_{u}$ defined by \eqref{eq:aux_fct}. Then we have
\begin{equation}
\lim_{k \to \infty} \frac{g_{u^{\star}}(t,p,z,h)^{2}-g_{\tilde{u}_{\tau_{k}^{u}}}(t,p,z,h)^{2}}{a_{k}} = - 2\ g_{u^{\star}}(t,p,z,h) \ \mathcal{A}^{u^{\star}} g_{u^{\star}}(t,p,z,h).
\end{equation}
\end{lemma}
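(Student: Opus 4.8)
The plan is to reduce the claim to the preceding lemma by factoring the difference of squares and controlling each factor separately. First I would write
$$g_{u^{\star}}(t,p,z,h)^{2}-g_{\tilde{u}_{\tau_{k}^{u}}}(t,p,z,h)^{2} = \left(g_{u^{\star}}(t,p,z,h)-g_{\tilde{u}_{\tau_{k}^{u}}}(t,p,z,h)\right)\left(g_{u^{\star}}(t,p,z,h)+g_{\tilde{u}_{\tau_{k}^{u}}}(t,p,z,h)\right),$$
so that after dividing by $a_{k}$ the quotient becomes the product of the difference quotient $\frac{g_{u^{\star}}-g_{\tilde{u}_{\tau_{k}^{u}}}}{a_{k}}$ and the sum $g_{u^{\star}}+g_{\tilde{u}_{\tau_{k}^{u}}}$. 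It then suffices to identify the limit of each factor and to invoke that the limit of a product of two convergent sequences equals the product of their limits.

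For the difference quotient I would run the identical argument as in the proof of the preceding lemma, only with $g_{u^{\star}}$ in place of $F_{u^{\star}}$. Applying \emph{Dynkin's Formula} to $g_{u^{\star}}$ along the control $\tilde{u}_{\tau_{k}^{u}}$, using that $\tilde{u}_{\tau_{k}^{u}}$ coincides with the constant $u$ on $[t,\tau_{k}^{u})$ and with $u^{\star}$ afterwards, together with the \emph{Tower Property}, yields
$$g_{\tilde{u}_{\tau_{k}^{u}}}(t,p,z,h)=g_{u^{\star}}(t,p,z,h)+\mathbb{E}_{t,p,z,h}\left[\int_{t}^{\tau_{k}^{u}}\mathcal{A}^{u}g_{u^{\star}}(s,P_{s}^{u},Z_{s},H_{s})\,ds\right].$$
Splitting the expectation over $A_{k}$ and $A_{k}^{c}$, discarding the $A_{k}^{c}$ contribution by Lemma \ref{lem_set} (\emph{Borel--Cantelli}), and then passing to the limit via \emph{Dominated Convergence} and \emph{Lebesgue's Differentiation Theorem} exactly as before, I obtain $\lim_{k\to\infty}\frac{g_{u^{\star}}-g_{\tilde{u}_{\tau_{k}^{u}}}}{a_{k}}=-\mathcal{A}^{u^{\star}}g_{u^{\star}}(t,p,z,h)$.

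For the sum I would show $g_{\tilde{u}_{\tau_{k}^{u}}}(t,p,z,h)\to g_{u^{\star}}(t,p,z,h)$, whence $g_{u^{\star}}+g_{\tilde{u}_{\tau_{k}^{u}}}\to 2g_{u^{\star}}$. This is immediate from the same Dynkin representation: the correction term $\mathbb{E}_{t,p,z,h}[\int_{t}^{\tau_{k}^{u}}\mathcal{A}^{u}g_{u^{\star}}\,ds]$ vanishes as $k\to\infty$ because the integrand is bounded on $[t,\tau_{k}^{u}]$ (the single point $\tau_{k}^{u}$ carrying a possible large jump has Lebesgue measure zero, cf. the proof of Lemma \ref{lem_kol}) while $\tau_{k}^{u}\downarrow t$ almost surely, so Dominated Convergence forces the integral to $0$. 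Multiplying the two limits then gives $(-\mathcal{A}^{u^{\star}}g_{u^{\star}})(2g_{u^{\star}})=-2\,g_{u^{\star}}\,\mathcal{A}^{u^{\star}}g_{u^{\star}}$, which is the asserted identity.

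The computations are routine once the preceding lemma is in hand; the only point requiring genuine care is the passage to the limit in the difference quotient, specifically the interchange of limit and expectation, which rests on the uniform boundedness of the integrand and the Borel--Cantelli control of $A_{k}^{c}$, together with the verification that both factors converge to finite limits so that splitting the limit of the product into the product of limits is legitimate.
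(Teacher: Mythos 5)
Your proposal is correct and follows essentially the same route as the paper: both rest on the Dynkin/Tower-Property identity $g_{\tilde{u}_{\tau_{k}^{u}}}(t,p,z,h)=g_{u^{\star}}(t,p,z,h)+\mathbb{E}_{t,p,z,h}\bigl[\int_{t}^{\tau_{k}^{u}}\mathcal{A}^{u}g_{u^{\star}}(s,P_{s}^{u},Z_{s},H_{s})\,ds\bigr]$ and the same Borel--Cantelli/Dominated Convergence/Lebesgue differentiation limit passage. The only cosmetic difference is that you factor the difference of squares and take the product of the two limits, whereas the paper expands $(g_{u^{\star}}+I_{k})^{2}-g_{u^{\star}}^{2}=2g_{u^{\star}}I_{k}+I_{k}^{2}$ and notes the quadratic remainder is $o(a_{k})$ --- these are the same computation.
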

\begin{proof}
Using similar techniques as before, the following calculation yields
\begingroup
\allowdisplaybreaks
\begin{align*}
&\lim_{k \to \infty} \frac{g_{u^{\star}}(t,p,z,h)^{2}-g_{\tilde{u}_{\tau_{k}^{u}}}(t,p,z,h)^{2}}{a_{k}} \\
&= - \lim_{k \to \infty} \frac{g_{\tilde{u}_{\tau_{k}^{u}}}(t,p,z,h)^{2}-g_{u^{\star}}(t,p,z,h)^{2}}{a_{k}} \\
&=- \lim_{k \to \infty} \frac{\left(\mathbb{E}_{t,p,z,h}[P_{T}^{\tilde{u}_{\tau_{k}^{u}}}-D(H_{T})]\right)^{2}-g_{u^{\star}}(t,p,z,h)^{2}}{a_{k}} \\
&\stackrel{\text{T.P.}}{=} - \lim_{k \to \infty} \frac{\left(\mathbb{E}_{t,p,z,h}\left[\mathbb{E}_{\tau_{k},P_{\tau_{k}}^{\tilde{u}_{\tau_{k}^{u}}},Z_{\tau_{k}},H_{\tau_{k}}}\left[P_{T}^{\tilde{u}_{\tau_{k}^{u}}}-D(H_{T})\right]\right]\right)^{2}-g_{u^{\star}}(t,p,z,h)^{2}}{a_{k}} \\
&= - \lim_{k \to \infty} \frac{\left(\mathbb{E}_{t,p,z,h}\left[\mathbb{E}_{\tau_{k},P_{\tau_{k}}^{\tilde{u}_{\tau_{k}^{u}}},Z_{\tau_{k}},H_{\tau_{k}}}\left[P_{T}^{u^{\star}}-D(H_{T})\right]\right]\right)^{2}-g_{u^{\star}}(t,p,z,h)^{2}}{a_{k}}\\
& = - \lim_{k \to \infty} \frac{\left(\mathbb{E}_{t,p,z,h}\left[g_{u^{\star}}(\tau_{k},P_{\tau_{k}}^{\tilde{u}_{\tau_{k}^{u}}},Z_{\tau_{k}},H_{\tau_{k}})\right]\right)^{2}-g_{u^{\star}}(t,p,z,h)^{2}}{a_{k}} \\
&= - \lim_{k \to \infty} \frac{\left(g_{u^{\star}}(t,p,z,h) + \mathbb{E}_{t,p,z,h} \left[\int_{t}^{\tau_{k}^{u}} \mathcal{A}^{u} g_{u^{\star}}(s,P_{s}^{u},Z_{s},H_{s})\ ds \right]\right)^{2}-g_{u^{\star}}(t,p,z,h)^{2}}{a_{k}} \\
&= - 2\ g_{u^{\star}}(t,p,z,h) \ \mathcal{A}^{u^{\star}} g_{u^{\star}}(t,p,z,h),
\end{align*}
\endgroup
where the abbreviation T.P. stands for \emph{Tower Property}.
\end{proof}

\begin{lemma}
Consider an equilibrium control $u^{\star}$, the control $\tilde{u}_{\tau_{k}^{u}}$ given by \eqref{eq:aux_stopping} and the value function $J$ specified in Definition \ref{def_value_fct}. Then it holds that
$$- \lim_{k \to \infty} \frac{J(t,p,z,h,u^{\star}) - J(t,p,z,h,\tilde{u}_{\tau_{k}^{u}})}{a_{k}} = \mathcal{A}^{u^{\star}}V(t,p,z,h)+ \mathcal{G}^{u^{\star}}g_{u^{\star}}(t,p,z,h) .$$
\label{lem_inequ}
\end{lemma}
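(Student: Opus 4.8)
The plan is to exploit the additive decomposition $J = F_{u} + \frac{\gamma}{2}g_{u}^{2}$ supplied by the second equality in Definition \ref{def_value_fct}. This splits the difference quotient for $J$ into an $F$-part and a $g^{2}$-part, each of which is already handled by the two immediately preceding lemmas, so that the statement reduces to a purely algebraic identity relating the operators $\mathcal{A}^{u^{\star}}$, $\mathcal{L}^{u^{\star}}$ and $\mathcal{G}^{u^{\star}}$.

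First I would write, using $V(t,p,z,h) = J(t,p,z,h,u^{\star})$ and Definition \ref{def_value_fct},
\[
J(t,p,z,h,u^{\star}) - J(t,p,z,h,\tilde{u}_{\tau_{k}^{u}}) = \bigl(F_{u^{\star}} - F_{\tilde{u}_{\tau_{k}^{u}}}\bigr) + \frac{\gamma}{2}\bigl(g_{u^{\star}}^{2} - g_{\tilde{u}_{\tau_{k}^{u}}}^{2}\bigr),
\]
all arguments being $(t,p,z,h)$. Dividing by $a_{k}$ and letting $k \to \infty$, the two preceding lemmas evaluate the two summands directly (and guarantee existence of the limit), yielding
\[
-\lim_{k \to \infty} \frac{J(t,p,z,h,u^{\star}) - J(t,p,z,h,\tilde{u}_{\tau_{k}^{u}})}{a_{k}} = \mathcal{A}^{u^{\star}}F_{u^{\star}}(t,p,z,h) + \gamma\, g_{u^{\star}}(t,p,z,h)\, \mathcal{A}^{u^{\star}}g_{u^{\star}}(t,p,z,h).
\]

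It then remains to check the operator identity $\mathcal{A}^{u^{\star}}V + \mathcal{G}^{u^{\star}}g_{u^{\star}} = \mathcal{A}^{u^{\star}}F_{u^{\star}} + \gamma\, g_{u^{\star}}\,\mathcal{A}^{u^{\star}}g_{u^{\star}}$. For the left-hand term I would invoke linearity of the generator together with $V = F_{u^{\star}} + \frac{\gamma}{2}g_{u^{\star}}^{2}$ to get $\mathcal{A}^{u^{\star}}V = \mathcal{A}^{u^{\star}}F_{u^{\star}} + \frac{\gamma}{2}\mathcal{A}^{u^{\star}}(g_{u^{\star}}^{2})$. For $\mathcal{G}^{u^{\star}}g_{u^{\star}}$ I would unfold its definition and substitute $\mathcal{L}^{u^{\star}} = \mathcal{A}^{u^{\star}} - \partial_{t}$, i.e. $\mathcal{L}^{u^{\star}}g_{u^{\star}} = \mathcal{A}^{u^{\star}}g_{u^{\star}} - \dot{g}_{u^{\star}}$ and $\mathcal{L}^{u^{\star}}(g_{u^{\star}}^{2}) = \mathcal{A}^{u^{\star}}(g_{u^{\star}}^{2}) - 2 g_{u^{\star}}\dot{g}_{u^{\star}}$. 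The two resulting contributions $\gamma g_{u^{\star}}\dot{g}_{u^{\star}}$ cancel, leaving
\[
\mathcal{G}^{u^{\star}}g_{u^{\star}} = \gamma\, g_{u^{\star}}\, \mathcal{A}^{u^{\star}}g_{u^{\star}} - \frac{\gamma}{2}\mathcal{A}^{u^{\star}}(g_{u^{\star}}^{2}).
\]
Adding the two displays makes the $\frac{\gamma}{2}\mathcal{A}^{u^{\star}}(g_{u^{\star}}^{2})$ terms cancel, producing exactly the right-hand side above and completing the proof.

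The genuinely routine part is this two-line algebra; the one place that deserves care is precisely the cancellation of the time-derivative terms, since it is tempting to conflate $\mathcal{L}$ with $\mathcal{A}$, and since \emph{no} product rule is available for the jump generator $\mathcal{A}^{u^{\star}}$ itself. The only product rule I would use is the elementary one for the ordinary time derivative, $\partial_{t}(g_{u^{\star}}^{2}) = 2 g_{u^{\star}}\dot{g}_{u^{\star}}$, whereas $\mathcal{A}^{u^{\star}}$ enters solely through its linearity (its application to $F_{u^{\star}}$ and to $g_{u^{\star}}^{2}$ being legitimate by the $C^{1,2,2,2}$ regularity assumed for a regular equilibrium). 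I would also remark that Lemma \ref{lem_kol}, i.e. $\mathcal{A}^{u^{\star}}g_{u^{\star}} = 0$, is not strictly needed for the identity itself; it merely collapses both sides to $\mathcal{A}^{u^{\star}}F_{u^{\star}}$, which serves as a consistency check and is convenient for the subsequent recovery of the first row of the extended HJB system.
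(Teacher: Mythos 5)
Your proposal is correct and follows essentially the same route as the paper: split $J$ via $J=F_{u}+\tfrac{\gamma}{2}g_{u}^{2}$, invoke the two preceding lemmas to evaluate the limit as $\mathcal{A}^{u^{\star}}F_{u^{\star}}+\gamma g_{u^{\star}}\mathcal{A}^{u^{\star}}g_{u^{\star}}$, and then verify by linearity of $\mathcal{A}^{u^{\star}}$ and the cancellation of the $\gamma g_{u^{\star}}\dot{g}_{u^{\star}}$ terms that this equals $\mathcal{A}^{u^{\star}}V+\mathcal{G}^{u^{\star}}g_{u^{\star}}$. The only cosmetic difference is the direction of the final algebra (the paper regroups forward into $\dot V+\mathcal{L}^{u^{\star}}V+\mathcal{G}^{u^{\star}}g_{u^{\star}}$, you expand the target backward), and your observation that Lemma \ref{lem_kol} is not needed here is accurate.
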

\begin{proof}
\begin{align*}
 &- \lim_{k \to \infty} \frac{J(t,p,z,h,u^{\star}) - J(t,p,z,h,\tilde{u}_{\tau_{k}^{u}})}{a_{k}} \\
&= - \lim_{k \to \infty} \frac{F_{u^{\star}}(t,p,z,h)-F_{\tilde{u}_{\tau_{k}^{u}}}(t,p,z,h) + \frac{\gamma}{2} \left(g_{u^{\star}}(t,p,z,h)^{2}-g_{\tilde{u}_{\tau_{k}^{u}}}(t,p,z,h)^{2} \right)}{a_{k}} \\
&= \mathcal{A}^{u^{\star}} F_{u^{\star}}(t,p,z,h) + \gamma g_{u^{\star}}(t,p,z,h) \mathcal{A}^{u^{\star}} g_{u^{\star}}(t,p,z,h) \\
&= \dot{F}_{u^{\star}}(t,p,z,h) + \mathcal{L}^{u^{\star}} F_{u^{\star}}(t,p,z,h) + \gamma g_{u^{\star}}(t,p,z,h) \left(\dot{g}_{u^{\star}}(t,p,z,h) + \mathcal{L}^{u^{\star}} g_{u^{\star}}(t,p,z,h)\right) \\
&= \underbrace{\dot{F}_{u^{\star}}(t,p,z,h) + \gamma g_{u^{\star}}(t,p,z,h) \dot{g}_{u^{\star}}(t,p,z,h)}_{=\dot{V}(t,p,z,h)} + \underbrace{\mathcal{L}^{u^{\star}} F_{u^{\star}}(t,p,z,h) + \frac{\gamma}{2} \mathcal{L}^{u^{\star}}g_{u^{\star}}(t,p,z,h)^{2}}_{=\mathcal{L}^{u^{\star}}V(t,p,z,h)} \\
& + \underbrace{\gamma g_{u^{\star}}(t,p,z,h) \mathcal{L}^{u^{\star}} g_{u^{\star}}(t,p,z,h) -  \frac{\gamma}{2} \mathcal{L}^{u^{\star}}g_{u^{\star}}(t,p,z,h)^{2}}_{= \mathcal{G}^{u^{\star}}g_{u^{\star}}(t,p,z,h)}.
\end{align*}
\end{proof}

\begin{proof}[Proof of Theorem \ref{thm_nec}] The proof is conducted in four steps:\\

\noindent
\textsc{Step $1$:} We show the boundary conditions.\\
The boundary conditions $V(T,p,z,h) = p-D(h)$ and $g_{u^{\star}}(T,p,z,h) = p-D(h)$ are met by the equilibrium control law $u^{\star}$, which follows from Definition \ref{aux_fct} and Definition \ref{def_equ}.\\

\noindent
\textsc{Step $2$:} Observe that $\mathcal{A}^{u^{\star}}g_{u^{\star}}(t,p,z,h) = 0$ is stated by Lemma \ref{lem_kol}.\\

\noindent
\textsc{Step $3$:} We show that $\mathcal{A}^{u^{\star}}V(t,p,z,h) + \mathcal{G}^{u^{\star}}g_{u^{\star}}(t,p,z,h) = 0.$ \\
Recall from Definition \ref{def_value_fct} and Definition \ref{def_equ} that $V(t,p,z,h) = F_{u^{\star}}(t,p,z,h) + \frac{\gamma}{2} g_{u^{\star}}^{2}(t,p,z,h).$ Following a similar line of reasoning as in the proof of Lemma \ref{lem_kol}, one can show that $\mathcal{A}^{u^{\star}}F_{u^{\star}}(t,p,z,h) = 0.$ So we have
\begin{align*}
\mathcal{A}&^{u^{\star}}V(t,p,z,h) + \mathcal{G}^{u^{\star}}g_{u^{\star}}(t,p,z,h) \\
&= \frac{\gamma}{2} \mathcal{A}^{u^{\star}}g_{u^{\star}}^{2}(t,p,z,h) + \gamma g_{u^{\star}}(t,p,z,h) \mathcal{L}^{u^{\star}} g_{u^{\star}}(t,p,z,h) - \frac{\gamma}{2} \mathcal{L}^{u^{\star}} g_{u^{\star}}^{2}(t,p,z,h) \\
&= \frac{\gamma}{2} \mathcal{A}^{u^{\star}}g_{u^{\star}}^{2}(t,p,z,h) + \gamma g_{u^{\star}}(t,p,z,h) (\mathcal{A}^{u^{\star}} g_{u^{\star}}(t,p,z,h) - \dot{g}_{u^{\star}}(t,p,z,h)) \\
&\ \ - \frac{\gamma}{2} (\mathcal{A}^{u^{\star}} g_{u^{\star}}^{2}(t,p,z,h) - 2g_{u^{\star}}(t,p,z,h) \dot{g}_{u^{\star}}(t,p,z,h)) \\
&= \gamma g_{u^{\star}}(t,p,z,h)\ \mathcal{A}^{u^{\star}} g_{u^{\star}}(t,p,z,h) =0,
\end{align*}
whereby the last equality follows from \textsc{Step 2}.\\

 So far, we have shown that the regular equilibrium $(u^{\star},V(t,p,z,h),F_{u^{\star}}(t,p,z,h),\\ g_{u^{\star}}(t,p,z,h))$ is a prospective solution of the extended HJB-system \eqref{eq:ext_hjb}. Therefore we are left showing that $u^{\star}$ is indeed maximal in the first row of \eqref{eq:ext_hjb}.\\

\noindent
\textsc{Step 4:} We show that $0 \geq \mathcal{A}^{u^{\star}}V(t,p,z,h) + \mathcal{G}^{u^{\star}}g_{u^{\star}}(t,p,z,h).$\\
In the following calculation, the first inequality follows by Definition of the equilibrium control $u^{\star}$, cf. Definition \ref{def_equ}.
Observe that
\begingroup
\allowdisplaybreaks
\begin{align*}
0 & \geq - \liminf_{c \searrow 0} \frac{J(t,p,z,h,u^{\star})-J(t,p,z,h,u_{t+c})}{c} \\
&= - \liminf_{c \searrow 0} \Bigg(\frac{J(t,p,z,h,u^{\star})-J(t,p,z,h,u_{t+c})}{c} \mathbbm{1}_{A_{k}}(\omega) \\
&\ + \frac{J(t,p,z,h,u^{\star})-J(t,p,z,h,u_{t+c})}{c} \mathbbm{1}_{A_{k}^{c}}(\omega) \Bigg) \\
&= - \liminf_{k \to \infty} \frac{J(t,p,z,h,u^{\star})-J(t,p,z,h,u_{t+a_{k}})}{a_{k}} \mathbbm{1}_{A_{k}}(\omega)\\
&\ - \liminf_{k \to \infty} \frac{J(t,p,z,h,u^{\star})-J(t,p,z,h,u_{t+c_{k}})}{c_{k}} \mathbbm{1}_{A_{k}^{c}}(\omega).
\end{align*}
\endgroup
Note that Lemma \ref{lem_set} implies that 
$$\liminf_{k \to \infty} \frac{J(t,p,z,h,u^{\star})-J(t,p,z,h,u_{t+c_{k}})}{c_{k}} \mathbbm{1}_{A_{k}^{c}}(\omega) = 0.$$
Consequently, we deduce from \eqref{eq:aux_stopping3} that

\begin{align*}
&- \liminf_{k \to \infty} \frac{J(t,p,z,h,u^{\star})-J(t,p,z,h,u_{t+a_{k}})}{a_{k}} \mathbbm{1}_{A_{k}}(\omega) \\
&= - \liminf_{k \to \infty} \frac{J(t,p,z,h,u^{\star}) - J(t,p,z,h,\tilde{u}_{\tau_{k}^{u}})}{a_{k}},
\end{align*} 
and Lemma \ref{lem_inequ} concludes the proof.
\end{proof}

\section{Explicit solution}

In the sequel, let $D \equiv 0$, i.e., we consider an investor aiming at receiving a high expected payoff while keeping its variance low. In this special case the extended HJB-system \eqref{eq:ext_hjb} admits explicit closed-form solutions. Some notational definitions are in order:
\begin{itemize}
	\item $\sigma_{S} := (\sigma_{ij})_{1 \leq i \leq m, \\ 1\leq j \leq d},$ i.e., $\sigma_{S} \in \mathbb{R}^{m \times d}.$
	\item $\tilde{\sigma}_{S} := \sigma_{S} \sigma_{S}^{\intercal},$ i.e., $\tilde{\sigma}_{S} \in \mathbb{R}^{m \times m}.$ Note that $\tilde{\sigma}_{S}$ is a symmetric matrix.  
	\item $\tilde{\sigma}_{S^{i}} := (\tilde{\sigma}_{S_{i1}},\dots,\tilde{\sigma}_{S_{im}})^{\intercal}$, i.e., $\tilde{\sigma}_{S^{i}} \in \mathbb{R}^{m}.$
	\item $\sigma_{S^{i}} := (\sigma_{i1},\dots,\sigma_{id})^{\intercal},$ i.e., $\sigma_{S}^{i} \in \mathbb{R}^{d}$ for every $i \in \{1,\dots,m\}.$
	\item $\rho_{S} := (\rho_{ij})_{1 \leq i \leq m, 1 \leq j \leq k},$ i.e., $\rho_{S} \in \mathbb{R}^{m \times k}.$
	\item $\tilde{\rho}_{S} := \rho_{S} \rho_{S}^{\intercal}.$
	\item $\rho_{S^{i}} := (\rho_{i1},\dots,\rho_{ik})^{\intercal},$ i.e., $\rho_{S^{i}} \in \mathbb{R}^{k}.$
	\item $\mu := (\mu_{1},\dots,\mu_{n})^{\intercal},$ i.e., $\mu \in \mathbb{R}^{n}.$
	\item $\tilde{\mu} := \mu - \textbf{1}r.$
	\item $\Delta P_{t}^{u}(x,\bar{x}) := u_{S}^{\intercal}(t) \rho_{S} x + u_{Y}(t) \eta_{L}(t,\lambda_{t-},Y_{t-},\bar{x}),$ i.e., $\Delta P^{u}_{t}(x,\bar{x}) \in \mathbb{R}.$
	\item $\Delta Z_{t}(x,\bar{x}) := (\text{Diag}(S_{t-})\rho_{S}x, \tilde{\sigma}_{\lambda}(t,\lambda_{t-},\bar{x}), Y_{t-} \eta_{L}(t,\lambda_{t-},Y_{t-})),$ i.e., $\Delta Z_{t}(x,\bar{x}) \in \mathbb{R}^{m+2}.$\\
	\item $\mu_{i}(t,Z_{t}) := \begin{cases} \mu_{i}S_{t}^{i},\ &i=1,\dots,m,\\ \mu_{\lambda}(t,\lambda_{t}), \ &i=m+1, \\ (r+\nu_{L}(t,\lambda_{t},Y_{t}))Y_{t}, \ &i=m+2, \end{cases}$\\
	i.e., $\mu(t,Z_{t}) = (\mu_{1}(t,Z_{t}),\dots,\mu_{m+2}(t,Z_{t})) \in \mathbb{R}^{m+2}.$\\
	\item $\sigma_{ij}(t,Z_{t}) := \begin{cases} \sigma_{ij}S_{t}^{i}, \ & 1\leq i\leq m, 1\leq j, \leq d, \\ \sigma_{\lambda}(t,\lambda_{t}), & i=m+1, j=d+1, \\ \sigma_{L}(t,\lambda_{t},Y_{t})Y_{t}, & i=m+2, j=d+1, \\ 0, & \text{else,} \end{cases}$ \\
	i.e., $\sigma(t,Z_{t}) \in \mathbb{R}^{(m+2) \times (d+1)}.$\\
	\item $Q^{u}_{i}(t,Z_{t}) := \begin{cases}S_{t}^{i} u_{S}^{\intercal}(t) \sigma_{S} \sigma_{S^{i}}, & 1 \leq i \leq m, \\ u_{Y}(t) \sigma_{\lambda}(t,\lambda_{t}) \sigma_{L}(t,\lambda_{t},Y_{t}), & i=m+1, \\ u_{Y}(t) \sigma_{L}^{2}(t,\lambda_{t},Y_{t}) Y_{t}, & i=m+2, \end{cases}$\\ i.e., $Q^{u}(t,Z_{t})=(Q^{u}_{1}(t,Z_{t}),\dots,Q^{u}_{m+2}(t,Z_{t})) \in \mathbb{R}^{m+2}.$
\end{itemize}

 Inspired by \cite{basak} and \cite{bm}, we make the following \emph{Ansatz}:
\begin{align}
\begin{split}
V(t,p,z) &= A(t) p + B(t,z), \\
g(t,p,z) &= a(t) p + b(t,z).
\end{split}
\label{eq:ansatz}
\end{align}
The goal is finding the functions $A,a,B,b$ as well as the equilibrium control laws of feedback type. Clearly, the functions $A,a,B,b$ are assumed to satisfy the necessary regularity conditions and the limits induced by applying the operators $\mathcal{A}, \mathcal{L}$ and $\mathcal{G}$ are assumed to exist accordingly. Consider the first line in the system \eqref{eq:ext_hjb} and define
$$\Xi(u_{S}(t), u_{Y}(t)) := \mathcal{L}^{u} V(t,p,z) + \mathcal{G}^{u}g(t,p,z).$$
\noindent
Omitting details at this stage, an application of It\^{o}'s formula for jump-diffusions yields 
\begin{align}
\begin{split}
&\Xi(u_{S}(t), u_{Y}(t)) \\
&= A(t) (pr+\tilde{\mu}^{\intercal} u_{S}(t) + u_{Y}(t) \nu_{L}(t,\lambda_{t},Y_{t})) + \nabla_{Z}B(t,Z_{t})^{\intercal} \mu(t,Z_{t})\\
&\ +  \frac{1}{2}\ \tr\Big(\sigma(t,Z_{t})^{\intercal} H_{Z}(B(t,Z_{t})) \sigma(t,Z_{t})  \Big)  - \frac{\gamma}{2}a(t)^{2} (u_{S}^{\intercal}(t) \tilde{\sigma}_{S} u_{S}(t) + u_{Y}^{2}(t) \sigma_{L}^{2}(t,\lambda_{t},Y_{t})) \\ & 
\ -\frac{\gamma}{2} \tr \left(\sigma(t,Z_{t})^{\intercal} \nabla_{Z}b(t,Z_{t})^{\intercal} \nabla_{Z}b(t,Z_{t}) \sigma(t,Z_{t}) \right) - \gamma a(t) \nabla_{Z}b(t,Z_{t})^{\intercal} Q^{u}(t,Z_{t}) \\
&\ + \int_{\mathbb{R}^{k+1} \setminus \{0\}} \left( B(t,Z_{t}+\Delta Z_{t}(x,\bar{x})) - B(t,Z_{t}) - (\Delta Z_{t}(x,\bar{x}))^{\intercal} \nabla_{Z}B(t,Z_{t})\right)\ \vartheta_{X,\bar{X}}(dx,d\bar{x}) \\
&\ - \frac{\gamma}{2} \int_{\mathbb{R}^{k+1} \setminus \{0\}} \left(a(t) \Delta P_{t}^{u}(x,\bar{x}) + b(t,Z_{t}+\Delta Z_{t}(x,\bar{x})) - b(t,Z_{t}) \right)^{2}\ \vartheta_{X,\bar{X}}(dx,d\bar{x}).
\end{split}
\label{eq:first_line}
\end{align}
Note that the maximization of \eqref{eq:first_line} w.r.t. $u_{S}(t)$ and $u_{Y}(t)$ is a static optimization problem in $m+1$ variables, so we solve the corresponding first order conditions (FOC). First, observe that
\begin{align*}
\frac{\partial u_{S}^{\intercal}(t) \tilde{\sigma}_{S} u_{S}(t)}{\partial u_{S^{i}}(t)} &= 2 \sum_{j=1}^{m} \tilde{\sigma}_{S_{ij}} u_{S^{j}}(t) = 2 \tilde{\sigma}_{S^{i}}^{\intercal} u_{S}(t) , \\
\frac{\partial u_{S}^{\intercal}(t) \rho_{S} x}{\partial u_{S^{i}}(t)} &= \sum_{j=1}^{k} \rho_{ij} x^{j} = \rho_{S^{i}}^{\intercal} x.
\end{align*}
For arbitrary $i \in \{1,\dots,m\}$, we consider the following FOC. Observe that the interchange of differentiation and integration is justified by our assumptions.
\begingroup
\allowdisplaybreaks
\begin{align*}
&\frac{\partial \Xi}{\partial u_{S^{i}}(t)}\\
 &= A(t)\tilde{\mu}_{i} - \gamma a(t)^{2} \tilde{\sigma}_{S^{i}}^{\intercal} u_{S}(t) -\gamma a(t) \sigma_{S^{i}}^{\intercal} \sum_{j=1}^{m} b_{z_{j}}(t,Z_{t}) S_{t}^{j} \sigma_{S^{j}} \\
&\ - \gamma a(t) \int_{\mathbb{R}^{k+1} \setminus \{0\}} \big(a(t) \Delta P_{t}^{u}(x,\bar{x}) + b(t,Z_{t}+\Delta Z_{t}(x,\bar{x})) - b(t,Z_{t})\big)\rho_{S^{i}}^{\intercal}x \ \vartheta_{X,\bar{X}}(dx,d\bar{x}) \\
&= A(t)\tilde{\mu}_{i} - \gamma a(t)^{2} \tilde{\sigma}_{S^{i}}^{\intercal} u_{S}(t) -\gamma a(t) \sigma_{S^{i}}^{\intercal} \sum_{j=1}^{m} b_{z_{j}}(t,z) S_{t}^{j} \sigma_{S^{j}} \\
&\ - \gamma a(t) \int_{\mathbb{R}^{k} \setminus \{0\}} \big(a(t) u_{S}^{\intercal}(t) \rho_{S} x + b(t,Z_{t}+\Delta Z_{t}(x,0)) - b(t,Z_{t})\big)\rho_{S^{i}}^{\intercal}x \ \vartheta_{X}(dx) \equiv 0 \\
& \Leftrightarrow A(t)\tilde{\mu}_{i} - \gamma a(t) \sigma_{S^{i}}^{\intercal} \sum_{j=1}^{m} b_{z_{j}}(t,z) S_{t}^{j} \sigma_{S^{j}} \\
&\ - \gamma a(t) \int_{\mathbb{R}^{k} \setminus \{0\}} (b(t,Z_{t}+\Delta Z_{t}(x,0)) - b(t,Z_{t}))\ x^{\intercal} \ \vartheta_{X}(dx)\ \rho_{S^{i}} \\
& \stackrel{*}{=} \gamma a(t)^{2}\left( \tilde{\sigma}_{S^{i}}^{\intercal} +\xi\ \rho_{S^{i}}^{\intercal}  \ \rho_{S}^{\intercal} \right) u_{S}(t).
\end{align*}
\endgroup
We use the following abbreviations in the sequel:
\begin{align}
\xi &:= \int_{\mathbb{R}^{k} \setminus \{0\}} x x^{\intercal}\ \vartheta_{X}(dx) \notag, \\
\tilde{\eta}_{L}(t,\lambda_{t},Y_{t}) &:= \int_{\mathbb{R} \setminus \{0\}} \eta_{L}(t,\lambda_{t},Y_{t},\bar{x})^{2} \ \vartheta_{\bar{X}}(d\bar{x}) \notag, \\
b_{1}(t,Z_{t}) &:= \int_{\mathbb{R}^{k} \setminus \{0\}} (b(t,Z_{t}+\Delta Z_{t}(x,0)) - b(t,Z_{t}))\ x \ \vartheta_{X}(dx), \label{eq:b1}\\
b_{2}(t,Z_{t}) &:= \int_{\mathbb{R} \setminus \{0\}} (b(t,Z_{t}+\Delta Z_{t}(0,\bar{x})) - b(t,Z_{t}))\ \eta_{L}(t,\lambda_{t},Y_{t},\bar{x}) \ \vartheta_{\bar{X}}(d\bar{x}). \label{eq:b2}
\end{align}
Note that in the optimum an equality of type $*$ needs to hold for every $u_{S^{i}}(t),$ so using the just defined functions and matrix-vector notation, we see that the vector $u_{S}^{\star}(t)$ has to satisfy
\begin{align}
u_{S}^{\star}(t) &= \frac{(\tilde{\sigma}_{S} +\tilde{\rho}_{S} \xi)^{-1}}{\gamma a(t)^{2}}\ \Big( A(t)\tilde{\mu}-\gamma a(t) \sigma_{S} \sum_{j=1}^{m} b_{z_{j}}(t,Z_{t}) S_{t}^{j} \sigma_{S^{j}} - \gamma a(t) \rho_{S} b_{1}(t,Z_{t}) \Big),
\label{eq:uS_prelim}
\end{align}
where the symbol $\star$ indicates the optimality of the strategy. Next we compute
\begin{align*}
&\frac{\partial \Xi}{\partial u_{Y}(t)} \\
&= A(t) \nu_{L}(t,\lambda_{t},Y_{t}) - \gamma a(t)^{2} \sigma_{L}^{2}(t,\lambda_{t},Y_{t}) u_{Y}(t) - \gamma a(t)\big(b_{z_{m+1}}(t,Z_{t}) \sigma_{\lambda}(t,\lambda_{t}) \sigma_{L}(t,\lambda_{t},Y_{t})\\
&\ + b_{z_{m+2}}(t,Z_{t}) \sigma^{2}_{L}(t,\lambda_{t},Y_{t}) Y_{t} \big) \\
&\ - \gamma a(t) \int_{\mathbb{R} \setminus \{0\}} \Big(a(t) u_{Y}(t) \eta_{L}(t,\lambda_{t},Y_{t},\bar{x}) + b(t,Z_{t}+\Delta Z_{t}(0,\bar{x}))-b(t,Z_{t})\Big)\eta_{L}(t,\lambda_{t},Y_{t},\bar{x})\ \vartheta_{\bar{X}}(d\bar{x})   \\
& = A(t) \nu_{L}(t,\lambda_{t},Y_{t}) - \gamma a(t)^{2} \sigma_{L}^{2}(t,\lambda_{t},Y_{t}) u_{Y}(t) - \gamma a(t)\big(b_{z_{m+1}}(t,Z_{t}) \sigma_{\lambda}(t,\lambda_{t}) \sigma_{L}(t,\lambda_{t},Y_{t})\\
&\ + b_{z_{m+2}}(t,Z_{t}) \sigma^{2}_{L}(t,\lambda_{t},Y_{t}) Y_{t} \big)- \gamma a(t)^{2} \tilde{\eta}_{L}(t,\lambda_{t},Y_{t}) u_{Y}(t) - \gamma a(t) b_{2}(t,Z_{t}) \equiv 0 \\
 &\Leftrightarrow u_{Y}^{\star}(t) = \frac{A(t) \nu_{L}(t,\lambda_{t},Y_{t}) -\gamma a(t) \big(b_{z_{m+1}}(t,Z_{t}) \sigma_{\lambda}(t,\lambda_{t}) \sigma_{L}(t,\lambda_{t},Y_{t}) + b_{z_{m+2}}(t,Z_{t}) \sigma^{2}_{L}(t,\lambda_{t},Y_{t}) Y_{t} \big)}{\gamma a(t)^{2} (\sigma_{L}^{2}(t,\lambda_{t},Y_{t}) + \tilde{\eta}_{L}(t,\lambda_{t},Y_{t}))} \\
&\ \ - \frac{\gamma a(t) b_{2}(t,Z_{t})}{\gamma a(t)^{2} (\sigma_{L}^{2}(t,\lambda_{t},Y_{t}) + \tilde{\eta}_{L}(t,\lambda_{t},Y_{t}))}.
\end{align*}

Observe that the optimal control does not depend on $p$.  We next plug $u^{\star}$ into \eqref{eq:first_line}. Then we can apply separation of variables to the first line of \eqref{eq:ext_hjb}. This leads to an ordinary differential equation (ODE) for $A$ and a PIDE for $B$. The ODE for $A$ is given by
\begin{align*}
\dot{A}(t) + A(t)r & =0, \\
A(T) & =1,
\end{align*}
and the solution is easily seen to be $A(t) = e^{r(T-t)}.$ The PIDE for $B$ is given by
\begin{align}
\begin{split}
&\dot{B}(t,Z_{t}) + A(t) \left(\tilde{\mu}^{\intercal} u_{S}^{\star}(t) + u_{Y}^{\star}(t) \nu_{L}(t,\lambda_{t},Y_{t}) \right) + \nabla_{Z}B(t,Z_{t})^{\intercal} \mu(t,Z_{t})\\
& +  \frac{1}{2}\ \tr\Big(\sigma(t,Z_{t})^{\intercal} H_{Z}(B(t,Z_{t})) \sigma(t,Z_{t})  \Big)  - \frac{\gamma}{2} \left(u_{S}^{\star}(t)^{\intercal} \tilde{\sigma}_{S} u_{S}^{\star}(t) + u_{Y}^{\star}(t)^{2} \sigma_{L}^{2}(t,\lambda_{t},Y_{t}) \right) a(t)^{2}\\
& -\frac{\gamma}{2} \tr \left(\sigma(t,Z_{t})^{\intercal} \nabla_{Z}b(t,Z_{t})^{\intercal} \nabla_{Z}b(t,Z_{t}) \sigma(t,Z_{t}) \right) - \gamma a(t) \nabla_{Z}b(t,Z_{t})^{\intercal} Q^{u^{\star}}(t,Z_{t}) \\
&+ \int_{\mathbb{R}^{k+1} \setminus \{0\}} \left( B(t,Z_{t}+\Delta Z_{t}(x,\bar{x})) - B(t,Z_{t}) - (\Delta Z_{t}(x,\bar{x}))^{\intercal} \nabla_{Z}B(t,Z_{t})\right)\ \vartheta_{X,\bar{X}}(dx,d\bar{x}) \\
&- \frac{\gamma}{2} \int_{\mathbb{R}^{k+1} \setminus \{0\}} \left(a(t) \Delta P_{t}^{u^{\star}}(x,\bar{x}) + b(t,Z_{t}+\Delta Z_{t}(x,\bar{x})) - b(t,Z_{t}) \right)^{2}\ \vartheta_{X,\bar{X}}(dx,d\bar{x}) =0.
\end{split}
\label{eq:static}
\end{align}
Note that $\Delta P_{t}^{u^{\star}}(x,\bar{x})$ in the last line of \eqref{eq:static} means the jump of the portfolio process where the investor is allocating optimally. For solving the latter PIDE, we need to find the functions $a$ and $b$. To do so, we use the third equation of the system \eqref{eq:ext_hjb} (for the special case $D \equiv 0$), namely $\mathcal{A}^{u^{\star}}g_{u^{\star}}(t,p,z) = 0.$ Following the \emph{Ansatz} $$g(t,p,z) = a(t)p + b(t,z),$$ we obtain

\begin{align}
\begin{split}
&\dot{a}(t)p + \dot{b}(t,Z_{t}) + a(t)(pr + \tilde{\mu}^{\intercal} u_{S}^{\star}(t) + \nu_{L}(t,\lambda_{t},Y_{t}) u_{Y}^{\star}(t)) + \nabla_{Z}b(t,Z_{t})^{\intercal} \mu(t,Z_{t})\\
& + \frac{1}{2}\ \tr\left( \sigma(t,Z_{t})^{\intercal} H_{Z}(b(t,Z_{t})) \sigma(t,Z_{t}) \right) \\
&+ \int_{\mathbb{R}^{k+1} \setminus \{0\}} b(t,Z_{t}+\Delta Z_{t}(x,\bar{x})) - b(t,Z_{t}) - (\Delta Z_{t}(x,\bar{x}))^{\intercal} \nabla_{Z}b(t,Z_{t}) \ \vartheta_{X,\bar{X}}(dx,d\bar{x}) =0,
\end{split}
\label{eq:pide}
\end{align}
with suitable boundary conditions for $a$ and $b$. Using separation of variables again, we find the ODE
\begin{align*}
\dot{a}(t) + a(t)r &=0, \\
a(T) &=1,
\end{align*}
leading to $a(t) = e^{r(T-t)}.$ Observe that $A(t)=a(t),$ so we can cancel some terms in the optimal strategies. 
Several further definitions are in order:

\begin{itemize}
	\item $\Theta_{1}^{\intercal} := \tilde{\mu}^{\intercal} (\tilde{\sigma}_{S} +\tilde{\rho}_{S} \xi)^{-1}$, i.e., $\Theta_{1}^{\intercal} \in \mathbb{R}^{m},$
	\item $\Theta_{2}(t,\lambda_{t},Y_{t}) := \frac{\nu_{L}(t,\lambda_{t},Y_{t})}{\sigma_{L}^{2}(t,\lambda_{t},Y_{t}) + \tilde{\eta}_{L}(t,\lambda_{t},Y_{t})},$
	\item $ C(t,\lambda_{t},Y_{t},x,\bar{x}) := \Theta_{1} \rho_{S} x + \Theta_{2}(t,\lambda_{t},Y_{t}) \eta_{L}(t,\lambda_{t},Y_{t},\bar{x}),$\\
	\item $\phi_{1_{i}}(t,Z_{t}) := \begin{cases} \Theta_{1} \sigma_{S} \sigma_{S^{i}} S_{t}^{i},\ &i=1,\dots,m,\\ \Theta_{2}(t,\lambda_{t},Y_{t}) \sigma_{\lambda}(t,\lambda_{t}) \sigma_{L}(t,\lambda_{t},Y_{t}) , \ &i=m+1, \\ \Theta_{2}(t,\lambda_{t},Y_{t}) \sigma_{L}^{2}(t,\lambda_{t},Y_{t})Y_{t}, \ &i=m+2, \end{cases}$\\
	i.e., $\phi_{1}(t,Z_{t}) = (\phi_{1_{1}}(t,Z_{t}),\dots,\phi_{1_{m+2}}(t,Z_{t}))\in \mathbb{R}^{m+2}.$\\
\end{itemize}
 Inserting $u_{S}^{\star}$ and $u_{Y}^{\star}$ into \eqref{eq:pide} and manipulating terms, we find that
\begingroup
\allowdisplaybreaks
\begin{align}
&\dot{b}(t,Z_{t}) + \frac{\Theta_{1}}{\gamma} \tilde{\mu} \ - \Theta_{1} \sigma_{S} \sum_{j=1}^{m} b_{z_{j}}(t,Z_{t}) S_{t}^{j} \sigma_{S^{j}} - \Theta_{1} \rho_{S} b_{1}(t,Z_{t}) + \frac{\Theta_{2}(t,\lambda_{t},Y_{t}) \nu_{L}(t,\lambda_{t},Y_{t})}{\gamma}\notag \ \\
&\ - \Theta_{2}(t,\lambda_{t},Y_{t}) \big(b_{z_{m+1}}(t,Z_{t}) \sigma_{\lambda}(t,\lambda_{t}) \sigma_{L}(t,\lambda_{t},Y_{t}) + b_{z_{m+2}}(t,Z_{t}) \sigma^{2}_{L}(t,\lambda_{t},Y_{t}) Y_{t}\big)\notag \ \\
&\ - \Theta_{2}(t,\lambda_{t},Y_{t}) b_{2}(t,Z_{t}) + \nabla_{Z}b(t,Z_{t})^{\intercal} \mu(t,Z_{t}) + \frac{1}{2}\ \tr\left( \sigma(t,Z_{t})^{\intercal} H_{Z}(b(t,Z_{t})) \sigma(t,Z_{t}) \right) \notag \ \\
&\ + \int_{\mathbb{R}^{k+1} \setminus \{0\}} b(t,Z_{t}+\Delta Z_{t}(x,\bar{x})) - b(t,Z_{t}) - (\Delta Z_{t}(x,\bar{x}))^{\intercal} \nabla_{Z}b(t,Z_{t}) \ \vartheta_{X,\bar{X}}(dx,d\bar{x}) =0 \notag \ \\
& \Rightarrow \dot{b}(t,Z_{t}) + \frac{\Theta_{1}}{\gamma} \tilde{\mu} + \frac{\Theta_{2}(t,\lambda_{t},Y_{t}) \nu_{L}(t,\lambda_{t},Y_{t})}{\gamma} - \Theta_{1} \sigma_{S}\sum_{j=1}^{m} b_{z_{j}}(t,Z_{t}) S_{t}^{j} \sigma_{S^{j}}\notag \ \\
&\ - \Theta_{2}(t,\lambda_{t},Y_{t}) \big(b_{z_{m+1}}(t,Z_{t}) \sigma_{\lambda}(t,\lambda_{t}) \sigma_{L}(t,\lambda_{t},Y_{t}) + b_{z_{m+2}}(t,Z_{t}) \sigma^{2}_{L}(t,\lambda_{t},Y_{t}) Y_{t}\big) \notag \ \\
&\ + \nabla_{Z}b(t,Z_{t})^{\intercal} \mu(t,Z_{t}) + \frac{1}{2}\ \tr\left( \sigma(t,Z_{t})^{\intercal} H_{Z}(b(t,Z_{t})) \sigma(t,Z_{t}) \right) \notag \ \\
&\ + \int_{\mathbb{R}^{k+1} \setminus \{0\}} \Big(\big(b(t,Z_{t}+\Delta Z_{t}(x,\bar{x})) - b(t,Z_{t})\big) (1-\Theta_{1} \rho_{S}x - \Theta_{2}(t,\lambda_{t},Y_{t}) \eta_{L}(t,\lambda_{t},Y_{t},\bar{x})) \notag \ \\
&\ - (\Delta Z_{t}(x,\bar{x}))^{\intercal} \nabla_{Z}b(t,Z_{t})\Big) \ \vartheta_{X,\bar{X}}(dx,d\bar{x}) =0 \notag \ \\
& \Rightarrow \dot{b}(t,Z_{t}) + \frac{\Theta_{1}}{\gamma} \tilde{\mu} + \frac{\Theta_{2}(t,\lambda_{t},Y_{t}) \nu_{L}(t,\lambda_{t},Y_{t})}{\gamma} - \Theta_{1} \sigma_{S} \sum_{j=1}^{m} b_{z_{j}}(t,Z_{t}) S_{t}^{j} \sigma_{S^{j}}\notag \ \\
&\ - \Theta_{2}(t,\lambda_{t},Y_{t}) \big(b_{z_{m+1}}(t,Z_{t}) \sigma_{\lambda}(t,\lambda_{t}) \sigma_{L}(t,\lambda_{t},Y_{t}) + b_{z_{m+2}}(t,Z_{t}) \sigma^{2}_{L}(t,\lambda_{t},Y_{t}) Y_{t}\big) \notag \ \\
&\ + \nabla_{Z}b(t,Z_{t})^{\intercal} \mu(t,Z_{t}) + \frac{1}{2}\ \tr\left( \sigma(t,Z_{t})^{\intercal} H_{Z}(b(t,Z_{t})) \sigma(t,Z_{t}) \right) \notag \ \\
&\ + \int_{\mathbb{R}^{k+1} \setminus \{0\}} \Big(\big(b(t,Z_{t}+\Delta Z_{t}(x,\bar{x})) - b(t,Z_{t})\notag \ \\ &\ - (\Delta Z_{t}(x,\bar{x}))^{\intercal} \nabla_{Z}b(t,Z_{t})\big) (1-C(t,\lambda_{t},Y_{t},x,\bar{x}))\Big)\ \vartheta_{X,\bar{X}}(dx,d\bar{x}) \notag \ \\
&\ - \int_{\mathbb{R}^{k+1} \setminus \{0\}} (\Delta Z_{t}(x,\bar{x}))^{\intercal} \nabla_{Z}b(t,Z_{t}) C(t,\lambda_{t},Y_{t},x,\bar{x})\ \vartheta_{X,\bar{X}}(dx,d\bar{x}) =0 \notag \\
\begin{split}
& \Rightarrow \dot{b}(t,Z_{t}) + \frac{\Theta_{1} \tilde{\mu} + \Theta_{2}(t,\lambda_{t},Y_{t})\nu_{L}(t,\lambda_{t},Y_{t})}{\gamma} + \nabla_{Z}b(t,Z_{t})^{\intercal} \Big(\mu(t,Z_{t}) - \phi_{1}(t,Z_{t}) \\ 
&\ - \int_{\mathbb{R}^{k+1} \setminus \{0\}}\Delta Z_{t}(x,\bar{x}) C(t,\lambda_{t},Y_{t},x,\bar{x})\ \vartheta_{X,\bar{X}}(dx,d\bar{x}) \Big)  + \frac{1}{2}\ \tr\left( \sigma(t,Z_{t})^{\intercal} H_{Z}(b(t,Z_{t})) \sigma(t,Z_{t}) \right) \\   
&\ + \int_{\mathbb{R}^{k+1} \setminus \{0\}} \Big(\big(b(t,Z_{t}+\Delta Z_{t}(x,\bar{x})) - b(t,Z_{t}) \\
&\ - (\Delta Z_{t}(x,\bar{x}))^{\intercal} \nabla_{Z}b(t,Z_{t})\big) (1-C(t,\lambda_{t},Y_{t},x,\bar{x}))\Big) \vartheta_{X,\bar{X}}(dx,d\bar{x}) = 0.
\label{eq:pide_b}
\end{split}
\end{align}
\endgroup

Observe that \eqref{eq:pide_b} is a linear PIDE and therefore solvable. In the sequel, we present a Feynman-Kac solution (\cite{fk}):

\begin{align}
\begin{split}
b(t,z) &= \mathbb{E}^{\mathbb{P}^{*}}_{t,z}\left[\int_{t}^{T}\frac{\Theta_{1}\tilde{\mu} + \Theta_{2}(s,\lambda^{*}_{s},Y^{*}_{s})\nu_{L}(s,\lambda^{*}_{s},Y^{*}_{s})}{\gamma}\ ds\right]\\
& = \frac{\Theta_{1}\tilde{\mu}}{\gamma}(T-t) + \mathbb{E}^{\mathbb{P}^{*}}_{t,z}\left[\int_{t}^{T} \frac{\Theta_{2}(s,\lambda^{*}_{s},Y^{*}_{s}) \nu_{L}(s,\lambda^{*}_{s},Y^{*}_{s})}{\gamma}\ ds\right].
\end{split}
\label{eq:b}
\end{align}

Note that this form of the function $b$ implies that the last two terms inside the brackets in formula for $u_{S}^{\star}(t)$ given by \eqref{eq:uS_prelim} are equal to zero. The dynamics of $Z^{*}$ under the measure $\mathbb{P}^{*}$ read 

\begin{align*}
dZ^{*}_{t} &= \left(\mu(t,Z^{*}_{t}) - \phi^{(1)}(t,Z_{t}^{*}) - \int_{\mathbb{R}^{k+1} \setminus \{0\}} \Delta Z_{t}^{*}(x,\bar{x}) C(t,\lambda_{t}^{*},Y_{t}^{*},x,\bar{x})\ \vartheta_{X,\bar{X}}(dx,d\bar{x}) \right) dt\\
&\ + \sigma(t,Z_{t}^{*})\ dW^{*}_{t} + \int_{\mathbb{R}^{k+1} \setminus \{0\}} \Delta Z_{t}^{*}(x,\bar{x})\ \tilde{J}^{*}_{X,\bar{X}}(dt,dx,d\bar{x}),
\end{align*}
with $Z_{0} = (S_{0},\lambda_{0},Y_{0})^{\intercal} \in \mathbb{R}^{m+2}.$
The density process 
$$\Phi_{t} = \frac{d\mathbb{P}}{d\mathbb{P}^{*}}\Bigg|_{\mathcal{F}_{t}}$$
solves the SDE
\begin{equation}
d\Phi_{t} = \Phi_{t} \psi^{(1)}(t,Z_{t})\ d\hat{W}_{t} + \Phi_{t-} \int_{\mathbb{R}^{k+1} \setminus \{0\}} \psi^{(2)}(t,Z_{t},x,\bar{x})\ \tilde{J}_{X,\bar{X}}(dt,dx,d\bar{x}),
\label{eq:dens_b}
\end{equation}
with $\Phi_{0} = 1.$ Moreover, $\psi^{(1)}(t,z)$ is given as solution of the system of equations\\
 $$\underbrace{\sigma(t,z)}_{(m+2)\times (d+1)} \cdot \underbrace{\psi^{(1)}(t,z)}_{(d+1)\times (1)} = \underbrace{-\phi^{(1)}(t,z)}_{(m+2) \times (1)}.$$
Note that there exists at least one solution to this system because $m \leq d$ (cf. the explanations preceding \eqref{eq:S}) and $\lambda$ and $Y$ are driven by the same Brownian motion $\bar{W}.$. In particular, it holds that $\psi^{(1)}_{d+1}(t,Z_{t}) = \frac{- \phi^{(1)}_{m+1}(t,Z_{t})}{\sigma_{\lambda}(t,\lambda_{t})} = \frac{- \phi^{(1)}_{m+2}(t,Z_{t})}{\sigma_{L}(t,\lambda_{t},Y_{t})Y_{t}}.$ Provided that $C(t,\lambda_{t},Y_{t},x,\bar{x}) < 1$ for Lebesgue-almost all $t \in [0,T]$ and $\vartheta_{X,\bar{X}}(dx,d\bar{x})$-a.s., it holds that
$$\psi_{2}(t,Z_{t},x,\bar{x}) = -C(t,\lambda_{t},Y_{t},x,\bar{x}).$$

If $\Phi$ is a positive martingale (see e.g. \cite{kazamaki}), then, according to the \textit{Girsanov theorem}, $\mathbb{P}^{*}$ is equivalent to $\mathbb{P}$ and 

\begin{align*}
d\hat{W}_{t} &= -\psi_{1}(t,Z_{t})\ dt + dW^{*}_{t}, \\
\vartheta^{*}_{X,\bar{X}}(dx,d\bar{x})&= (1-C(t,\lambda_{t},Y_{t},x,\bar{x}))\ \vartheta_{X,\bar{X}}(dx,d\bar{x}).
\end{align*}

Finally, we can represent the solution to the PIDE \eqref{eq:static} as
\begin{align}
\begin{split}
&B(t,z)\\
&:= \mathbb{E}_{t,z}\Bigg[\int_{t}^{T}\Big(e^{r(T-s)}(\tilde{\mu}^{\intercal}u_{S}^{\star}(s) + u_{Y}^{\star}(s)\nu_{L}(s,\lambda_{s},Y_{s}) \\
&\ \ - \frac{\gamma}{2} \left(u_{S}^{\star}(s)^{\intercal} \tilde{\sigma}_{S} u_{S}^{\star}(s) + u_{Y}^{\star}(s)^{2} \sigma_{L}^{2}(s,\lambda_{s},Y_{s}) \right) e^{2r(T-s)}\\
&\ \ - \gamma e^{r(T-s)} \nabla_{Z}b(s,z)^{\intercal} Q^{u^{\star}}(s,Z_{s})\\
&\ \ -\frac{\gamma}{2} \int_{\mathbb{R}^{k+1} \setminus \{0\}} (e^{r(T-s)}\Delta P_{s}^{u^{\star}}(x,\bar{x}) + b(s,Z_{s-}+\Delta Z_{s}(x,\bar{x})) - b(s,Z_{s-}))^{2}\ \vartheta_{X,\bar{X}}(dx,d\bar{x}) \Big) ds\Bigg].
\end{split}
\label{eq:sol_static}
\end{align}
Note that the expectation in \eqref{eq:sol_static} is calculated under the physical measure $\mathbb{P}.$ We summarize the most important part of the previous discussion in the following theorem:

\begin{theorem}
Consider the extended HJB system \eqref{eq:ext_hjb} for the case $D \equiv 0.$ For any $t \in [0,T)$ the optimal amounts to be invested in the stocks and the longevity asset are given by
\begin{equation}
u_{S}^{\star}(t) = \frac{\tilde{\mu} (\tilde{\sigma}_{S} +\tilde{\rho}_{S} \xi)^{-1}}{\gamma e^{r(T-t)}},
\end{equation}
\begin{align}
u_{Y}^{\star}(t) &= \frac{\nu_{L}(t,\lambda_{t},Y_{t}) -\gamma \big(b_{z_{m+1}}(t,z) \sigma_{\lambda}(t,\lambda_{t}) \sigma_{L}(t,\lambda_{t},Y_{t}) + b_{z_{m+2}}(t,z) \sigma^{2}_{L}(t,\lambda_{t},Y_{t}) Y_{t}\big) -\gamma b_{2}(t,z)}{\gamma (\sigma_{L}^{2}(t,\lambda_{t},Y_{t}) + \tilde{\eta}_{L}(t,\lambda_{t},Y_{t}))e^{r(T-t)}},
\end{align}
with the function $b$ given by \eqref{eq:b}, while $b_{2}$ is defined by \eqref{eq:b2}. In addition, the equilibrium value function $V$ decomposes into
$$V(t,p,z) = A(t)p + B(t,z),$$
with $A(t) = e^{r(T-t)}$ and  $B(t,z)$ given by \eqref{eq:sol_static}. The expected optimal terminal wealth $g_{u^{\star}}$ decomposes into
$$g_{u^{\star}}(t,p,z) = a(t)p + b(t,z),$$
with $a(t) = e^{r(T-t)}$ and $b(t,z)$ given by \eqref{eq:b}.
\label{thm_strategies}
\end{theorem}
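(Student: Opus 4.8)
The plan is to show that the Ansatz \eqref{eq:ansatz}, together with the controls obtained from the first-order conditions, produces a genuine solution of the extended HJB system \eqref{eq:ext_hjb}, and then to invoke the verification (sufficiency) result preceding Theorem \ref{thm_nec} to conclude that this solution is the regular equilibrium. Since the structural computations have already been carried out, the proof amounts to closing three loops: that the first-order conditions characterize the maximizer, that the Feynman-Kac candidate for $b$ is self-consistent, and that the analytic hypotheses of the verification theorem hold. The terminal conditions $V(T,p,z)=p$ and $g_{u^{\star}}(T,p,z)=p$ (using $D\equiv 0$) force $A(T)=a(T)=1$ and $B(T,z)=b(T,z)=0$, which is built into the representations \eqref{eq:b} and \eqref{eq:sol_static}.

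First I would verify that for fixed $(t,z)$ the map $(u_{S},u_{Y})\mapsto \Xi(u_{S},u_{Y})$ in \eqref{eq:first_line} is strictly concave. The only control-dependent terms are the linear drift contribution, the quadratic diffusion penalty $-\frac{\gamma}{2}a(t)^{2}(u_{S}^{\intercal}\tilde{\sigma}_{S}u_{S}+u_{Y}^{2}\sigma_{L}^{2})$, and the nonnegative jump integral $-\frac{\gamma}{2}\int(a\Delta P^{u}+b(t,Z+\Delta Z)-b(t,Z))^{2}\,\vartheta_{X,\bar{X}}$, in which $\Delta P^{u}$ is affine in $(u_{S},u_{Y})$. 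Combining the diffusion covariance with the jump contribution yields a positive-definite quadratic form (the matrix $\tilde{\sigma}_{S}+\tilde{\rho}_{S}\xi$ to be inverted in \eqref{eq:uS_prelim}, together with the positive scalar $\sigma_{L}^{2}+\tilde{\eta}_{L}$), so the Hessian is negative definite and the stationary point from the FOC is the unique global maximizer. Solving the FOC gives the preliminary forms \eqref{eq:uS_prelim} and the displayed $u_{Y}^{\star}$; substituting back into \eqref{eq:first_line} and separating the $p$-linear part gives $\dot{A}+Ar=0$, $A(T)=1$ and the PIDE \eqref{eq:static} for $B$, while the third HJB equation $\mathcal{A}^{u^{\star}}g_{u^{\star}}=0$ under the Ansatz separates into $\dot{a}+ar=0$, $a(T)=1$ and the linear PIDE \eqref{eq:pide_b} for $b$. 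Both ODEs integrate to $A(t)=a(t)=e^{r(T-t)}$.

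The decisive step is the self-consistency of the Feynman-Kac solution \eqref{eq:b}. I would first note that $\Theta_{1}$, $\Theta_{2}$ and the function $C$ appearing in \eqref{eq:pide_b} are determined entirely by the model primitives and not by $b$, so the measure $\mathbb{P}^{*}$ is well defined independently of the unknown and \eqref{eq:pide_b} is genuinely linear. Crucially, the drift adjustment and the jump-size function $C$ involve only $(\lambda,Y)$, and the diffusion and jump coefficients of the $\lambda$- and $Y$-rows depend only on $(\lambda,Y)$; hence under $\mathbb{P}^{*}$ the pair $(\lambda^{*},Y^{*})$ evolves autonomously, decoupled from the stock prices. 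Since the integrand in \eqref{eq:b} also depends only on $(\lambda^{*},Y^{*})$, the resulting $b(t,z)$ is independent of the stock components of $z$. Consequently $b_{z_{j}}\equiv 0$ for $j=1,\dots,m$, so $\sum_{j=1}^{m}b_{z_{j}}S^{j}\sigma_{S^{j}}=0$; and because a pure-stock jump ($\bar{x}=0$) leaves $\lambda,Y$ unchanged, $b_{1}\equiv 0$ by \eqref{eq:b1}. Feeding this back into \eqref{eq:uS_prelim} and using $A(t)=a(t)$ to collapse $\frac{A(t)}{\gamma a(t)^{2}}=\frac{1}{\gamma e^{r(T-t)}}$ reduces $u_{S}^{\star}$ to the stated closed form, and the same cancellation yields the stated $u_{Y}^{\star}$.

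Finally I would close the argument via the first verification theorem. Having exhibited $A,a,B,b$ and the maximizing $u^{\star}$, the quadruple $(u^{\star},V,F_{u^{\star}},g_{u^{\star}})$ with $V=A p+B$, $g_{u^{\star}}=a p+b$ and $B$ represented by \eqref{eq:sol_static} solves \eqref{eq:ext_hjb}, and by strict concavity $u^{\star}$ realizes the supremum in its first row; the verification theorem then identifies $u^{\star}$ as the equilibrium control of feedback type, $V$ as the equilibrium value function and $g_{u^{\star}}$ as the expected terminal wealth. I expect the main obstacle to lie not in the algebra but in discharging the analytic hypotheses of this chain: that the density process $\Phi$ in \eqref{eq:dens_b} is a true positive martingale (requiring $C<1$ and a Novikov- or Kazamaki-type integrability bound so that Girsanov applies), that the Feynman-Kac representation is licensed by the growth and Lipschitz conditions of Assumptions \ref{ass_func_lambda} and \ref{ass_functions_Y}, and that $B$ and $b$ possess the $C^{1,2,2,2}$ regularity demanded by the verification theorem.
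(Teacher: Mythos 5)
Your proposal follows essentially the same route as the paper: the Ansatz \eqref{eq:ansatz}, the first-order conditions, separation of variables into the ODEs for $A,a$ and the PIDEs \eqref{eq:static} and \eqref{eq:pide_b} for $B,b$, the Feynman--Kac representation \eqref{eq:b} together with the observation that $b$ depends only on $(\lambda,Y)$ so that the last two terms in \eqref{eq:uS_prelim} vanish, and the sufficiency theorem to conclude. The points you add --- the strict-concavity check confirming that the FOC stationary point is the maximizer, and the explicit list of analytic hypotheses (martingale property of $\Phi$, regularity of $B,b$) --- are left implicit in the paper but do not alter the structure of the argument.
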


\section{Numercal results}

In this part we exemplify Theorem \ref{thm_strategies}. The pricing of the longevity asset is done under some pricing measure $\mathbb{Q}$ while the optimization is performed under the objective measure $\mathbb{P}.$ Hence, we need to know the dynamics of $\lambda$ and $Y$ under both measures. Further, we need to choose a process modeling the force of mortality $\lambda$ that is nonnegative a.s. The Cox-Ingersoll-Ross (CIR) process supplemented by positive jumps (we refer to it as JCIR process in the sequel) is a good candidate for several reasons. The CIR process cannot become negative and belongs to the class of affine models allowing for a closed-form formula of the zero-bond price and the JCIR process preserves these properties (\cite{brigo}). We model the positive jumps by a compound Poisson process. Thereby the number of jumps is counted by the homogeneous Poisson process $\bar{N}=(\bar{N}_{t})_{t \in [0,T]}$ with constant intensity $\varrho_{\lambda} > 0$ and the jump sizes are independent and follow an exponential distribution with mean $\varsigma > 0.$  Thus, the L\'evy measure is given by $\vartheta_{\bar{X}}(d\bar{x}) = \varrho_{\lambda} f(d\bar{x}),$ with $f$ denoting the probability density function of an exponentially distributed random variable.
Let $\psi_{1}(t,Z_{t}) = \kappa \sqrt{\lambda_{t}}, \kappa > 0,$ be the market price of Brownian risk in the longevity market and denote by $\psi_{2} > -1$ the market price of jump risk. For parameters $\beta, \sigma_{\lambda}, \theta > 0$ and $\tilde{\theta} := \theta + \frac{(1+\psi_{2})\varrho_{\lambda} \varsigma}{\beta},$ consider the dynamics of $\lambda$ given by 
\begin{equation}
d\lambda_{t}= \big[\beta \tilde{\theta} - (\beta + \kappa \sigma_{\lambda}) \lambda_{t} - \psi_{2} \varrho_{\lambda} \varsigma \big]\ dt + \sigma_{\lambda} \sqrt{\lambda_{t}}\ d\bar{W}_{t} + \int_{\mathbb{R} \setminus \{0\}} \bar{x} \tilde{J}_{\bar{X}}(dt,d\bar{x}),
\label{eq:lambda_jcir_p}
\end{equation}
with $\lambda_{0} > 0.$ Defining
\begin{align}
\begin{split}
d\bar{W}_{t} &= dW^{\mathbb{Q}}_{t} + \kappa \sqrt{\lambda_{t}}\ dt, \\
\vartheta^{\mathbb{Q}}_{\bar{X}}(d\bar{x}) &= (1+\psi_{2})\ \vartheta_{\bar{X}}(d\bar{x}),
\end{split}
\label{eq:transformation_jcir}
\end{align}
a straightforward calculation shows that the $\mathbb{Q}$-dynamics of $\lambda$ reads
\begin{equation}
d\lambda_{t} = \beta[\theta - \lambda_{t}]\ dt + \sigma_{\lambda} \sqrt{\lambda_{t}}\ dW^{\mathbb{Q}}_{t} + \int_{\mathbb{R} \setminus \{0\}} \bar{x} \ J_{\bar{X}}(dt,d\bar{x}),
\label{eq:lambda_Q}
\end{equation}
$\lambda_{0} > 0$, which is the classical JCIR model under $\mathbb{Q}.$ From this we can easily deduce that an appropriate choice of parameters and the starting value $\lambda_{0}$ preserves nonnegativity. Note that we have specified the market prices of risk such that the model is tractable under both measures. In particular, the representation \eqref{eq:lambda_jcir_p} is consistent with the general setup in \eqref{eq:fom}, and \eqref{eq:lambda_Q} is the starting point for pricing. For $r \geq 0$ and $t \in [0,T),$ we are interested in the price of the zero-coupon longevity bond
$$L_{\lambda}(t,T) = \mathbb{E}_{\mathbb{Q}}\left[e^{-\int_{t}^{T}(\lambda_{s}+r)\ ds}\Big| \mathcal{F}_{t} \right].$$
Defining the auxiliary process
$$\tilde{L}_{\lambda}(t,T) := e^{r(T-t)} L_{\lambda}(t,T) = \mathbb{E}_{\mathbb{Q}}\left[e^{-\int_{t}^{T}\lambda_{s}\ ds}\Big| \mathcal{F}_{t} \right],$$
the affine structure of \eqref{eq:lambda_Q} yields that
\begin{equation}
\tilde{L}_{\lambda}(t,T) = A_{\lambda}(t,T) \alpha_{\lambda}(t,T) e^{-B_{\lambda}(t,T) \lambda_{t}},
\label{eq:tildeL}
\end{equation}
for deterministic functions $A_{\lambda}, \alpha_{\lambda},$ and $B_{\lambda}$ to be found in Chapter 22 of \cite{brigo}. Recall that the dollar value at time $t$ of an investment in $L_{\lambda}$ at time $t=0$ is given by $Y_{t} = e^{-\int_{0}^{t} \lambda_{s}\ ds} L_{\lambda}(t,T).$ The next proposition characterizes the dynamics of $Y$ assuming that $\lambda$ is modeled by a JCIR process. 

\begin{proposition}
Consider the process $\lambda$ given by \eqref{eq:lambda_jcir_p} and the specification \eqref{eq:transformation_jcir} of the pricing measure $\mathbb{Q}.$ The dollar value process $Y$ of an investment in the longevity asset $L_{\lambda}$ at time $t=0$ is given by 
\begin{align}
\begin{split}
\frac{dY_{t}}{Y_{t-}} &= \left(r + B_{\lambda}(t,T) \sigma_{\lambda} \kappa \lambda_{t} - \psi_{2} \int_{\mathbb{R} \setminus \{0\}} (e^{-B_{\lambda}(t,T) \bar{x}} -1) \vartheta_{\bar{X}}(d\bar{x}) \right) dt - B_{\lambda}(t,T) \sigma_{\lambda} \sqrt{\lambda_{t}}\ d\bar{W}_{t} \\
&\ + \int_{\mathbb{R} \setminus \{0\}} \left(e^{-B_{\lambda}(t,T)\bar{x}}-1 \right) \tilde{J}_{\bar{X}}(dt,d\bar{x}),
\end{split}
\label{eq:dY}
\end{align}
with $Y_{0} = L_{\lambda}(0,T).$
\end{proposition}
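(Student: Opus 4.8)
The plan is to exploit the fact that the discounted dollar value $e^{-rt}Y_t$ is a $\mathbb{Q}$-martingale, which delivers the drift of $Y$ under $\mathbb{Q}$ essentially for free and avoids having to differentiate the affine coefficients $A_\lambda,\alpha_\lambda,B_\lambda$ explicitly. Indeed, by \eqref{eq:L_lambda} and $Y_t=e^{-\int_0^t\lambda_s\,ds}L_\lambda(t,T)$ one has
\[
e^{-rt}Y_t=e^{-\int_0^t(\lambda_s+r)\,ds}L_\lambda(t,T)=\mathbb{E}_{\mathbb{Q}}\!\left[e^{-\int_0^T(\lambda_s+r)\,ds}\,\big|\,\mathcal{F}_t\right],
\]
so $e^{-rt}Y_t$ is a $\mathbb{Q}$-martingale. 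Writing the canonical $\mathbb{Q}$-decomposition $dY_t=b^{\mathbb{Q}}_t\,dt+dM_t$ with $M$ a $\mathbb{Q}$-local martingale, the identity $d(e^{-rt}Y_t)=e^{-rt}(dY_t-rY_t\,dt)$ forces $b^{\mathbb{Q}}_t=rY_{t-}$. Thus it only remains to identify the martingale part of $Y$ and then to change measure from $\mathbb{Q}$ to $\mathbb{P}$.

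To identify the martingale part I would use the affine representation \eqref{eq:tildeL}, i.e. $Y_t=e^{-\int_0^t\lambda_s\,ds}e^{-r(T-t)}G(t,\lambda_t)$ with $G(t,\lambda):=A_\lambda(t,T)\alpha_\lambda(t,T)e^{-B_\lambda(t,T)\lambda}$, and apply It\^o's formula for jump-diffusions to $G(t,\lambda_t)$ under the $\mathbb{Q}$-dynamics \eqref{eq:lambda_Q} of $\lambda$. The two deterministic factors $e^{-\int_0^t\lambda_s\,ds}$ and $e^{-r(T-t)}$ are continuous and of finite variation, hence contribute no diffusion or jump terms and only multiply the factor $Y_t/G(t,\lambda_t)$ back in. Since $G_\lambda=-B_\lambda G$, the Brownian coefficient of $G$ is $-B_\lambda G\,\sigma_\lambda\sqrt{\lambda_t}$, so that of $Y$ is $-B_\lambda(t,T)\sigma_\lambda\sqrt{\lambda_t}\,Y_{t-}$; and since a jump of size $\bar x$ in $\lambda$ induces $G(t,\lambda_{t-}+\bar x)-G(t,\lambda_{t-})=G(t,\lambda_{t-})(e^{-B_\lambda\bar x}-1)$, the jump coefficient of $Y$ is $Y_{t-}(e^{-B_\lambda(t,T)\bar x}-1)$. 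Both are purely algebraic consequences of the exponential-affine form and require no knowledge of the ODEs satisfied by $A_\lambda,\alpha_\lambda,B_\lambda$. Note for later integrability that, as the jumps are positive and $B_\lambda\geq 0$, the map $\bar x\mapsto e^{-B_\lambda\bar x}-1$ is bounded, and since $\vartheta_{\bar X}$ is a finite (compound-Poisson) measure here the compensator integrals below converge.

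Combining the two steps, under $\mathbb{Q}$ one obtains
\[
\frac{dY_t}{Y_{t-}}=r\,dt-B_\lambda(t,T)\sigma_\lambda\sqrt{\lambda_t}\,dW^{\mathbb{Q}}_t+\int_{\mathbb{R}\setminus\{0\}}\!\big(e^{-B_\lambda(t,T)\bar x}-1\big)\,\tilde J^{\mathbb{Q}}_{\bar X}(dt,d\bar x),
\]
where $\tilde J^{\mathbb{Q}}_{\bar X}$ is compensated with respect to $\vartheta^{\mathbb{Q}}_{\bar X}=(1+\psi_2)\vartheta_{\bar X}$. Finally I would substitute the Girsanov relations \eqref{eq:transformation_jcir}: from $dW^{\mathbb{Q}}_t=d\bar W_t-\kappa\sqrt{\lambda_t}\,dt$ the diffusion term produces the extra drift $B_\lambda(t,T)\sigma_\lambda\kappa\lambda_t\,dt$, and from $\tilde J^{\mathbb{Q}}_{\bar X}=\tilde J_{\bar X}-\psi_2\,\vartheta_{\bar X}(d\bar x)\,dt$ the jump term produces the extra drift $-\psi_2\int(e^{-B_\lambda(t,T)\bar x}-1)\,\vartheta_{\bar X}(d\bar x)\,dt$. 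Collecting the three drift contributions together with the (measure-invariant) martingale coefficients yields exactly \eqref{eq:dY}. The step I expect to be the genuine obstacle is the clean justification that the $\mathbb{Q}$-drift equals $rY_{t-}$ without computing it term by term; this is where the martingale identity above does the heavy lifting, replacing an otherwise lengthy verification that the time-derivatives of $A_\lambda,\alpha_\lambda,B_\lambda$ combine with the JCIR generator to give $r$ (equivalently, that $G$ solves the associated bond-pricing PIDE). A fully direct alternative would insert the affine ODEs of Chapter~22 of \cite{brigo} into It\^o's formula under $\mathbb{P}$ and simplify, but the martingale route keeps the bookkeeping minimal; some care is still needed with the jump recompensation under the measure change and with the interchange of $d/dt$ and the $\vartheta_{\bar X}$-integral.
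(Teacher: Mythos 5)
Your proposal is correct and follows essentially the same route as the paper: the paper likewise pins down the drift via the pricing/martingale identity (its equation for $df(t,\lambda_t)$ with drift $\lambda_t f(t,\lambda_t)\,dt$ is exactly the statement that $e^{-\int_0^t\lambda_s\,ds}\tilde L_\lambda(t,T)$ is a $\mathbb{Q}$-martingale), reads off the Brownian and jump coefficients from the exponential-affine form using $f'=-B_\lambda f$ and $f(t,\lambda_{t-}+\bar x)-f(t,\lambda_{t-})=f(t,\lambda_{t-})(e^{-B_\lambda \bar x}-1)$, translates to $\mathbb{P}$ via \eqref{eq:transformation_jcir}, and finishes with integration by parts against the finite-variation factor. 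Your version merely makes the martingale justification of the drift explicit where the paper calls it a ``standard calculation,'' so no substantive difference.
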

\begin{proof}
Using \eqref{eq:tildeL}, we define $f(t,\lambda_{t}) = \tilde{L}_{\lambda}(t,T).$ Then a standard calculation shows that
\begin{align}
\begin{split}
df(t,\lambda_{t}) &= \lambda_{t}f(t,\lambda_{t})\ dt + f'(t,\lambda_{t}) \sigma_{\lambda} \sqrt{\lambda_{t}} \ dW_{t}^{\mathbb{Q}} \\
&\ + \int_{\mathbb{R} \setminus \{0\}} \left(f(t,\lambda_{t-} + \bar{x}) - f(t,\lambda_{t-})\right) \tilde{J}^{\mathbb{Q}}_{\bar{X}}(dt,d\bar{x}).
\end{split}
\label{eq:df}
\end{align}
We easily see from \eqref{eq:tildeL} that
\begin{align*}
f'(t,\lambda_{t}) &= -B_{\lambda}(t,\lambda_{t}) f(t,\lambda_{t}), \\
f(t,\lambda_{t-} + \bar{x}) - f(t,\lambda_{t-}) &= f(t,\lambda_{t-}) \left(e^{-B_{\lambda}(t,T)\bar{x}}-1 \right).
\end{align*}
Plugging this back into \eqref{eq:df} and translating $df$ to the measure $\mathbb{P}$, we obtain
\begin{align*}
df(t,\lambda_{t}) &= f(t,\lambda_{t-}) \Bigg(\left(\lambda_{t} + B_{\lambda}(t,T) \sigma_{\lambda} \kappa \lambda_{t} - \psi_{2} \int_{\mathbb{R} \setminus \{0\}} (e^{-B_{\lambda}(t,T)\bar{x}}-1)\ \vartheta_{\bar{X}}(d\bar{x}) \right)\ dt \\
& \ - B_{\lambda}(t,T) \sigma_{\lambda} \sqrt{\lambda_{t}}\ d\bar{W}_{t} + \int_{\mathbb{R} \setminus \{0\}} (e^{-B_{\lambda}(t,T)\bar{x}}-1)\ \tilde{J}_{\bar{X}}(dt,d\bar{x}) \Bigg).
\end{align*}
Integration by parts then yields \eqref{eq:dY}. We conclude the proof by remarking that the function $B_{\lambda}$ is suitably integrable. 
\end{proof}
\noindent
We further deduce from \eqref{eq:dY} (cf. \eqref{eq:Y}) that 
\begin{align*}
\nu_{L}(t,\lambda_{t},Y_{t}) &= \nu_{L}(t,\lambda_{t}) = B_{\lambda}(t,T) \sigma_{\lambda} \kappa \lambda_{t} - \psi_{2} \int_{\mathbb{R} \setminus \{0\}} (e^{-B_{\lambda}(t,T) \bar{x}} -1)\ \vartheta_{\bar{X}}(d\bar{x}), \\
\sigma_{L}(t,\lambda_{t},Y_{t}) &= \sigma_{L}(t,\lambda_{t}) = - B_{\lambda}(t,T) \sigma_{\lambda} \sqrt{\lambda_{t}}, \\
\eta_{L}(t,\lambda_{t},Y_{t},\bar{x}) &= \eta_{L}(t,\bar{x}) = e^{-B_{\lambda}(t,T)\bar{x}}-1, \\
\tilde{\eta}_{L}(t,\lambda_{t},Y_{t}) &= \tilde{\eta}_{L}(t) = \int_{\mathbb{R} \setminus \{0\}} \eta_{L}(t,\bar{x})^{2}\ \vartheta_{\bar{X}}(d\bar{x}).
\end{align*}
In order to implement the strategies $u_{S}^{\star}$ and $u_{Y}^{\star},$ we need to specify the function $b(t,z) = b(t,\lambda)$ from \eqref{eq:b}. One can calculate that the density process \eqref{eq:dens_b} is given as solution of the SDE

\begin{equation}
\frac{d\Phi_{t}}{\Phi_{t-}} =  -\frac{\nu_{L}(t,\lambda_{t})}{\sigma_{L}^{2}(t,\lambda_{t}) + \tilde{\eta}_{L}(t)} \sigma_{L}(t,\lambda_{t})\ d\bar{W}_{t} - \frac{\nu_{L}(t,\lambda_{t})}{\sigma_{L}^{2}(t,\lambda_{t}) + \tilde{\eta}_{L}(t)}  \int_{\mathbb{R} \setminus \{0\}} \eta_{L}(t,\bar{x})\ \tilde{J}_{\bar{X}}(dt,d\bar{x}),
\label{eq:dens_jcir}
\end{equation}
$\Phi_{0}=1$, i.e., $\Phi_{t}$ is the stochastic exponential of the integrated right-hand side of \eqref{eq:dens_jcir}.\\ For simplicity we restrict to the case $d=k=1$ for the simulation, that is, there is only one stock traded on the market. We model the jumps of the stock price process by a homogeneous Poisson process $N = (N_{t})_{t \in [0,T]}$ with intensity $\varrho_{S} > 0.$ Denote the compensated version by $\tilde{N} = (\tilde{N}_{t})_{t \in [0,T]}.$ The dynamics of the stock price then reads

\begin{equation}
\frac{dS_{t}}{S_{t-}} = \mu \ dt + \sigma\ dW_{t} + \rho \ d\tilde{N}_{t},
\label{eq:SN}
\end{equation} 
with $S_{0} > 0$.
 Finally, we see that
\begin{align*}
\Theta_{1} &= \frac{\tilde{\mu}}{\sigma + \rho \xi}, \\
\Theta_{2}(t,\lambda_{t},Y_{t}) &= \Theta_{2}(t,\lambda_{t}) = \frac{\nu_{L}(t,\lambda_{t})}{\sigma_{L}^{2}(t,\lambda_{t}) + \tilde{\eta}_{L}(t)}.
\end{align*}
Therefore the function $b$ is given by
$$b(t,\lambda_{t}) = \frac{\Theta_{1}\tilde{\mu}}{\gamma}(T-t) + \mathbb{E}_{t,\lambda_{t}}\left[ \frac{\Phi_{T}}{\Phi_{t}}\int_{t}^{T} \frac{\Theta_{2}(s,\lambda_{s}) \nu_{L}(s,\lambda_{s})}{\gamma}\ ds\right],$$
and Theorem \ref{thm_strategies} implies that the optimal strategies in this market setup read
\begin{align}
u_{S}^{\star}(t) &= \frac{\tilde{\mu}}{(\sigma^{2} + \rho^{2} \varrho_{S})\ \gamma e^{r(T-t)}}, \label{eq:uSjcir}\\
u_{Y}^{\star}(t) &= \frac{\nu_{L}(t,\lambda_{t}) - \gamma b_{\lambda}(t,\lambda_{t}) \sigma_{\lambda} \sqrt{\lambda_{t}} \sigma_{L}(t,\lambda_{t}) - \gamma \int_{\mathbb{R} \setminus \{0\}}(b(t,\lambda_{t} + \bar{x}) - b(t,\lambda_{t})) \eta_{L}(t,\bar{x})\ \vartheta_{\bar{X}}(d\bar{x})}{(\sigma_{L}^{2}(t,\lambda_{t}) + \tilde{\eta}_{L}(t))\ \gamma e^{r(T-t)}}. \label{eq:uYjcir}
\end{align}

\begin{table}
\centering
\begin{tabular}{l|l|l|l|l|l|l|l|l|l|l|l|l|l|l|l|l}\hline
 $P_{0}$ &$T$ & $r$ & $\gamma$ & $S_{0}$& $\mu$ & $\sigma$ & $\rho$ & $\varrho_{S}$ & $\beta$ & $\sigma_{\lambda}$ & $\theta$ & $\kappa$ & $\psi_{2}$ & $\varrho_{\lambda}$ & $\lambda_{0}$ & $\varsigma$   \\ \hline
$1$ &$10$ & $.02$ & $2$ & $1$ & $.06$ & $.1$ & $.1$ & $3$ & .4 & .3 & .1 & .2& -.2& .5& .05& .001 \\ \hline
\end{tabular}
\caption{Parameter values}
\label{par_val}
\end{table}

\begin{table}
\centering
\begin{tabular}{c|c|c|c}\hline
  $\mathbb{E}[S_{T}]$ & $\text{Var}[S_{T}]$ & $\mathbb{E}[\lambda_{T}]$ & $\text{Var}[\lambda_{T}]$ \\  \hline
	$1.822$ & 1.633 & 0.096 & 0.0102 \\ \hline	
\end{tabular}
\caption{Expectation and Variance}
\label{E_Var}
\end{table}


\begin{table}
\centering
\begin{tabular}{c c c c}
&$T=10$&$T=15$&$T=25$ \\ \hline
$\mathbb{E}[P^{\star}_{T}]$ & $1.4371$ & $1.6728$& $2.1852$  \\ \hline
ROER & $3.63\%$ & $3.43\%$& $3.13\%$  \\ \hline
 $\text{Var}[P^{\star}_{T}]$ &$0.1050$ & $0.1616$& $0.2689$ \\ \hline 	
\end{tabular}
\caption{Expectation and Variance with different horizons}
\label{table_T}
\end{table}

\begin{table}
\centering
\begin{tabular}{c c c c c c c}
&(A)&(B)&(C)&(D)&(E)&(F) \\ \hline
$\mathbb{E}[P^{\star}_{T}]$ & $1.4244$ & $1.4377$& $1.4399$& $1.4365$& $1.4379$& $1.4373$  \\ \hline
ROER & $3.54\%$ & $3.63\%$& $3.65\%$& $3.62\%$& $3.63\%$& $3.63\%$  \\ \hline
 $\text{Var}[P^{\star}_{T}]$ &$0.1009$ & $0.1128$& $0.1077$& $0.1079$& $0.109$& $0.1083$ \\ \hline 	
\end{tabular}
\caption{Expectation and Variance: (A): without longevity asset; (B): ignoring jumps; (C): Brownian risk only; (D): std. normally distributed jump sizes of $S$; (E): $T_{L} = 15$; (F): $T_{L} = 25$}
\label{E_Var_P_opt_S}
\end{table}



In Table \ref{par_val} the assigned parameter values are summarized. We chose values that are typical in the literature. This leads to the expectations and variances of respectively the stock price and the force of mortality given in Table \ref{E_Var}. These quantities have been calculated using the formulas given by Lemma \ref{mom_lam}. In order to compare the expected payoffs under different scenarios, we consider the \emph{rate of expected return} (ROER) given by ROER $= \ln(\mathbb{E}[P_{T}^{\star}])/T.$  Following the optimal strategies \eqref{eq:uSjcir} and \eqref{eq:uYjcir} yields the expectation, ROER and variance of the optimal terminal wealth displayed in the left panel of Table \ref{table_T}. Any modification made in the sequel will be compared against these values. The other two panels in Table \ref{table_T} show the expected value, ROER and variance of the terminal wealth when increasing the horizon to respectively $15$ and $25$ years. We see that the expected terminal payoffs and ROERs lie significantly above the final payoffs one would receive from investing in the riskless asset solely. The overall variance is of course slightly increasing when the investment horizon is prolonged, for that reason the insurance company is induced to invest less in the risky asset causing a minor decrease in the ROER over time. \\
In Table \ref{E_Var_P_opt_S} several further scenarios are analyzed. In Panel (A) the expectation, ROER and variance of the optimal terminal wealth without investing in the longevity asset are displayed. From \eqref{eq:uSjcir} we see that the amount invested in the stock is the same as in the previously described case while the money allocated to the longevity asset before is now invested in the riskless asset. Comparing the values in Panel (A) of Table \ref{E_Var_P_opt_S} to the left panel of Table \ref{table_T}, we see a slight decrease in the expected terminal wealth and in the ROER and a slight increase in the variance. Hence, the overall effect of not investing in the longevity asset is relatively small. Next, in Panel (B), we investigated the change in mean, ROER and variance of the optimal terminal wealth when keeping $\lambda$ and $S$ as in \eqref{eq:lambda_jcir_p} and  \eqref{eq:SN} respectively, but erroneously assuming that $\lambda$ and $S$ do not exhibit jumps. This corresponds to the scenario that an investor observes the variance of the stock and the force of mortality, but naively ascribes it to the Brownian components. Note that in this case the values of $u_{S}^{\star}$ do not change. We observe that the expected value ($1.4377$) and the ROER ($3.63\%)$ are not significantly different compared to the benchmark, while the variance increased to $0.1128$, which corresponds to a rise of $7.4 \%$. This slight increase in the variance stems from the fact that the hedging strategy does no longer account for the presence of jumps. However, the effect is relatively small, which is in line with our previous findings in Panel (A) of Table \ref{E_Var_P_opt_S} because the impact of the investment in the longevity asset under the allocation rule $u_{Y}^{\star}$ on the mean, ROER and variance of the optimal terminal wealth is relatively low in general.  
Hence, buying the longevity asset on average yields a higher payoff than the riskless investment and leads to some diversification. When hedging a terminal condition linked to the mortality rate in the insurance pool as discussed in Example \ref{ex_H}, the effect observed is likely to be stronger. In addition, our findings indicate that erroneously ignoring the jumps the force of mortality exhibits would then also lead to a significantly higher variance. \\
In Panel (C) of Table \ref{E_Var_P_opt_S} we investigated the effect of setting the jump intensity of $S$ and $\lambda$ to zero, so all uncertainty is stemming from Brownian risk. Thereby the expected values and variances of $S_{T}$ and $\lambda_{T}$ have been kept stable at the values depicted in Table \ref{E_Var} by adjusting the volatility parameters of the respective Brownian parts. However, the mean of the optimal terminal wealth just changed to $1.4399$, the ROER marginally increased to $3.65\%$, while the variance of the optimal terminal wealth slightly changed to $0.1077.$ The same phenomenon is shown in Panel (D): we replaced the constant jump of the Poisson process $N$ in the dynamics of $S$ by standard normally distributed jump sizes while keeping mean and variance of $S$ the same: we obtained a mean of $1.4365$, an ROER of $3.62\%$ and a variance of $0.1079$. Thus, we tentatively conclude that as long as we know mean and variance of the stock and the force of mortality, the expected optimal terminal wealth, the ROER and the variance are robust.\\
 So far we assumed that the time to maturity of the longevity asset coincides with the insurance horizon. We investigated the change of mean, ROER and variance of the optimal terminal wealth when the time to maturity of the longevity asset, say $T_{L},$ is longer than the insurance horizon, which has been kept fixed at $T=10.$ The results are displayed in Panel (E) and Panel (F) of Table \ref{E_Var_P_opt_S}. We see that neither the expected value, the ROER nor the variance of the terminal optimal wealth change significantly. This result is highly important from a practical point of view because one cannot expect to find longevity assets whose times to maturity coincide with the insurance horizon, there are too few of them offered and traded. Our results show that picking an asset with a longer time to maturity does in particular not add to the variance of the terminal wealth. \\
Finally, Figure \ref{fig:opt_path} displays a path of the optimal portfolio process. The jumps of the underlying processes are clearly visible and the path shows a positive trend. In Figure \ref{fig:strategies}, paths of the optimal dollar amounts to be invested are plotted. From \eqref{eq:dY} we can see that the excess rate of return can become negative inducing the return of the longevity asset to be below the riskless rate. Thus, taking on a short position in $Y$ at several points in time is optimal and this can be clearly seen in Panel (B). Furthermore, as maturity is approaching, the optimal amount invested in the longevity asset is increasing because the function $B_{\lambda}(\cdot,T)$ in \eqref{eq:dY} is tending to zero as $t$ approaches $T$. The economic meaning for this investment behavior is that close to maturity, the price of the longevity bond converges to $1$ (cf. \eqref{eq:L_lambda}) and does not exhibit much variation anymore. Thus, the investment in the longevity bond becomes less risky.  \\
We conclude by remarking that a numerical analysis incorporating the hedging of a terminal condition should be conducted. In this case, closed-form solutions to the extended HJB system \eqref{eq:ext_hjb} cannot be obtained anymore, so one needs to resort to more involved numerical methods. We leave this direction for future research.

\begin{figure}
	\centering
		\includegraphics[width=0.60\textwidth]{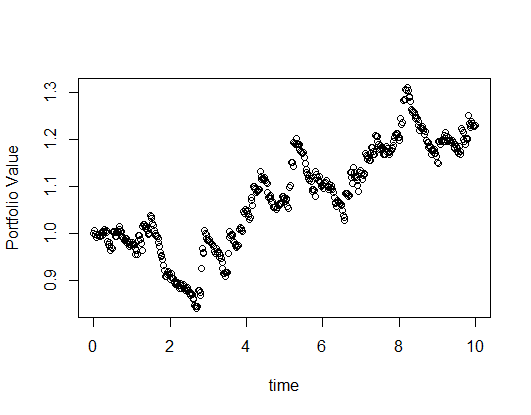}
		\caption{Optimal Portfolio Process}
		\label{fig:opt_path}
\end{figure}


\begin{figure}%
    \centering
    \subfloat[Stock]{{\includegraphics[width=7.5cm]{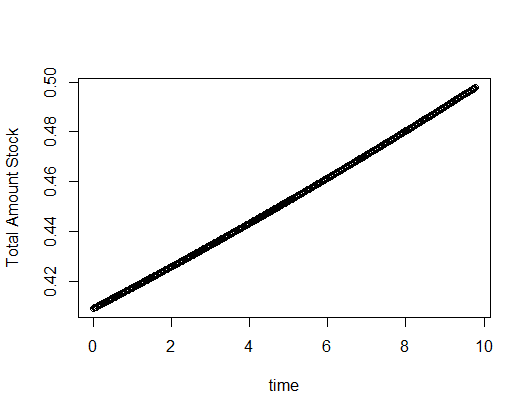} }}%
    \qquad
    \subfloat[Longevity Asset]{{\includegraphics[width=7.5cm]{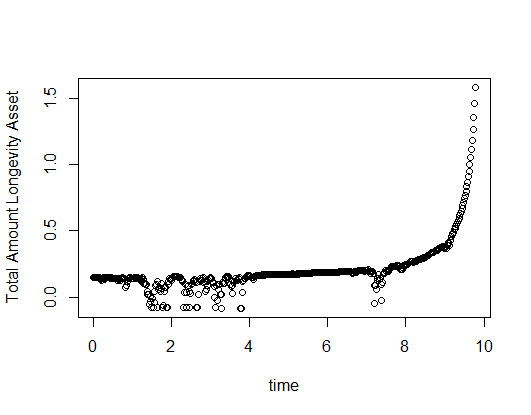} }}%
    \caption{Optimal Dollar Amounts}%
    \label{fig:strategies}%
\end{figure}

\newpage

\setcounter{section}{0}
\setcounter{theorem}{0}
\setcounter{equation}{0}
\renewcommand{\theequation}{\thesection.\arabic{equation}}

\section*{Appendix}
\renewcommand{\thesection}{A}


\noindent 
We provide the formulas to calculate the moments of $\lambda_{T}$ displayed in Table \ref{E_Var}. 

\begin{lemma}
Consider the JCIR process given by \eqref{eq:lambda_jcir_p}, i.e., 
$$d\lambda_{t}= \big[\beta \tilde{\theta} - (\beta + \kappa \sigma_{\lambda}) \lambda_{t} - \psi_{2} \varrho_{\lambda} \varsigma \big]\ dt + \sigma_{\lambda} \sqrt{\lambda_{t}}\ d\bar{W}_{t} + \int_{\mathbb{R} \setminus \{0\}} \bar{x}\ \tilde{J}_{\bar{X}}(dt,d\bar{x}),$$
with $\lambda_{0} > 0.$
We have that
\begin{align}
\begin{split}
\mathbb{E}[\lambda_{t}] &= \frac{\beta \theta}{\beta + \kappa \sigma_{\lambda}}\left(1-e^{-(\beta + \kappa \sigma_{\lambda})T}\right) + \lambda_{0} e^{-(\beta + \kappa \sigma_{\lambda})T} + \frac{\varrho_{\lambda} \varsigma}{\beta + \kappa \sigma_{\lambda}}\left(1-e^{-(\beta + \kappa \sigma_{\lambda})T}\right), \\
\emph{Var}[\lambda_{t}] &= \frac{\beta \sigma_{\lambda}^{2}\theta}{2(\beta + \kappa \sigma_{\lambda})^{2}}\left(1-e^{-(\beta + \kappa \sigma_{\lambda})T}\right) + \frac{\lambda_{0} \sigma_{\lambda}^{2}}{\beta + \kappa \sigma_{\lambda}}e^{-(\beta + \kappa \sigma_{\lambda})T} \left(1-e^{-(\beta + \kappa \sigma_{\lambda})T}\right)\\
&\ \ + \frac{\varrho_{\lambda} \varsigma^{2}}{\beta + \kappa \sigma_{\lambda}} \left(1-e^{-2(\beta + \kappa \sigma_{\lambda})T} \right) + \frac{\sigma_{\lambda}^{2} \varrho_{\lambda} \varsigma}{\beta + \kappa \sigma_{\lambda}} \left(1-e^{-(\beta + \kappa \sigma_{\lambda})T} \right)\\
&\ \ + \frac{\sigma_{\lambda}^{2} \varrho_{\lambda} \varsigma}{2(\beta + \kappa \sigma_{\lambda})^{2}} \left(e^{-2(\beta + \kappa \sigma_{\lambda})T} -1 \right).
\end{split}
\label{eq:mom_lam}
\end{align}
\label{mom_lam}
\end{lemma}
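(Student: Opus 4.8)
The plan is to derive ordinary differential equations for the first two moments of $\lambda_{t}$ and to solve them explicitly, exploiting that the compound-Poisson jumps in \eqref{eq:lambda_jcir_p} have exponentially distributed sizes. Throughout I abbreviate the effective mean-reversion speed by $\bar{\kappa} := \beta + \kappa \sigma_{\lambda}$, and I record the two relevant moments of the L\'evy measure $\vartheta_{\bar{X}}(d\bar{x}) = \varrho_{\lambda} f(d\bar{x})$, namely $\int_{\mathbb{R} \setminus \{0\}} \bar{x}\ \vartheta_{\bar{X}}(d\bar{x}) = \varrho_{\lambda} \varsigma$ and $\int_{\mathbb{R} \setminus \{0\}} \bar{x}^{2}\ \vartheta_{\bar{X}}(d\bar{x}) = 2 \varrho_{\lambda} \varsigma^{2}$, which are the first and second moments of an exponential law with mean $\varsigma$, scaled by the jump intensity.

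For the mean I take expectations in \eqref{eq:lambda_jcir_p}. The stochastic integral against $d\bar{W}$ and the compensated-jump integral against $\tilde{J}_{\bar{X}}$ are true martingales, hence have zero expectation; their being genuine (not merely local) martingales follows from the finiteness of the moments of the affine process $\lambda$. Writing $m(t) := \mathbb{E}[\lambda_{t}]$ and using $\beta \tilde{\theta} - \psi_{2} \varrho_{\lambda} \varsigma = \beta \theta + \varrho_{\lambda} \varsigma$ (immediate from the definition of $\tilde{\theta}$), this yields the linear ODE
\begin{equation*}
m'(t) = \beta \theta + \varrho_{\lambda} \varsigma - \bar{\kappa}\, m(t), \qquad m(0) = \lambda_{0},
\end{equation*}
whose solution is exactly the claimed formula for $\mathbb{E}[\lambda_{t}]$, the three summands corresponding respectively to the reversion level $\beta \theta / \bar{\kappa}$, the decaying initial value $\lambda_{0} e^{-\bar{\kappa} t}$, and the jump contribution $\varrho_{\lambda} \varsigma / \bar{\kappa}$.

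For the variance I compute the second moment $M(t) := \mathbb{E}[\lambda_{t}^{2}]$ by applying It\^{o}'s formula for jump-diffusions to $x \mapsto x^{2}$. The drift part contributes $2 \lambda_{t}(\beta \theta + \varrho_{\lambda} \varsigma - \bar{\kappa} \lambda_{t})$, the diffusion part contributes $\sigma_{\lambda}^{2} \lambda_{t}$, and the jump compensator contributes $\int_{\mathbb{R} \setminus \{0\}} \big( (\lambda_{t} + \bar{x})^{2} - \lambda_{t}^{2} - 2 \lambda_{t} \bar{x} \big)\ \vartheta_{\bar{X}}(d\bar{x}) = 2 \varrho_{\lambda} \varsigma^{2}$, where the two lower-order terms in $\bar{x}$ cancel exactly. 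Taking expectations (again the stochastic integrals are genuine martingales by the moment bounds for affine processes) gives the linear ODE
\begin{equation*}
M'(t) = -2 \bar{\kappa}\, M(t) + (2\beta \theta + 2 \varrho_{\lambda} \varsigma + \sigma_{\lambda}^{2})\, m(t) + 2 \varrho_{\lambda} \varsigma^{2}, \qquad M(0) = \lambda_{0}^{2},
\end{equation*}
in which $m(t)$ is already known from the first step.

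Finally I solve this ODE using the integrating factor $e^{2 \bar{\kappa} t}$. Since $m(t)$ is a constant plus a multiple of $e^{-\bar{\kappa} t}$, the right-hand side is a linear combination of $e^{2 \bar{\kappa} t}$ and $e^{\bar{\kappa} t}$, so the integration is elementary and produces $M(t)$ as a combination of $1$, $e^{-\bar{\kappa} t}$ and $e^{-2 \bar{\kappa} t}$; subtracting $m(t)^{2}$ then gives $\Var[\lambda_{t}] = M(t) - m(t)^{2}$ in the stated closed form. There is no deep obstacle here: the only genuinely laborious part is this last bookkeeping, where one must group the coefficients of $1, e^{-\bar{\kappa} t}, e^{-2 \bar{\kappa} t}$ coming from $M(t)$ against those of $m(t)^{2}$ so that the pure-diffusion contributions (carrying $\beta \theta$ and $\lambda_{0}$) and the jump-induced contributions (carrying $\varrho_{\lambda} \varsigma$ and $\varrho_{\lambda} \varsigma^{2}$) separate into the five summands displayed in \eqref{eq:mom_lam}. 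The one technical point worth flagging is the martingale justification used when taking expectations, which I would discharge by a standard localization argument together with the finite-moment property of the JCIR process guaranteed by its affine structure and the standing integrability hypotheses.
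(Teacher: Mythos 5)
Your method is sound but follows a genuinely different route from the paper. The paper starts from the closed-form characteristic function of the JCIR process (quoted from the affine-process literature as $f_{1}(t,y)\,e^{\lambda_{0}\psi(t,y)}\,f_{2}(t,y)$) and extracts the first two moments by differentiating twice in $y$ and evaluating at $y=0$, then forms $\Var[\lambda_{t}]=\mathbb{E}[\lambda_{t}^{2}]-\mathbb{E}[\lambda_{t}]^{2}$. You instead derive linear ODEs for $m(t)=\mathbb{E}[\lambda_{t}]$ and $M(t)=\mathbb{E}[\lambda_{t}^{2}]$ directly from the SDE via It\^{o}'s formula and solve them by an integrating factor. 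Your route is more elementary and self-contained: it needs only $\int_{\mathbb{R}\setminus\{0\}}\bar{x}\,\vartheta_{\bar{X}}(d\bar{x})=\varrho_{\lambda}\varsigma$, $\int_{\mathbb{R}\setminus\{0\}}\bar{x}^{2}\,\vartheta_{\bar{X}}(d\bar{x})=2\varrho_{\lambda}\varsigma^{2}$ and the true-martingale property of the stochastic integrals, not the explicit characteristic function; the paper's route, in exchange, mechanizes the computation of arbitrary higher moments. Your intermediate steps are all correct: the identity $\beta\tilde{\theta}-\psi_{2}\varrho_{\lambda}\varsigma=\beta\theta+\varrho_{\lambda}\varsigma$, the exact cancellation leaving $\bar{x}^{2}$ in the jump compensator of $\lambda_{t}^{2}$, and both ODEs.

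There is, however, one gap: you assert without carrying out the final integration that the result "produces the five summands displayed in \eqref{eq:mom_lam}." If you actually do the bookkeeping, writing $\bar{\kappa}=\beta+\kappa\sigma_{\lambda}$, you obtain
\begin{equation*}
\Var[\lambda_{t}]=\frac{\sigma_{\lambda}^{2}(\beta\theta+\varrho_{\lambda}\varsigma)}{2\bar{\kappa}^{2}}\bigl(1-e^{-\bar{\kappa}t}\bigr)^{2}+\frac{\sigma_{\lambda}^{2}\lambda_{0}}{\bar{\kappa}}\,e^{-\bar{\kappa}t}\bigl(1-e^{-\bar{\kappa}t}\bigr)+\frac{\varrho_{\lambda}\varsigma^{2}}{\bar{\kappa}}\bigl(1-e^{-2\bar{\kappa}t}\bigr),
\end{equation*}
which agrees with the paper's own intermediate computation (the terms $-b(1-e^{-at})\tilde{g}(t)$ and $-2\frac{\varrho_{\lambda}\varsigma}{a}\tilde{g}(t)$ in the paper's proof) but does not literally coincide with the displayed \eqref{eq:mom_lam}: there the first summand is missing the square on $\bigl(1-e^{-(\beta+\kappa\sigma_{\lambda})T}\bigr)$ and the fourth summand has denominator $\beta+\kappa\sigma_{\lambda}$ where $(\beta+\kappa\sigma_{\lambda})^{2}$ is required. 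These appear to be typographical slips in the statement rather than errors in your derivation, but as written your proof claims exact agreement with a formula it would not reproduce; you should either complete the computation and point out the discrepancy, or stop at the closed form you actually obtain.
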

\begin{proof}
Denote the imaginary unit by $\i.$ For notational simplicity, we define the following:
\begingroup
\allowdisplaybreaks
\begin{align*}
a &:= \beta + \kappa \sigma_{\lambda}, \\
b &:= \frac{\beta \theta}{a}, \\
g(t,y) &:= 1 + y\tilde{g}(t)\i,\\
\tilde{g}(t) &:= - \frac{\sigma_{\lambda}^{2}}{2a}\left(1-e^{-aT} \right), \\
\psi(t,y) &:= \frac{ye^{-at}\i}{g(t,y)}, \\
f_{1}(t,y) &:= g(t,y)^{-\frac{2ab}{\sigma_{\lambda}^{2}}}, \\
f_{2}(t,y) &:= \exp\left(\int_{0}^{t} \int_{0}^{\infty} \left(e^{\bar{x}\psi(s,y)}-1 \right)\ \vartheta_{\bar{X}}(d\bar{x})ds\right).
\end{align*}
\endgroup
According to \cite{rudiger}, the characteristic function of $\lambda_{t}$ reads
\begin{equation}
\mathbb{E}[e^{y \lambda_{t} \i}] = f_{1}(t,y)\ e^{\lambda_{0} \psi(t,y)}\ f_{2}(t,y).
\end{equation} 
So it holds that
\begin{align}
\frac{\partial}{\partial y} \mathbb{E}[e^{y \lambda_{t} \i}] &= f_{1}(t,y) e^{\lambda_{0} \psi(t,y)} f_{2}(t,y)\left[-\frac{2ab g'(t,y)}{\sigma_{\lambda}^{2}g(t,y)} + \frac{\lambda_{0}e^{-at}\i}{g(t,y)^{2}} + \int_{0}^{t} \int_{0}^{\infty} \frac{e^{\bar{x} \psi(s,y) - as}\bar{x} \i}{g(s,y)^{2}} \vartheta_{\bar{X}}(d\bar{x}) ds \right] \notag \\
&= \mathbb{E}[e^{y \lambda_{t} \i}] \cdot \left[\frac{b (1-e^{-at})\i}{g(t,y)} + \frac{\lambda_{0}e^{-at}\i}{g(t,y)^{2}} + \int_{0}^{t} \int_{0}^{\infty} \frac{e^{\bar{x} \psi(s,y) - as}\bar{x} \i}{g(s,y)^{2}} \vartheta_{\bar{X}}(d\bar{x}) ds \right] \label{eq:bracket},
\end{align}
and from this we easily deduce the first moment of $\lambda_{t}$:
\begin{align*}
\mathbb{E}[\lambda_{t}] &= (-\i)^{1} \cdot \frac{\partial}{\partial y} \mathbb{E}[e^{y \lambda_{t} \i}]\Bigg|_{y=0} \\
&=(-\i)^{1}\cdot 1\cdot \left[b (1-e^{-at})\i + \lambda_{0} e^{-at}\i + \i \frac{\varrho_{\lambda} \varsigma}{a} (1-e^{-at}) \right] \\
&= b (1-e^{-at}) + \lambda_{0} e^{-at} + \frac{\varrho_{\lambda} \varsigma}{a} (1-e^{-at}).
\end{align*}
Re-substitution of the abbreviations $a,b$ yields $\mathbb{E}[\lambda_{t}]$ given by \eqref{eq:mom_lam}. To calculate the second moment of $\lambda_{t}$, we need to determine the second derivative of the characteristic function w.r.t. $y$. The structure of the first derivative given by \eqref{eq:bracket} will turn out useful. For notational simplicity, we abbreviate the term in square brackets in \eqref{eq:bracket} by $[...]$ in the sequel. We find
\begin{align*}
\frac{\partial^{2}}{\partial y^{2}} \mathbb{E}[e^{y \lambda_{t} \i}] &= \frac{\partial}{\partial y}\left( \mathbb{E}[e^{y \lambda_{t} \i}] \cdot [...] \right) \\
&= \left(\frac{\partial}{\partial y} \mathbb{E}[e^{y \lambda_{t} \i}]\right) \cdot [...] + \mathbb{E}[e^{y \lambda_{t} \i}]\cdot \left(\frac{\partial}{\partial y}[...]\right) \\
&= \mathbb{E}[e^{y \lambda_{t} \i}] \cdot [...]^{2} + \mathbb{E}[e^{y \lambda_{t} \i}]\cdot \left(\frac{\partial}{\partial y}[...]\right) \\
&= \mathbb{E}[e^{y \lambda_{t} \i}] \left([...]^{2} + \frac{\partial}{\partial y}[...] \right).
\end{align*}
We need to calculate that

\begin{align*}
&\frac{\partial [...]}{\partial y} \\
&= \frac{b (1-e^{-at}) \tilde{g}(t)}{g(t,y)^{2}} + \frac{2\lambda_{0} e^{-at} \tilde{g}(t)}{g(t,y)^{3}}- \int_{0}^{t} \int_{0}^{\infty} \left(\bar{x}^{2} \frac{e^{\bar{x} \psi(s,y) - 2as}}{g(s,y)^{4}} - \frac{2\ e^{\bar{x} \psi(s,y) - as} \tilde{g}(s)}{g(s,y)^{3}} \bar{x}  \right) \vartheta_{\bar{X}}(d\bar{x}) ds,
\end{align*}
in order to argue that the second moment is given by

\begin{align*}
&\mathbb{E}[(\lambda_{t})^{2}] = (-\i)^{2} \cdot \frac{\partial^{2}}{\partial y^{2}} \mathbb{E}[e^{y \lambda_{t} \i}] \Bigg|_{y=0} \\
&= - \Big(-\mathbb{E}[\lambda_{t}]^{2} + b(1-e^{-at}) \tilde{g}(t)+2\lambda_{0} e^{-at} \tilde{g}(t) + \frac{\varsigma^{2} \varrho_{\lambda}}{a}(e^{-2at}-1)  + 2 \frac{\varrho_{\lambda} \varsigma}{a} \tilde{g}(t)\\
&\ \ - \frac{\sigma_{\lambda}^{2} \varrho_{\lambda} \varsigma}{2a^{2}}(e^{-2at}-1)\Big) \\
&= \mathbb{E}[\lambda_{t}]^{2} - b(1-e^{-at}) \tilde{g}(t)-2\lambda_{0} e^{-at} \tilde{g}(t) - \frac{\varsigma^{2} \varrho_{\lambda}}{a}(e^{-2at}-1)  - 2 \frac{\varrho_{\lambda} \varsigma}{a} \tilde{g}(t) + \frac{\sigma^{2} \varrho_{\lambda} \varsigma}{2a^{2}}(e^{-2at}-1).
\end{align*}
Finally, subtraction of the square of the first moment from the previous expression and re-substitution of the abbreviations $a,b,\tilde{g}(t)$ gives the formula for the variance of $\lambda_{t}$ provided by \eqref{eq:mom_lam}.
\end{proof}

\bibliographystyle{apalike}
\bibliography{bibliography}
\end{document}